\documentclass[journal,draftclsnofoot,onecolumn,12pt]{IEEEtran}
\newif\ifonecolumn
\onecolumntrue 
\usepackage{adjustbox}
\usepackage{xcolor}
\usepackage{cite}
\usepackage{amsmath,amssymb,amsfonts}
\usepackage{graphicx,color}
\usepackage{acro}
\usepackage{comment}

\DeclareAcronym{snr}{
  short = SNR,
  long = signal-to-noise ratio,}

  \DeclareAcronym{sar}{
  short = SAR,
  long = synthetic aperture radar,}

\DeclareAcronym{insar}{
  short = InSAR,
  long = interferometric synthetic aperture radar,}
\DeclareAcronym{minlp}{
	short = {MINLP},
	long = {mixed-integer non-linear program},
	long-plural-form = {mixed-integer non-linear programs}
}

\DeclareAcronym{uav}{
        short = {UAV},
        long = {unmanned aerial vehicle},
        long-plural-form = {unmanned aerial vehicles}
}

\DeclareAcronym{fdma}{
	short = {FDMA},
	long = {frequency-division multiple-access},
}

\DeclareAcronym{3d}{
        short = {3D},
        long = {three-dimensional},
}

\DeclareAcronym{2d}{
        short = {2D},
        long = {two-dimensional},
}

\DeclareAcronym{dem}{
        short = {DEM},
        long = {digital elevation model},
}
\DeclareAcronym{gs}{
        short = {GS},
        long = {ground station},
        long-plural-form = {ground stations}
}

\DeclareAcronym{los}{
        short = {LOS},
        long = {line-of-sight},
}

\DeclareAcronym{sca}{
        short = {SCA},
        long = {successive convex approximation},
}

\DeclareAcronym{ao}{
        short = {AO},
        long = {alternating optimization},
}

\DeclareAcronym{aoi}{
	short = {AoI},
	long = {area of interest},
}

\DeclareAcronym{nesz}{
        short = {NESZ},
        long = {noise equivalent sigma zero},
}

\DeclareAcronym{wrt}{
        short = {w.r.t.},
        long = {with respect to},
}
\DeclareAcronym{rhs}{
        short = {r.h.s.},
        long = {right-hand side},
}

\DeclareAcronym{lhs}{
        short = {l.h.s.},
        long = {left-hand side},
}
\DeclareAcronym{hoa}{
        short = {HoA},
        long = {height of ambiguity},
}

\usepackage{textcomp}
\usepackage{soul}
\usepackage{cite}
\usepackage{amsmath,amssymb,amsfonts}
\usepackage{bm}
\usepackage{float}
\usepackage{graphicx}
\usepackage{algorithm}
\usepackage{makecell}
\usepackage{amsthm}
\usepackage{algpseudocode}
\usepackage{tablefootnote}
\newcommand\scalemath[2]{\scalebox{#1}{\mbox{\ensuremath{\displaystyle #2}}}}

\DeclareMathOperator*{\argmax}{arg\,max}
\DeclareMathOperator*{\argmin}{arg\,min}

\DeclareMathOperator*{\erfc}{erfc}
\newtheorem{theorem}{Theorem}[]
\newtheorem{proposition}[]{Proposition}

\newtheorem{lem}[]{Lemma}

\theoremstyle{remark}
\newtheorem{remark}[]{Remark}
\def\BibTeX{{\rm B\kern-.05em{\sc i\kern-.025em b}\kern-.08em
    T\kern-.1667em\lower.7ex\hbox{E}\kern-.125emX}}
\AtBeginDocument{\definecolor{ojcolor}{cmyk}{0.93,0.59,0.15,0.02}}

\allowdisplaybreaks

\newcolumntype{?}{!{\vrule width 1pt}}
\usepackage[normalem]{ulem}
\newcommand{\stkout}[1]{\ifmmode\text{\sout{\ensuremath{#1}}}\else\sout{#1}\fi}
\begin{document}
\ifonecolumn 

\else 
\receiveddate{XX Month, XXXX}
\reviseddate{XX Month, XXXX}
\accepteddate{XX Month, XXXX}
\publisheddate{XX Month, XXXX}
\currentdate{XX Month, XXXX}
\doiinfo{OJCOMS.2022.1234567}
\fi
\ifonecolumn
\title{Robust Trajectory and Resource Optimization for Communication-assisted UAV SAR Sensing\\
	\thanks{Mohamed-Amine~Lahmeri, Walid R. Ghanem, and Robert Schober are with the  Institute for Digital Communications, Friedrich-Alexander University Erlangen-N\"urnberg (FAU), Germany (emails:\{amine.lahmeri, walid.ghanem, robert.schober\}@fau.de). Christina Bonfert is with the Institute of Microwave Engineering, Ulm, Germany (email: christina.knill@uni-ulm.de). This paper was presented in part at the IEEE Global Communications
		Conference, Rio De Janeiro, Brazil, Dec. 2022  \cite{14}. This work was supported in part by the Deutsche Forschungsgemeinschaft (DFG, German Research Foundation) GRK 2680 – Project-ID 437847244.}
}
\author{\IEEEauthorblockN{Mohamed-Amine~Lahmeri, Walid R. Ghanem, Christina Bonfert, and Robert Schober}\vspace{-15mm}}
\maketitle
\else
\title{ Robust Trajectory and Resource Optimization for Communication-assisted UAV SAR Sensing}
\author{
	Mohamed-Amine~Lahmeri\textsuperscript{1} (Graduate Student Member, IEEE),
	Walid R. Ghanem\textsuperscript{1} (Member, IEEE),
	Christina Bonfert\textsuperscript{2} (Graduate Student Member, IEEE),
	and Robert Schober\textsuperscript{1} (Fellow, IEEE)
}

\affil{Institute for Digital Communications, Friedrich-Alexander-Universit\"at Erlangen-N\"urnberg, Erlangen 91058, Germany}
\affil{Institute of Microwave Engineering, Ulm University, Ulm 89081, Germany
}

\corresp{CORRESPONDING AUTHOR: Mohamed-Amine~Lahmeri (e-mail: amine.lahmeri@fau.de).}
\authornote{This work was supported in part by the Deutsche Forschungsgemeinschaft (DFG, German Research Foundation) GRK 2680 – Project-ID 437847244.}
\authornote{This work was supported in part by the Deutsche Forschungsgemeinschaft (DFG, German Research Foundation) GRK 2680 – Project-ID 437847244.}
\markboth{ Robust Trajectory and Resource Optimization for Communication-assisted UAV SAR Sensing}{Lahmeri \textit{et al.}}
\fi
\begin{abstract}
In this paper, we investigate joint 3-dimensional (3D) trajectory planning and resource allocation for rotary-wing unmanned aerial vehicle (UAV) synthetic aperture radar (SAR) sensing. To support emerging real-time SAR applications and enable live mission control, we incorporate real-time communication with a ground station (GS).  The UAV's main mission is the mapping of large areas of interest (AoIs) using an onboard SAR system and transferring the unprocessed raw radar data to the ground in real time. We propose a robust trajectory and resource allocation design that takes into account random UAV trajectory deviations. To this end, we model the UAV trajectory deviations and study their effect on the radar coverage. Then, we formulate a robust non-convex mixed-integer non-linear program (MINLP) such that the UAV 3D trajectory and resources are jointly optimized for maximization of the radar ground coverage. A low-complexity sub-optimal solution for the formulated problem is presented. Furthermore, to assess the performance of the sub-optimal algorithm, we derive an upper bound on the optimal solution based on monotonic optimization theory. Simulation results show that the proposed sub-optimal algorithm achieves close-to-optimal performance and not only outperforms several benchmark schemes but is also robust with respect to UAV trajectory deviations.
\end{abstract}
\begin{IEEEkeywords}
Optimization, communication, sensing, unmanned aerial vehicles, synthetic aperture radar, successive convex approximations, monotonic optimization.
\end{IEEEkeywords}

\maketitle

\section{INTRODUCTION}
\IEEEPARstart{S}{ynthetic }aperture radar (\acs{sar}) is a moving radar system that uses the flight path of a moving platform to simulate a large aperture \cite{14,tutorial}\ifonecolumn \else\footnote{This paper was presented in part at the IEEE Global Communications Conference, Rio De Janeiro, Brazil, Dec. 2022 \cite{14}.}\fi. In classical radar systems,  the spatial resolution depends on the ratio of the wavelength of the radar signal to the physical length of the antenna. This means that spatial resolution can be improved by using large antenna arrays, which are however often not practical. \ac{sar} overcomes this hurdle by using the motion of the moving platform to form a synthetic aperture where observations in different time slots, corresponding to different positions of the trajectory, are jointly processed to create the final image of the target area \cite{8,FFBP}. \ac{sar} can provide high-resolution \ac{2d} and even \ac{3d} images \cite{glacier,tomography}. The benefits of \ac{sar} also include its resilience to harsh weather conditions and its ability to operate at day and night. In addition, \ac{sar} is  non-invasive, which is useful for applications where penetration or direct contact with the \ac{aoi} is not possible or not desirable, such as imaging of glaciers and wetlands \cite{glacier}. Thanks to these features, \ac{sar} sensing has been employed in a  variety of applications, such as environmental monitoring, agriculture, and surveillance.

\ac{sar} sensing has traditionally relied on airborne and spaceborne systems such as satellites and aircraft. However, these types of platforms lack flexibility and their revisit time for a given \ac{aoi} is often large \cite{tutorial}. In this context,  the recent growth of the drone industry has motivated the use of \ac{sar} onboard small and lightweight \acp{uav}, which is promising and challenging at the same time. On the one hand, \ac{sar} sensing benefits from the high flexibility and the easy deployment of small drones at relatively short distances from \acp{aoi}. Consequently, \ac{uav}-based \ac{sar} sensing, which is now possible at low cost, facilitates short revisit times and enables various new applications, such as the monitoring of highly dynamic environments. On the other hand, the drones' limited hardware capacity is confronted with highly computationally complex \ac{sar} image processing algorithms\cite{complexityBP}. This makes real-time \ac{sar}, which is an emerging \ac{sar} application \cite{realtime1,realtime2}, difficult to realize onboard the flying platform. {  In this regard, one potential solution is to transmit the \ac{sar} raw data to the ground for processing, which requires a reliable communication link between the \ac{uav} and the ground \cite{UAVWireless,amine1,open1}. In real-time \ac{sar}, the radar raw data received for each sensing pulse is processed with minimum delay such that the \ac{sar} image frame can be generated as a function of the received \ac{sar} raw data. This facilitates  immediate adjustments based on the processed \ac{sar} data.} Moreover, \ac{uav} instability presents another challenge facing \ac{uav}-\ac{sar} systems. In fact, lightweight drones are vulnerable to external perturbations such as wind and atmospheric turbulence \cite{jitter}. In this context, robust trajectory and resource allocation design \ac{wrt} \ac{uav} jittering were investigated for \ac{uav}-based communication systems \cite{jitter2,robust1}. Another challenge is the drones' limited battery capacity \cite{table2}. Optimizing the \ac{uav}'s energy consumption requires a careful trajectory design as energy consumption and  \ac{sar} sensing performance are tightly coupled.\par 
 { In general, trajectory design and resource allocation for \acp{uav} have been exhaustively studied in the literature \cite{pathUAV,response4} and different optimization techniques ranging from classical optimization \cite{response3} to learning-based methods \cite{open2,response5} have been used to tackle the problem. However, for the \ac{uav}-\ac{sar} use case, trajectory design and resource allocation have not been comprehensively addressed}. In fact, existing experimental works on \ac{uav}-\ac{sar} systems only investigate the practical deployment of the \ac{sar} sensors onboard small drones and report results of  experimental measurement campaigns with intuitive pre-planned flight paths with fixed velocity and altitude \cite{t2,t5,t11}.  Furthermore, existing experimental research works do not account for real-time \ac{sar} applications and are limited to offline image processing without considering a potential wireless communication link to the ground. Among the few works targeting the optimization of \ac{uav}-\ac{sar} systems, the authors in \cite{9} jointly optimized a focus measure for the \ac{sar} images and the estimation of the platform trajectory. However, while the estimation error of the actual trajectory was minimized based on a nominal trajectory, the trajectory itself was not optimized in \cite{9}, and communication with the ground was not considered. Furthermore, for a bistatic \ac{uav}-\ac{sar} system, the authors in \cite{99} investigated  trajectory optimization for a cellular-connected bistatic \ac{uav}-\ac{sar} system, where the \ac{aoi} was illuminated by a ground base station, and the energy consumption was minimized. Yet, the proposed solution is limited to bistatic systems with a stationary ground transmitter and a moving receiver, making it inapplicable for active monostatic systems, where both transmitter and receiver are mounted on the same moving platform.  The authors in \cite{10,11} investigated the use of a  geosynchronous \ac{sar} system, where the \ac{aoi}  was illuminated by a satellite and echoes were collected by a passive sensing \ac{uav}. For instance, in \cite{10}, a multi-objective optimization problem for minimization of the distance traveled by the \ac{uav}-\ac{sar} was formulated and solved. However, the trajectory design of passive bistatic \ac{uav}-\ac{sar} systems is fundamentally different from that of active monostatic \ac{uav}-\ac{sar} systems.\par  
In this paper, we investigate the joint \ac{3d} trajectory and resource allocation design for communication-assisted \ac{uav}-based \ac{sar} sensing. Different from other works, we focus on active monostatic \ac{sar} sensing, where the \ac{uav}-\ac{sar} enables real-time \ac{sar} applications by transmitting the collected radar data  instantaneously to the ground for processing. To this end, the \ac{uav} \ac{3d} trajectory and resources are jointly optimized to maximize the radar ground coverage under communication, radar, and energy constraints.  Our main contributions are summarized as follows: 
\begin{itemize} 
	\item The joint \ac{3d} trajectory and resource allocation algorithm design is formulated as a non-convex \ac{minlp}. Our problem formulation takes into account  the constraints imposed by both the radar and communication subsystems.
	\item The unavoidable \ac{uav} trajectory deviations are modeled based on their statistics and their effect on the \ac{uav}-\ac{sar} ground radar coverage is unveiled. 
	\item Exploiting successive convex approximation (SCA), we propose a low-complexity sub-optimal solution for the formulated problem, which is robust  \ac{wrt} \ac{uav} trajectory deviations. Furthermore, based  on monotonic optimization theory, we provide an upper bound on the optimal solution to assess the performance of the proposed sub-optimal algorithm. 
	\item Our simulation results show that the proposed sub-optimal algorithm achieves close-to-optimal performance. In addition, we demonstrate the superiority of the proposed solution in terms of radar coverage compared to several benchmark schemes and confirm its robustness against random trajectory deviations.
\end{itemize}
We note that this article extends the corresponding conference version \cite{14}. In \cite{14}, only a non-robust sub-optimal algorithm was provided, and random \ac{uav} trajectory deviations were not accounted for. Moreover, the  gap  between the sub-optimal scheme and the optimal solution was not assessed in \cite{14}. \par 
The remainder of this article is organized as follows. In Section \ref{Sec:SystemModel}, we present the system model for the considered \ac{uav}-\ac{sar} system. In Section \ref{Section:FlightDeviation}, we model random \ac{uav} trajectory deviations and propose a concept for robust trajectory design. In Section  \ref{Sec:ResourceAllocation}, we first formulate an optimization problem for the joint \ac{3d} trajectory and resource allocation design for \ac{uav}-\ac{sar} systems. Then, a robust sub-optimal solution and a corresponding upper bound are provided. In Section \ref{Sec:SimulationResults}, we evaluate the performance of the proposed scheme using the derived upper bound and other benchmark schemes. Finally, Section \ref{Sec:Conclusion} concludes the paper { and provides some directions for future research}.\par
{\em Notations}: In this paper, lower-case letters $x$ refer to scalar numbers, boldface lower-case letters $\mathbf{x}$ denote vectors, and boldface upper-case letters $\mathbf{X}$ refer to matrices.  $\{a, ..., b\}$ denotes the set of all integers between $a$ and $b$. $|\cdot|$ denotes the absolute value operator, whereas $||\cdot||_2$ refers to the $L_2$-norm. Operators $[x]^+$ and $\lceil x \rceil$ correspond to $\max(0,x)$ and the ceiling function of $x$, respectively. For given sets $\mathcal{X}_1$ and $\mathcal{X}_2$, $\mathcal{X}_1 \setminus \mathcal{X}_2$  is the set that comprises all elements of $\mathcal{X}_1$ that are not in $\mathcal{X}_2$.  $\mathbb{N}$ represents the set of natural numbers. For arbitrary integer $k$, $\mathbb{N}_k$ is the set $\{1,..,k\}$. $\mathbb{R}^{N \times 1 }$ represents the set of $N \times 1$ vectors with real-valued entries, $\mathbb{R}^{N\times1 }_+$ denotes the set of $N \times 1$ vectors with non-negative real entries, and  $\mathbb{R}^{N \times M }$ denotes the set of real $N\times M $ matrices. For $a, b \in \mathbb{N}$, $a$ mod $b$ is the remainder of the division of $a$ by $b$.  For vector $\mathbf{x}$, vector $\mathbf{x}^T$ stands for its transpose. For vectors $\mathbf{x} = [x(1), ...,x(N)]^T \in \mathbb{R}^{N \times 1} $ and $\mathbf{y} = [y(1), ...,y(N)]^T \in \mathbb{R}^{N \times 1}$,  $\mathbf{x}\leq \mathbf{y}$ and $\mathbf{x} <\mathbf{y}$ define element-wise order relations on vectors $\mathbf{x}$ and $\mathbf{y}$, i.e., $x(n) \leq y(n)$  and $x(n) < y(n)$, $  \forall n$, respectively.   For $ \mathbf{A} \in \mathbb{R}^{N\times N}$, $\mathbf{A}\succeq \mathbf{0} $ indicates that $\mathbf{A}$ is a positive semi-definite matrix.  For a given random variable $X \in \mathbb{R}$, $F_X $ denotes its cumulative distribution function (CDF) and $X \sim \mathcal{N}(\mu,\sigma)$ means that $X$ is Gaussian distributed with mean $\mu$ and standard deviation $\sigma$. $\mathbb{P}(\cdot)$ denotes the probability operator. For a real-valued function $f(x)$, $f'(x)$ denotes the derivative of $f$. $\rm erf(\cdot)$ is the Gaussian error function. {In Table \ref{Tab:System0}, we introduce the definition of most important variables used in this paper.}
\ifonecolumn
\begin{table}[H]
	\centering
	\caption{Definition of most important variables.}
	\begin{adjustbox}{width=\columnwidth,center}
		\begin{tabular}{?c|c?c|c?}
			\Xhline{3\arrayrulewidth}
			Variable              & 	Definition  & Variable & Definition   \\ \Xhline{3\arrayrulewidth}
			$L$    & Length of the \ac{aoi} & 	$\mathrm{SNR}_n(\mathbf{P}_{\mathrm{sar}},\mathbf{u})$  &  Instantaneous radar \ac{snr}  \\ \hline
			$N$  & Number of azimuth sweeps  &  $R_{\rm sl}$ &Synchronization and localization data size \\ \hline
			$M$         & Number of time slots per azimuth scan &  $R_{\mathrm{min},n}(\mathbf{u})$  & Instantaneous \ac{sar} data rate   \\ \hline
			$z_{\rm max}$ and  	$z_{\rm min}$  & \ac{uav} maximum and minimum altitudes  & $R_n(\mathbf{P}_{\mathrm{com}},\mathbf{u})$  & Instantaneous \ac{uav} data rate\\ \hline 
			$\mathbf{u}(n)$  & \ac{uav} \ac{3d} location in time slot $n$ &$B_c$         & Communication bandwidth      \\ \hline   
			$q(n)$ & Remaining battery capacity in time slot $n$  &$\gamma$ &Reference channel power gain\\  \hline  
			$P_{\rm sar}(n)$  &  Instantaneous \ac{sar} transmit power    & $\mathbf{g}$  & \Ac{gs} \ac{3d} location   \\ \hline 	
			$P_{\rm com}(n)$ &  Instantaneous communication transmit power  &  $d_n(\mathbf{u})$  & \ac{uav}-\ac{gs} distance in time slot $n$  \\ \hline  
			$P_{\rm sar}^{\rm max}$ &  Maximum \ac{sar}  transmit power &  $v$    & \ac{uav} velocity   \\\hline   
			$P_{\rm com}^{\rm max}$&  Maximum communication transmit power & $q_{\rm start}$      & Initial \ac{uav} battery capacity   \\ \hline
			$\mathbf{P}_{\rm sar}$ &  \ac{sar}  transmit power vector & $P_{\rm prop}$  & \ac{uav}  propulsion power    \\\hline   
			$\mathbf{P}_{\rm com}$&  Communication transmit power vector&$E_{n}(\mathbf{P}_{\mathrm{sar}}, \mathbf{P}_{\mathrm{com}})$ & Total energy consumed during time slot $n$ \\ \hline  
			$\tau_p$ and $\mathrm{PRF}$        & Pulse duration and repetition frequency &$S_n(\mathbf{u})$ & \ac{uav} \ac{sar} swath width in time slot $n$\\  \hline 
			$\theta_d$ and $\theta_{\rm 3db}$  & 	Radar depression angle and beamwidth&$	C(\mathbf{u})$ & Total \ac{sar} ground coverage \\   \hline 
			$B_r$         &  Radar bandwidth& $r$  & Coverage reliability level     \\  \hline 	
			$f$  & Radar center frequency &$\boldsymbol{\delta}_{\mathbf{p}_N}^{r}$ & Near-range compensation vector\\\hline
			 $\mathrm{SNR_{min}}$& Minimum radar \ac{snr} & $\boldsymbol{\delta}_{\mathbf{p}_F}^{r}$ & Far-range compensation vector \\\hline\Xhline{3\arrayrulewidth}
		\end{tabular}
	\end{adjustbox}	
	\label{Tab:System0}
\end{table}
\else
\begin{table}[H]
	\centering
	\caption{Definition of most important variables.}
	\begin{adjustbox}{width=\columnwidth,center}
		\begin{tabular}{?c|c?}
\Xhline{3\arrayrulewidth}
Variable              & 	Definition \\ \Xhline{3\arrayrulewidth}
$L$    & Length of the \ac{aoi}  \\ \hline
$N$  & Number of azimuth sweeps  \\ \hline 
$M$         & Number of time slots per azimuth scan \\ \hline
$z_{\rm max}$ and  	$z_{\rm min}$ &\ac{uav} maximum and minimum altitudes \\ \hline
$\mathbf{u}(n)$  & \ac{uav} \ac{3d} location in time slot $n$ \\ \hline
$q(n)$ & Remaining battery capacity in time slot $n$  \\ \hline
$P_{\rm sar}(n)$  &  Instantaneous \ac{sar} transmit power    \\ \hline
$P_{\rm com}(n)$ &  Instantaneous communication transmit power  \\ \hline
$P_{\rm sar}^{\rm max}$ &  Maximum \ac{sar}  transmit power  \\ \hline
$P_{\rm com}^{\rm max}$&  Maximum communication transmit power  \\ \hline
$\mathbf{P}_{\rm sar}$ &  \ac{sar}  transmit power vector \\ \hline
$\mathbf{P}_{\rm com}^{\rm max}$&  Communication transmit power vector\\ \hline
$\mathrm{SNR}_n(\mathbf{P}_{\mathrm{sar}},\mathbf{u})$  &  Instantaneous radar \ac{snr}  \\ \hline
$\tau_p$ and $\mathrm{PRF}$        & Pulse duration and repetition frequency  \\ \hline
$\theta_d$ and $\theta_{\rm 3db}$  & 	Radar depression angle and beamwidth\\ \hline
$B_r$         &   Radar bandwidth\\ \hline
$f$  & Radar center frequency \\ \hline
 $\mathrm{SNR_{min}}$& Minimum radar \ac{snr} \\ \hline
$R_{\rm sl}$ &Synchronization and localization data size \\ \hline
$R_{\mathrm{min},n}(\mathbf{u})$  & Instantaneous \ac{sar} data rate   \\ \hline
$R_n(\mathbf{P}_{\mathrm{com}},\mathbf{u})$  & Instantaneous \ac{uav} data rate\\ \hline 
$B_c$         &   Communication bandwidth      \\ \hline   
$\gamma$ &Reference channel power gain\\  \hline  
$\mathbf{g}$  & \Ac{gs} \ac{3d} location   \\ \hline 	
$d_n(\mathbf{u})$  & \ac{uav}-\ac{gs} distance in time slot $n$  \\\hline   
$v$    & \ac{uav} velocity   \\ \hline
$q_{\rm start}$      & Initial \ac{uav} battery capacity \\\hline   
$P_{\rm prop}$  & \ac{uav}  propulsion power    \\ \hline  
$E_{n}(\mathbf{P}_{\mathrm{sar}}, \mathbf{P}_{\mathrm{com}})$ & Total energy consumed during time slot $n$   \\  \hline 
$S_n(\mathbf{u})$ & \ac{uav} \ac{sar} swath width in time slot $n$ \\ \hline
$r$  & Coverage reliability level   \\   \hline 
$\boldsymbol{\delta}_{\mathbf{p}_N}^{r}$ & Near-range compensation vector \\  \hline 	
$\boldsymbol{\delta}_{\mathbf{p}_F}^{r}$ & Far-range compensation vector \\\hline
$	C(\mathbf{u})$ & Total \ac{sar} ground coverage   \\ \hline  \Xhline{3\arrayrulewidth}
		\end{tabular}
	\end{adjustbox}	
	\label{Tab:System0}
\end{table}

\fi

\section{System Model} \label{Sec:SystemModel}

\ifonecolumn
\begin{figure}[]
	\centering
	\includegraphics[width=4in]{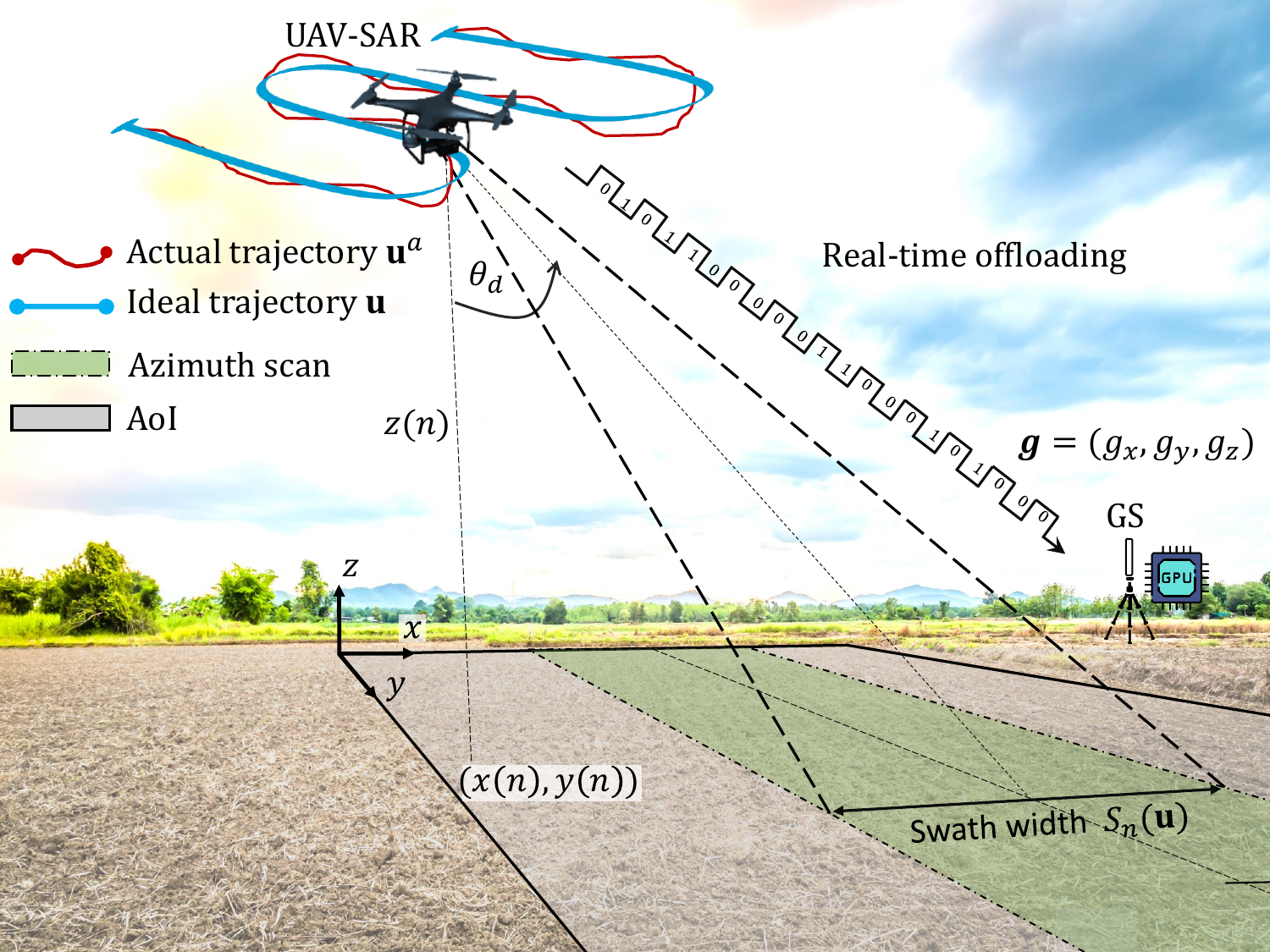}
	\caption{System model for \ac{uav}-\ac{sar} system with real-time data offloading to a GS.}
	\label{fig:SystemModel}
\end{figure} 
\else
\begin{figure}[]
	\centering
	\includegraphics[width=3.5in]{figures/SystemModel.pdf}
	\caption{System model for \ac{uav}-\ac{sar} system with real-time data offloading to a \acs{gs}.}
	\label{fig:SystemModel} \vspace{-6mm}
\end{figure} \fi
We consider a rotary-wing \ac{uav} equipped with a side-looking radar system designed to perform \ac{sar}. The main mission of the \ac{uav} is to scan a certain \ac{aoi} on the ground. To enable live mission control of the \ac{uav}-\ac{sar} system, real-time \ac{sar} processing is facilitated by transmitting the sensing data to a \ac{gs} in real-time.
\subsection{Trajectory Design}
\label{Sec:TrajectoryDesign}
As a stripmap \ac{sar} mode \cite{tutorial} is employed, we assume that the \ac{aoi} is rectangular with a given length $L$. Without loss of generality, we place the origin of the adopted coordinate system at the corner of the \ac{aoi}, where the \ac{uav} mission starts, see Figure \ref{fig:SystemModel}.  To enable real-time processing of the \ac{sar} images, we assume that frequency-domain algorithms are employed due to their computational efficiency\cite{FFBP, realtime1,realtime2}. { These algorithms impose the use of a linear \ac{uav} trajectory with fixed velocity  $v >0 $ and zero acceleration\cite{realtime3}.} The $x$-axis and $y$-axis define the range and azimuth direction, respectively, see Figure \ref{fig:SystemModel}. 
We discretize time such that the length of the area scanned in $y$-direction in time slot $n$ of duration $\delta_t$, denoted by $\Delta_s$, is given by:
\begin{equation}
	\Delta_{s} =\delta_t v .
\end{equation}
The \ac{uav} trajectory consists of $N$ linear azimuth scans, where each scan corresponds to a traversal in the $y$-direction from one edge of the \ac{aoi} (i.e., $y=0$) to the other  (i.e., $y=L$) and vice versa, see Figure \ref{fig:SystemModel}.   Here, the use of a linear trajectory is imposed by the \ac{sar} stripmap mode \cite{tutorial}. In each azimuth scan, $M$ uniform pulses are transmitted from the radar system\footnote{Here, $\Delta_s$ is set such that $M\Delta_s=L$.}, making the total number of time slots equal to $NM$. The drone location in time slot $n  \in \mathbb{N}_{NM}$ is defined by $\mathbf{u}(n)=[x(n),y(n),z(n)]^T$, where $x(n)$ and $y(n)$ represent the position in the $xy$-plane, while $z(n)$ denotes the drone altitude. The \ac{uav} trajectory $\mathbf{u}=[\mathbf{u}(1),...,\mathbf{u}(NM)]$ is designed such that successive azimuth scans are  ideally adjacent with no gaps in the coverage. At the end of each azimuth scan,  the \ac{uav} makes its turns and moves to the appropriate $x$-position for the next scan as defined later in (\ref{eq:turn1}) and (\ref{eq:turn2}).  The \ac{uav} altitude can be adjusted from scan to scan, but must be fixed during a given azimuth scan to have a fixed width of the on-ground footprint of the sensing beam along the range direction{, i.e., fixed swath width}. Consequently, the radar power {is also adjusted from one azimuth scan to the next, such that the minimum required sensing \ac{snr}  is achieved, but is fixed during a given scan }. To account for these limitations of practical radar systems,  we define two sets of time slots, denoted by $\mathcal{A}$ and $\mathcal{A}^c$, as follows:\begin{align}
	&\mathcal{A}=\{ k\in \mathbb{N}_{NM} |  (k-1) \;\mathrm{mod} \; M \neq 0 \},\\
	\label{eq:setA}
	&\mathcal{A}^c=\mathbb{N}_{NM}\setminus \mathcal{A}=\{1+(k-1)M, k \in \mathbb{N}_N\}.
\end{align}

Sets $\mathcal{A}$ and $\mathcal{A}^c$ will be used to (i) force a back-and-forth linear motion and (ii) set all of the relevant radar parameters, such as the radar power and altitude, constant during a given scan \cite{13,t5}. To this end, set $\mathcal{A}^c$ contains the $N$ indices of the first time slots at the beginning of each azimuth scan, where the radar parameters can be adjusted, whereas set $\mathcal{A}$ contains the indices of all other time slots, where the parameters are fixed.
The following constraints on the drone trajectory are imposed: 
\ifonecolumn
\begin{align}
	&\mathrm{C1}:x(1)=-c_1\;z(1),\label{eq:turn1}\\
	&\mathrm{C2}: x(n)=x(n-1)+c_2\;z(n-1)-c_1\;z(n), \forall n \in \mathcal{A}^c\setminus\{1\},\label{eq:turn2}\\
	&\mathrm{C3}: x(n+1)=x(n), \forall n \in \mathcal{A},\\
	&\mathrm{C4}: z(n+1)=z(n), \forall n \in \mathcal{A},\label{eq:turn4}
\end{align}
\else
\begin{align}
	&\mathrm{C1}: x(1)=-c_1\;z(1),\label{eq:turn1}\\
	&\mathrm{C2}: x(n)=x(n-1)+c_2\;z(n-1)\notag\\ &-c_1\;z(n), \forall n \in \mathcal{A}^c\setminus\{1\},\label{eq:turn2}\\
	&\mathrm{C3}: x(n+1)=x(n), \forall n \in \mathcal{A},\\
	&\mathrm{C4}: z(n+1)=z(n), \forall n \in \mathcal{A}, \label{eq:turn4}
\end{align}
\fi

\ifonecolumn
\begin{figure}[]
	\centering
	\includegraphics[width=3.5in]{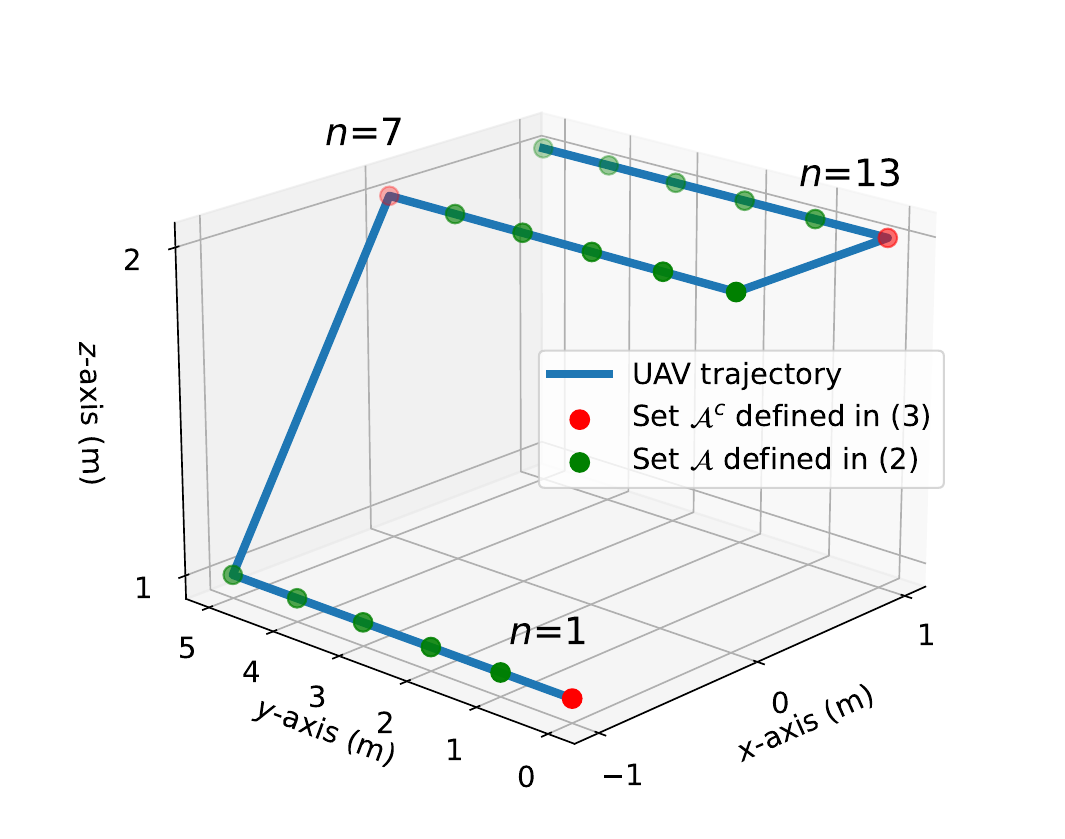}
	\caption{Illustration of a Boustrophedon-shaped trajectory as imposed by constraints $\mathrm{C1-C4}$ for $N=3$, $M=6$, and a set of altitudes $\{1,2,2\}$ m.}
	\label{fig:C1_4} 
\end{figure}
\else
\begin{figure}[]
	\centering
	\includegraphics[width=3in]{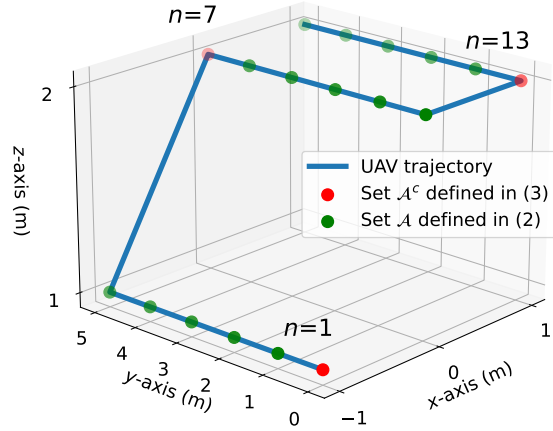}
	\caption{Illustration of a Boustrophedon-shaped trajectory as imposed by constraints $\mathrm{C1-C4}$ for $N=3$, $M=6$, and a set of altitudes $\{1,2,2\}$ m.}
	\label{fig:C1_4} 
\end{figure}
\fi
where $c_1=\tan(\theta_1)$, $c_2=\tan(\theta_2)$, $\theta_1=\theta_d-\frac{\Theta_{3\mathrm{dB}}}{2}$,  $\theta_2=\theta_d+\frac{\Theta_{3\mathrm{dB}}}{2}$,  $\Theta_{3\mathrm{dB}}$ is the radar antenna 3dB  beam width, and $\theta_d$ is the radar depression angle, see Figure \ref{fig:SystemModel}.
Constraint $\mathrm{C2}$ ensures that two successive radar azimuth scans of the ground are adjacent without any gap between them. During each scan, constraint $\mathrm{C3}$ imposes a linear motion, while constraint $\mathrm{C4}$ ensures a fixed altitude during a scan.  { In a nutshell, constraints $\mathrm{C1-C4}$ give rise to a Boustrophedon-shaped trajectory, as illustrated in Figure \ref{fig:C1_4}.} Forcing a linear motion for stripmap \ac{uav}-\ac{sar} does not mean that the trajectory is completely fixed. In fact, in each turn, the drone's $z$-position (altitude) and $x$-position can be adjusted. Only the $y$-position is fixed such that back-and-forth stripmap mode motion is guaranteed. The $y$-position is defined as follows: 
\begin{equation}
	y(1)=0,  \; y(n)=y(n-1) + c(n-1) \Delta_s, \forall n \in \mathbb{N}_{NM}\setminus \{1\},
\end{equation}where vector $\mathbf{c} =[c(1),..., c(NM) ]^T \in \mathbb{R}^{NM\times 1}$ is defined as follows: 
\begin{equation}
	\mathbf{c}= [\overbrace{\underbrace{1,...,1}_\text{$M$ times},\underbrace{-1,...,-1,}_\text{$M$ times}...]^T}^\text{$N\times M$ times}.
\end{equation}
\subsection{Radar Imaging}
Since we operate at relatively low altitudes, the maximal imageable area on the ground is limited by the antenna beam width. In time slot $n$, the width in $x$-direction of the on-ground footprint of the sensing beam, also referred to as \ac{sar} swath width, is denoted by $S_n$ and given as follows: 
\begin{equation}
	S_n(\mathbf{u})= \left(c_2-c_1\right)z(n).
\end{equation}
Following  the trajectory $\mathbf{u}$, {the total ground area covered by the drone is the sum of all infinitesimal small swath widths, and therefore}  can be  approximated\footnote{The approximation is due to the elliptical shape of the beam footprint on the ground and becomes negligible for large $M$. }  as follows: 
\begin{equation} \label{eq:covergeMetric}
	C(\mathbf{u})=\sum_{n=1}^{NM} \Delta_s S_n(\mathbf{u}).
\end{equation}
The data rate produced by the \ac{sar} sensing data in time slot $n$ can be expressed as \cite{b7}:
\begin{equation}
	R_{\mathrm{min},n}(\mathbf{u})= \;B_r \left( \frac{2\;z(n)\;\Omega}{c}+\tau_p \right) \mathrm{PRF}, \label{equation:Rmin}
\end{equation}
where $B_r$ is the radar bandwidth, $\tau_p$ is the radar pulse duration, $c$ is the speed of light, $\rm PRF$ is the radar pulse repetition frequency, and $\Omega$ is given by: 
\begin{equation}
	\Omega=\frac{\cos(\theta_1)-\cos(\theta_2)}{\cos(\theta_1)\cos(\theta_2)}.
\end{equation}
Furthermore, the achieved radar \ac{snr} in time slot $n$ is given as follows \cite{b8}: 
\begin{equation}
	\mathrm{SNR}_n(\mathbf{P}_{\mathrm{sar}},\mathbf{u})=\frac{P_{\mathrm{sar}}(n)\; G_t\; G_r \;\lambda^3 \;\sigma_0 \;c \;\tau_p \mathrm{PRF}\;\sin^2(\theta_d)}{(4\pi)^4 \;z^3(n) \;{ k_B} \;T_o \;NF \;\;B_r \;L_{\mathrm{tot}} \;v }, \label{equation:SNR}
\end{equation}
where $\mathbf{P}_{\mathrm{sar}}=[P_{\mathrm{sar}}(1),...,P_{\mathrm{sar}}(NM)]^T$, $P_{\mathrm{sar}}(n)$ is the radar transmit power in time slot $n$,  $G_t$ and $G_r$ are the radar antenna gains for transmission and reception, respectively,  $\lambda$ is the radar wavelength, $\sigma_o$ is the backscattering coefficient, ${ k_B}$ is Boltzmann's constant, $T_o$ is the equivalent noise temperature, $NF$ is the system noise figure, and $L_{\mathrm{tot}}$
represents the total radar losses. 
\subsection{\ac{uav} Data Link}
We denote the location of the   \ac{gs} by $\mathbf{g}=[g_x,g_y,g_z]^T$. In time slot $n$, the distance between the drone and the \ac{gs} is given by:
\begin{align}
	d_n(\mathbf{u})=\sqrt{(x(n)-g_x)^2+(y(n)-g_y)^2+(z(n)-g_z)^2}.
\end{align}
We assume a \ac{los} communication link to the \ac{gs}. Thus, based on the free space path loss model\footnote{ The offline optimization framework proposed in this paper can be extended to other communication channels with small-scale and large-scale fading components by an adequate design of the link margin that accounts for the channel power fluctuations \cite{book}.}, the instantaneous throughput of the backhaul link between the \ac{uav} and the \ac{gs} is given by: 
\begin{equation}\label{eq:data_rate}
	R_n(\mathbf{P}_{\mathrm{com}},\mathbf{u})= B_{c} \; \log_2\left(1+\frac{P_{\mathrm{com}}(n) \;\gamma}{		d^2_n(\mathbf{u})}\right),
\end{equation}
where  $\mathbf{P}_{\mathrm{com}} =[P_{\mathrm{com}}(1),...,P_{\mathrm{com}}(NM)]^T$, $P_{\mathrm{com}}(n)$ is the power allocated for  communication in time slot $n$,  $B_{c}$ is the communication bandwidth, and $\gamma$ is the channel power gain at a reference distance of 1 m divided by the noise variance. To guarantee successful real-time backhauling of the sensing data collected by the drone to the \ac{gs}, the following inequality must be satisfied at all times: 
\begin{gather}
	R_n(\mathbf{P}_{\mathrm{com}},\mathbf{u}) \geq R_{\mathrm{min},n}(\mathbf{u} )+R_{\mathrm{sl}}, \forall n \in \mathbb{N}_{NM}, \label{equation:DataRate}
\end{gather}
where $R_{\mathrm{sl}}$ is the fixed data rate needed to transmit the synchronization and localization data  necessary for successful radar image processing. { In this work, we assume that the communication system operates in a licensed frequency band that is different from the one used for sensing, thus, avoiding interference that could lead to transmission failures. }Lastly, the total energy consumed by the \ac{uav} in time slot $n$ is given  by: 
\begin{equation} 
	E_{n}(\mathbf{P}_{\mathrm{com}},\mathbf{P}_{\mathrm{sar}})= \delta_t \left( \mathbf{P}_{\mathrm{com}}(n)+\mathbf{P}_{\mathrm{sar}}(n)+P_{\mathrm{prop}}\right),
\end{equation}
 {where $P_{\mathrm{prop}}$ is the propulsion power required for maintaining the operation of the drone and is given by \cite{response1}:
\ifonecolumn
 \begin{equation}\label{eq:Pprop}
P_{\mathrm{prop}}=  P_0 \left(1+\frac{3v^2}{U^2_{\mathrm{tip}}}\right)+P_I\left( \sqrt{1+\frac{v^4}{4v_0^4}}-\frac{v^2}{2v_0^2}\right)^{\frac{1}{2}} + \frac{1}{2}d_0 \rho s A v^3.
 	\end{equation}
 	\else
 	 \begin{align}\label{eq:Pprop}
 		P_{\mathrm{prop}}&=  P_0 \left(1+\frac{3v^2}{U^2_{\mathrm{tip}}}\right)+P_I\left( \sqrt{1+\frac{v^4}{4v_0^4}}-\frac{v^2}{2v_0^2}\right)^{\frac{1}{2}} \\ \notag&+ \frac{1}{2}d_0 \rho s A v^3.
 	\end{align}
 	\fi
Here, $P_0$ and  $P_I$ are two constants defined in \cite{response1}, $U_{\mathrm{tip}}$ is the tip speed of the rotor blade,  $d_0$ is the fuselage drag ratio, $\rho$ is the air density, $s$ is the rotor solidity, $A$ is the rotor disc area, $v_0=\sqrt{\frac{ W_u}{2 \rho A}}$ is the mean rotor induced velocity in hover, and $ W_u$ is the \ac{uav} weight in Newton. Note that, in this work, the propulsion power is constant since the drone's velocity, $v$, is fixed to allow for efficient signal processing of the \ac{sar} images.}
\section{Robust  \ac{uav} Trajectory Design Under Random Position Deviations} \label{Section:FlightDeviation}
In practice, it is difficult for lightweight drones to strictly follow a designed trajectory due to perturbations caused by weather conditions, mechanical vibrations, and localization errors \cite{jitter}. 
These errors may result in areas not covered by \ac{sar}. In this section, we first develop a model for \ac{uav} trajectory deviations and quantify their impact on the footprint of the sensing beam. Then, we propose a statistics-based robust trajectory design that suitably widens and adjusts the sensing beam footprint. This creates overlapping scans and allows covering the \ac{aoi} without gaps. 
\subsection{Trajectory Deviation Model} \label{Sec:FlightDeviationModel}
\begin{figure*}[] 
	\centering
	\begin{tabular}{cccc}
		\includegraphics[width=0.3\textwidth]{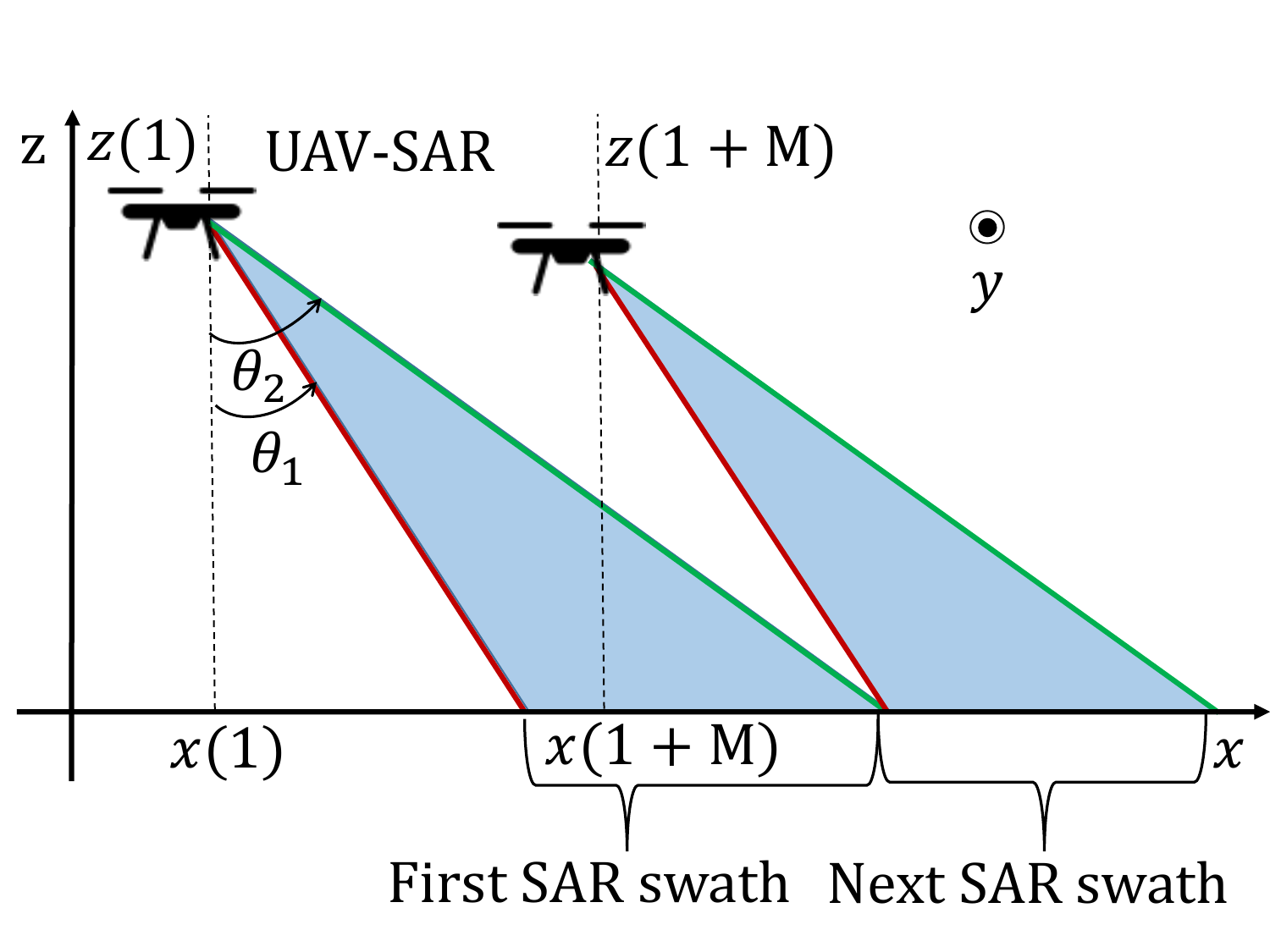} &
		\includegraphics[width=0.3\textwidth]{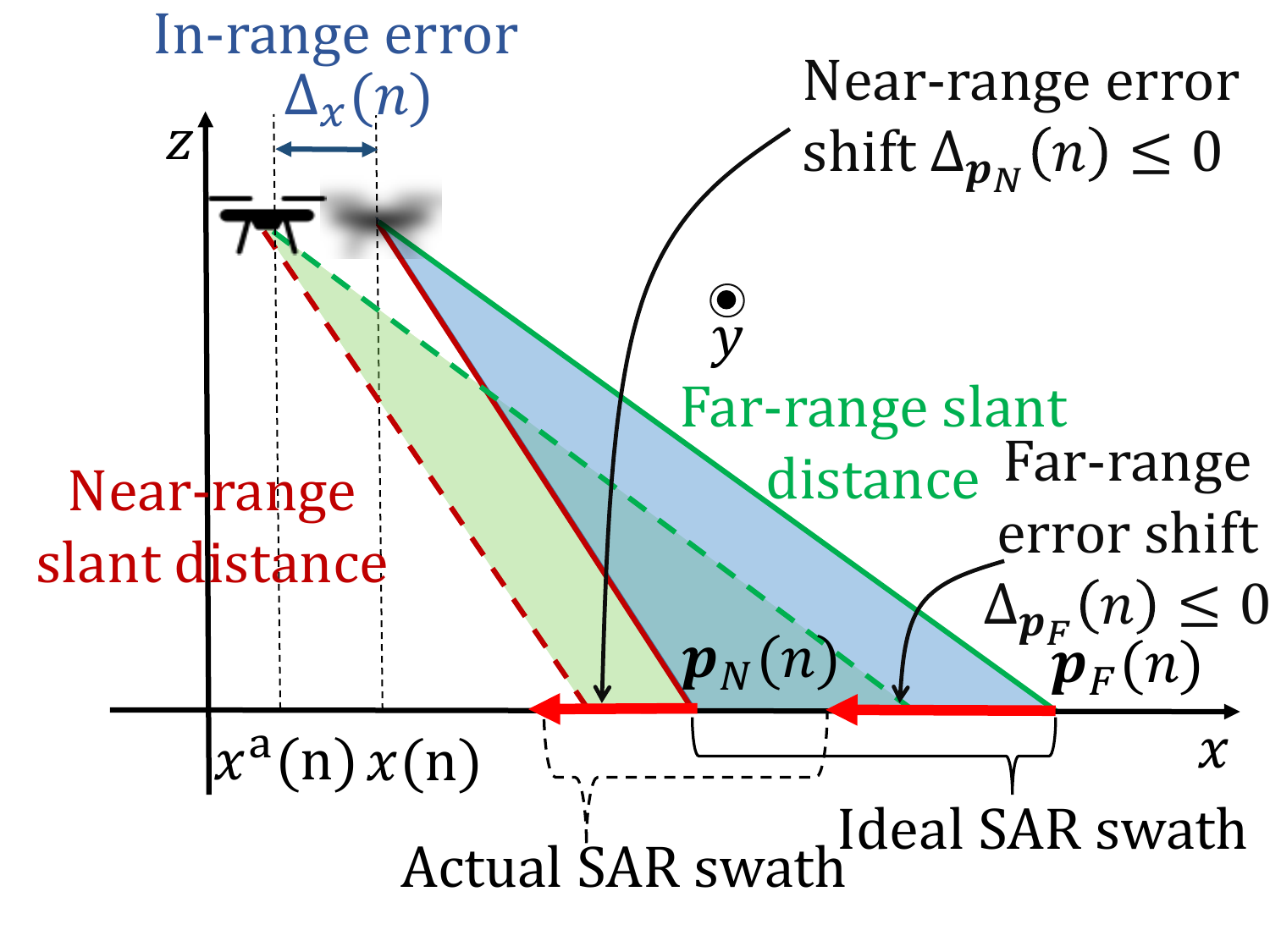} &
		\includegraphics[width=0.3\textwidth]{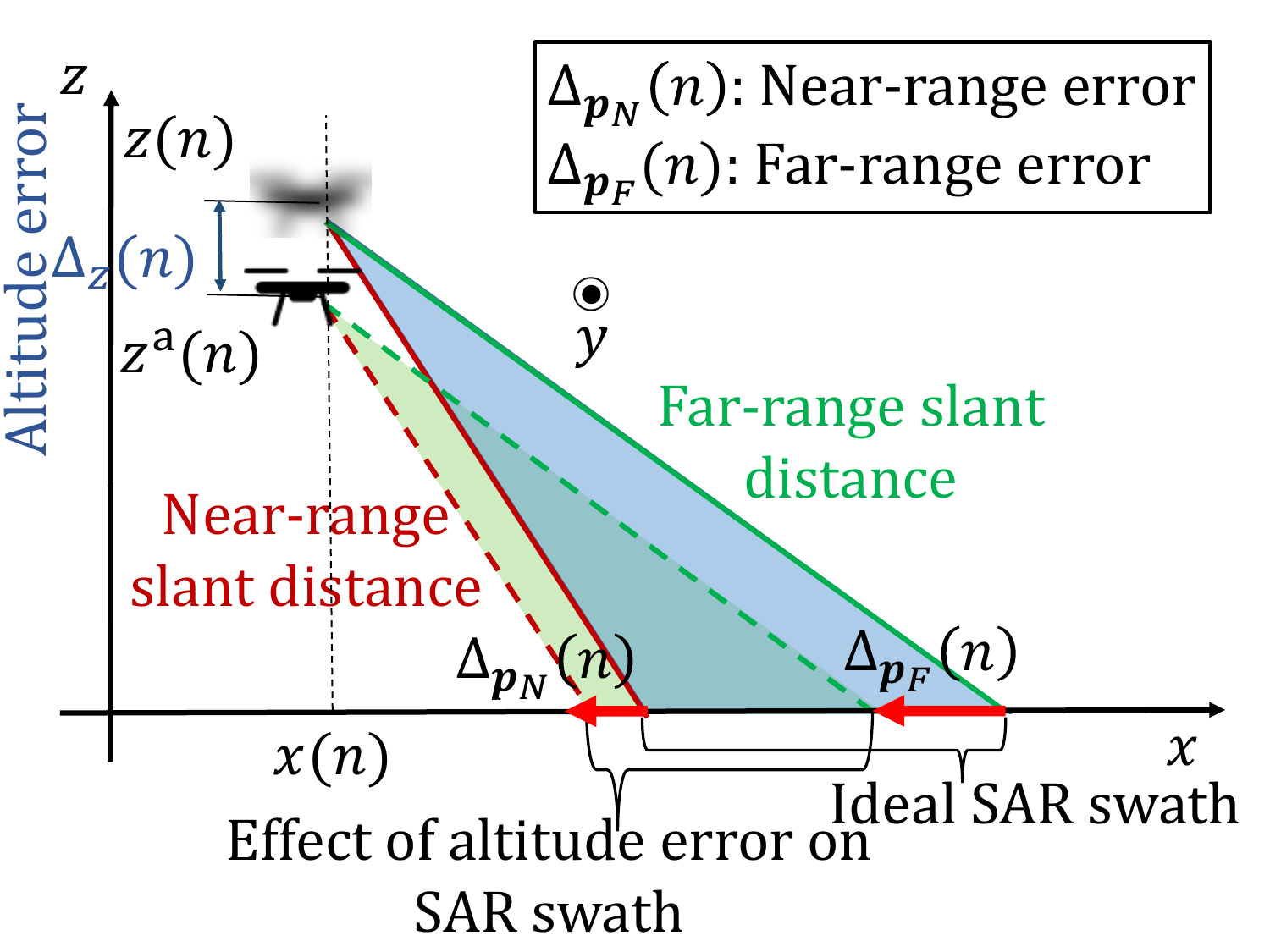} \\
		\textbf{(a)}  & \textbf{(b)} & \textbf{(c)}  \\[6pt]
	\end{tabular}
	\begin{tabular}{cccc}
		\includegraphics[width=0.3\textwidth]{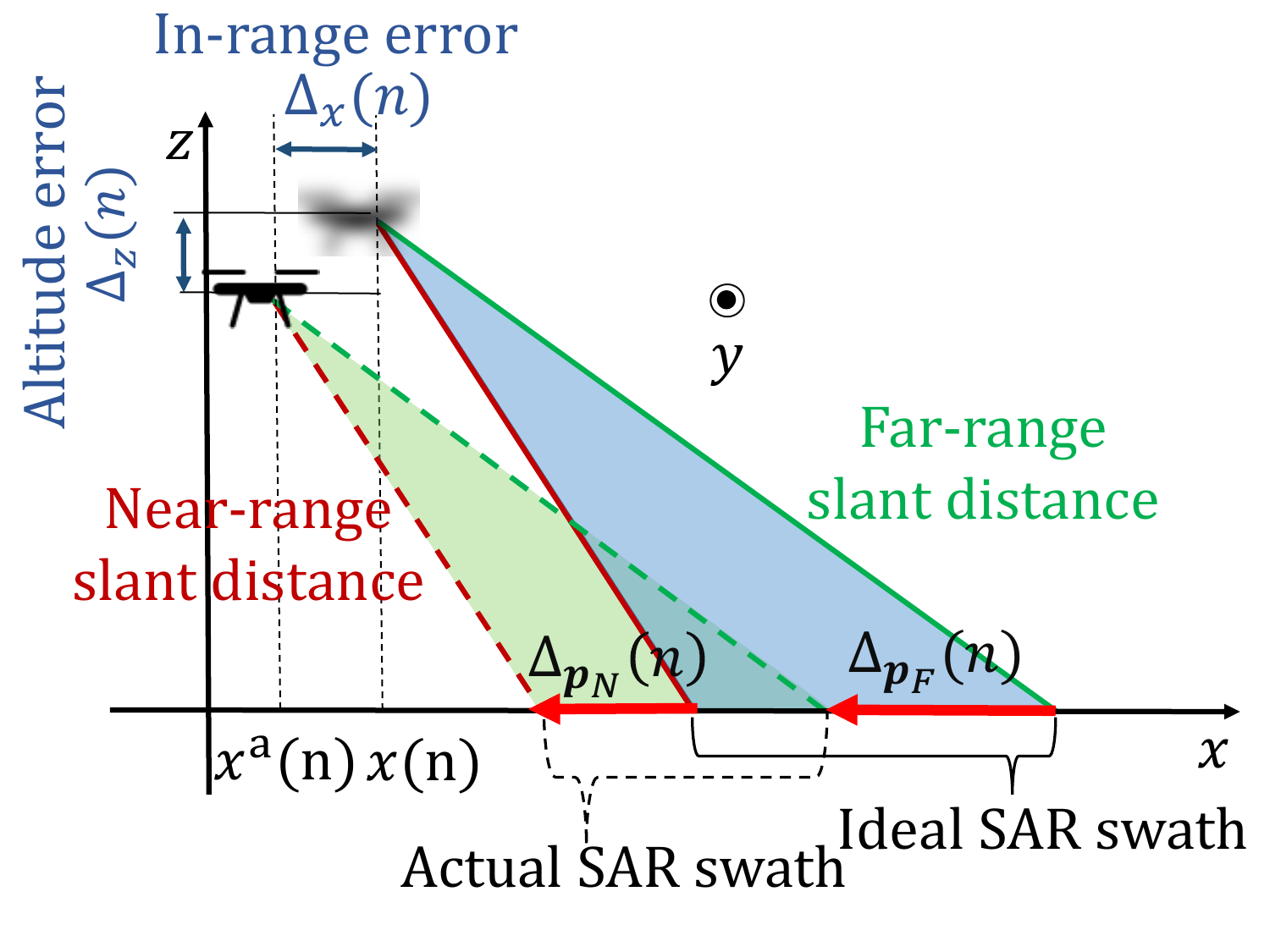} &
		\includegraphics[width=0.3\textwidth]{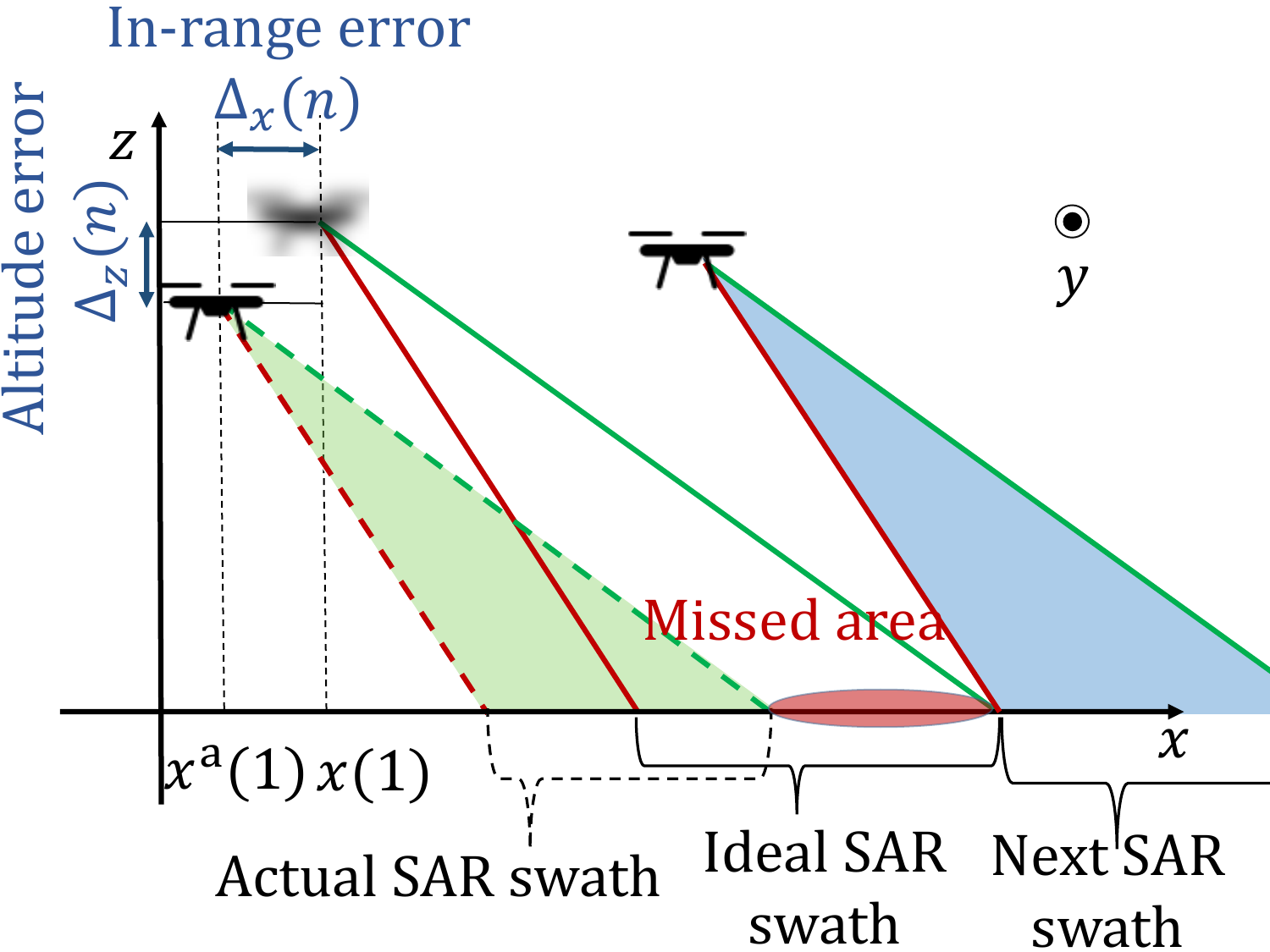} &
		\includegraphics[width=0.3\textwidth]{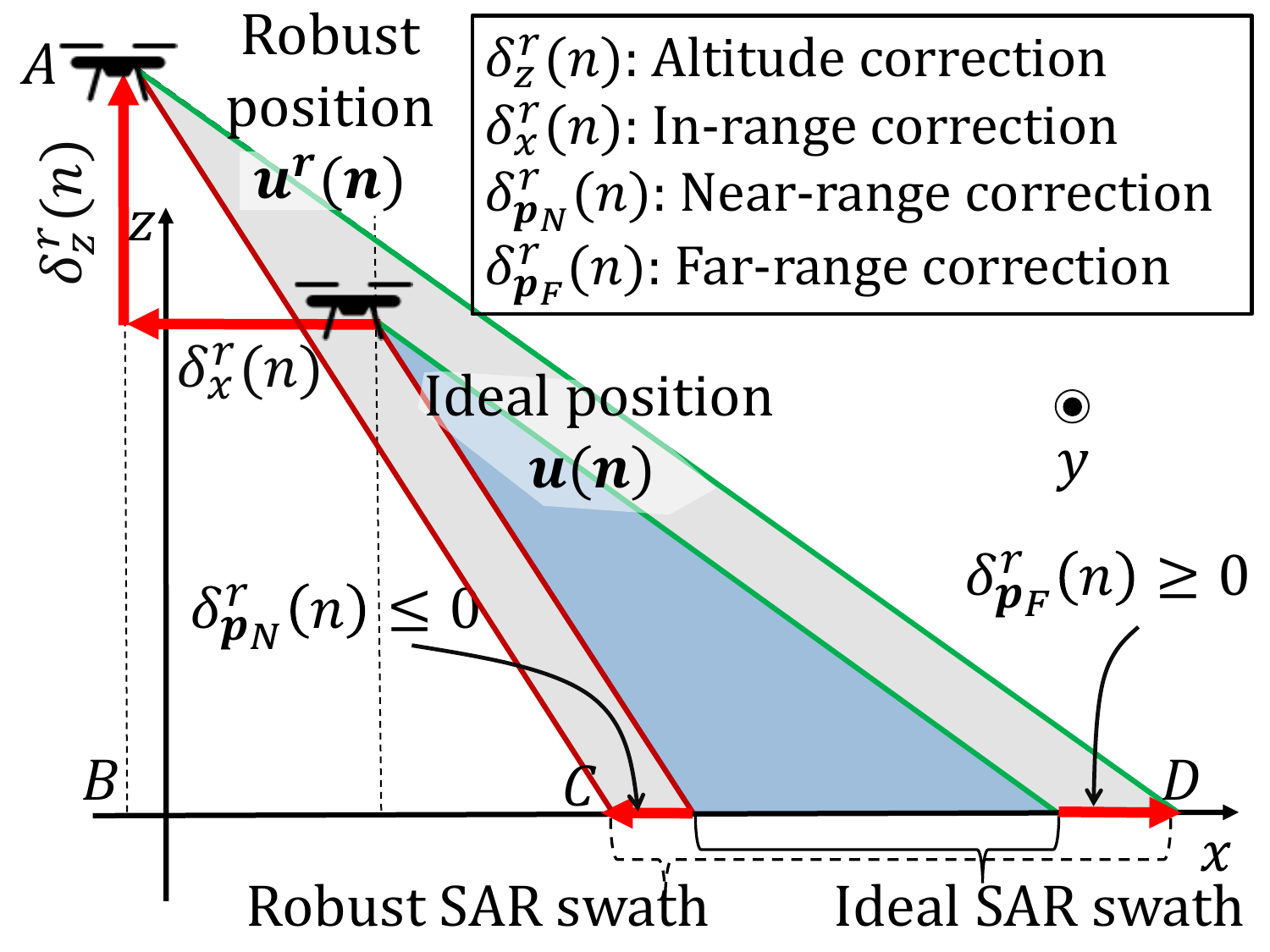}\\
		\textbf{(d)}  & \textbf{(e)}    & \textbf{(f)} \\[6pt]
	\end{tabular}
	\caption{ Robust \ac{uav} trajectory design. \textbf{(a)} Successive azimuth scans for an ideal trajectory.
		\textbf{(b)} {Effect of in-range error on the ground \ac{sar} swath.
			\textbf{(c)} Effect of altitude error on the ground \ac{sar} swath.
			\textbf{(d)} Combined effect of in-range and altitude errors on the ground \ac{sar} swath.}
		\textbf{(e)} Uncovered areas due to \ac{uav} flight deviations. \textbf{(f)} Proposed  robust \ac{uav}-\ac{sar} trajectory.  }
	\label{fig:Overlapp}
\end{figure*}
If the actual trajectory is different from the ideal linear trajectory $\mathbf{u}$, described by $\rm C1-C4$, the actual footprint of the sensing beam is also affected in terms of both its on-ground position and width. Let position vectors  $(\mathbf{x}^a,\mathbf{y}^a,\mathbf{z}^a) \in \mathbb{R}^{NM\times 1}$ describe the actual trajectory, denoted by $\mathbf{u}^a=[\mathbf{u}^a(1),...,\mathbf{u}^a(NM)]$, where  the drone's actual position in time slot $n$ is given by $\mathbf{u}^a(n)=[x^a(n),y^a(n),z^a(n)]^T$.  In what follows, we quantify the change in the actual footprint of the sensing beam due to random \ac{uav} trajectory deviations.\par
\subsubsection{Cross-range error}\sloppy The deviation in $y$-direction, denoted by $\boldsymbol{\Delta}_y = \mathbf{y}^a-\mathbf{y}=$ $[\Delta_y(1), ..., \Delta_y(NM)]^T \in \mathbb{R}^{NM\times 1}$, has the least effect on the \ac{sar} ground coverage. Although this error can affect the \ac{sar} resolution by influencing the illumination time of a given reference point on the ground, it has no effect on the radar coverage thanks to the radar beam width in $y$-direction. In fact, the main idea behind \ac{sar} systems is to use this direction to create the synthetic aperture such that any reference point on the ground is illuminated multiple times from different angles along the $y$-axis \cite{tutorial}. Therefore, in what follows, we neglect the effect of this error on the radar coverage and  focus on \ac{uav} trajectory deviations in the $xz$-plane.\par
\subsubsection{In-range error} This error is caused by the \ac{uav} position deviation in $x$-direction and is denoted by vector $\boldsymbol{\Delta}_x = \mathbf{x}^a-\mathbf{x} =$ $ [\Delta_x(1), ..., \Delta_x(NM)]^T$ $\in$ $ \mathbb{R}^{NM \times 1 }$, see Figure \ref{fig:Overlapp}b. As shown in the figure, this error causes near-range and far-range shifts of the ground footprint of the sensing beam along the $x$-axis. To this end,  in the $xz-$plane and in time slot $n$, we define point $\mathbf{p}_{N}(n)=[x(n)+c_1z(n),y(n),0]^T$ representing the intersection between the near-range slant distance line and the $x$-axis and point $\mathbf{p}_F(n)=[x(n)+c_2z(n),y(n),0]^T$ representing the intersection between the far-range slant distance line and the $x$-axis. Both points are defined \ac{wrt}  the drone's ideal trajectory, see Figure \ref{fig:Overlapp}b. Any deviation from the ideal \ac{uav} position in the $xz$-plane  results in a near-range error (shift of point $\mathbf{p}_N(n)$ along the $x$-axis), denoted by $\boldsymbol{\Delta}_{\mathbf{p}_N}=[\Delta_{\mathbf{p}_N}(1), ...,\Delta_{\mathbf{p}_N}(NM)]^T \in \mathbb{R}^{NM\times1}$, and a far-range error (shift of point $\mathbf{p}_F(n)$ along the $x$-axis), denoted by  $\boldsymbol{\Delta}_{\mathbf{p}_F}=[\Delta_{\mathbf{p}_F}(1), ...,\Delta_{\mathbf{p}_F}(NM)]^T \in \mathbb{R}^{NM\times1}$, of the sensing beam footprint. A deviation in $x$-direction,   causes identical near-range and far-range errors such that, in time slot $n$, $\Delta_{\mathbf{p}_F}(n)=\Delta_{\mathbf{p}_N}(n)=\Delta_x(n)$. Therefore, an in-range error does not affect the width of the \ac{sar} swath, but results in a translation of the footprint of the sensing beam along the $x$-axis as shown in  Figure \ref{fig:Overlapp}b. \par
\subsubsection{Altitude error} A deviation of the trajectory in $z$-direction, given by $\boldsymbol{\Delta}_z = \mathbf{z}^a-\mathbf{z} =  [\Delta_z(1), ..., \Delta_z(NM)]^T \in \mathbb{R}^{NM \times 1}$, results in non-equal far-range and near-range errors. In fact, an altitude error in time slot $n$, denoted by $\Delta_z(n)$, results in $\Delta_{\mathbf{p}_N}(n)=c_1 \Delta_z(n) $ and  $\Delta_{\mathbf{p}_F}(n)=c_2 \Delta_z(n)$. In contrast to in-range errors,  altitude errors induce a change in both the width and the position of the sensing beam footprint, see Figure \ref{fig:Overlapp}c. \par

As shown in Figure \ref{fig:Overlapp}d, in time slot $n$, the combined in-range and altitude errors result in a near-range error of the footprint of the sensing beam given by: 
\begin{equation}
	\Delta_{\mathbf{p}_N}(n)=\Delta_x(n)+c_1 \Delta_z(n). \label{eq:Delta_N}
\end{equation}
Similarly,  in time slot $n$, the total far-range error is given by: 
\begin{equation}
	\Delta_{\mathbf{p}_F}(n)=\Delta_x(n)+c_2 \Delta_z(n).\label{eq:Delta_F}
\end{equation}
\subsubsection{Trajectory deviation model}
In this work,  based on measurements in \cite{deviation,motioncompensation}, we propose to model the \ac{uav}-\ac{sar} trajectory deviations as independent Gaussian random variables such that, in each time slot $n$, $ \Delta_x(n)\sim\mathcal{N}( o_x,\sigma)$ and $\Delta_z(n)\sim\mathcal{N}( o_z,\sigma)$, $\forall n$. The standard deviation, denoted by $\sigma$, depends on the stability of the \ac{uav} in the presence of external perturbations caused, e.g., by adverse weather conditions. The offsets in $x$- and $z$-direction, denoted by $o_x$ and $o_z$, respectively, represent a possible constant error between the \ac{uav}'s actual trajectory and its ideal intended trajectory and depend on the \ac{uav} localization accuracy. For the special case, $o_x=o_z=0$, the \ac{uav} trajectory deviations reduce to random fluctuations around the ideal trajectory, also referred to as \ac{uav} jittering \cite{jitter2}.

\subsection{Proposed Robust Trajectory Design}
Both positive near-range errors, i.e., $\Delta_{\mathbf{p}_N}(n)> 0$, and negative far-range errors, i.e.,  $\Delta_{\mathbf{p}_F}(n)< 0$, result in uncovered areas, whereas  the reverse result in an overlap 
between successive scans, see Figure \ref{fig:Overlapp}e. For robust trajectory design, uncovered areas are avoided by tolerating an overlap between adjacent azimuth scans. This can be achieved by shifting the actual footprint of the sensing beam such that the near-range and far-range errors are compensated. Since excessive overlap between scans reduces the \ac{sar} coverage, the overlap should be limited to the degree needed for compensation of \ac{uav} trajectory deviations. To this end, we define the coverage reliability  level $r$, $0\leq r \leq 1$, as the probability of having no gaps between successive \ac{sar}  {  azimuth scans}. The compensation {that is} required in the near range of the sensing beam footprint { to achieve} coverage reliability level $r$ is denoted by $ \boldsymbol{\delta}_{\mathbf{p}_N}^r=[\delta_{\mathbf{p}_{\scriptscriptstyle N}}^r(1), ..., \delta_{\mathbf{p}_N}^{r}(NM)]^T$ $\in \mathbb{R}^{NM\times1}$, and is obtained as:
\begin{equation}
	\delta_{\mathbf{p}_N}^{r}(n)\stackrel{}{=} \argmin_{s}  \Big\{ |s| \Big| s \in \mathbb{R}, \mathbb{P}\left(\Delta_{\mathbf{p}_N}(n)+ s\leq 0 \right) \geq r \Big\}. \label{eq:Near-rangeDefinition}
\end{equation}
Similarly, the far-range compensation, which we denote by $ \boldsymbol{\delta}_{\mathbf{p}_F}^{r}=$ $[\delta_{\mathbf{p}_F}^{r}(1), ..., \delta_{\mathbf{p}_F}^{r}(NM)]^T$ $\in \mathbb{R}^{NM\times1}$, is obtained as: 
\begin{equation}
	\delta_{\mathbf{p}_F}^{r}(n)\stackrel{}{=} \argmin_{s} \Big\{ |s|\Big| s \in \mathbb{R}, \mathbb{P}\left(\Delta_{\mathbf{p}_F}(n)+ s\geq 0 \right) \geq r \Big\}. \label{eq:Far-rangeDefinition}
\end{equation}
{Note that, on the one hand, to avoid coverage holes, coverage reliability values close to 1 are desirable. On the other hand, if $r$ is chosen too close to 1, this might result in a reduction of the total coverage, as a larger swath overlap is required to compensate for random trajectory deviations. Based on our simulations, $r=0.95$ provides a good compromise between coverage reliability and total coverage.}\newline Next, given the trajectory deviation statistics, we derive analytical expressions for the near-range and far-range compensations. 
\begin{proposition} \label{proposition:Near-Far-Range}
	The near-range and far-range compensations given in (\ref{eq:Near-rangeDefinition}) and (\ref{eq:Far-rangeDefinition}), respectively, can be obtained in time slot $n$ as follows: 
	\begin{align}
		\delta_{\mathbf{p}_N}^{r}(n) &=-\Big[\mathrm{erf}^{-1}(2r-1)\sigma \sqrt{2(1+{c_1}^2)}+o_x+c_1 o_z\Big]^+, \label{eq:deltaN}\\ 
		\delta_{\mathbf{p}_F}^{r}(n) &=\Big[\mathrm{erf} ^{-1}(2r-1)\sigma \sqrt{2(1+{c_2}^2)}-o_x-c_2 o_z\Big]^+.\label{eq:deltaF}
	\end{align}
\end{proposition}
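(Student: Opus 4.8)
The plan is to reduce each of (\ref{eq:Near-rangeDefinition}) and (\ref{eq:Far-rangeDefinition}) to a scalar problem — minimize $|s|$ subject to a single Gaussian tail inequality — and then invert that inequality in closed form via the error function. The first step is to identify the distributions of the compound errors. By (\ref{eq:Delta_N}), $\Delta_{\mathbf{p}_N}(n)=\Delta_x(n)+c_1\Delta_z(n)$ is an affine combination of the independent Gaussians $\Delta_x(n)\sim\mathcal{N}(o_x,\sigma)$ and $\Delta_z(n)\sim\mathcal{N}(o_z,\sigma)$, hence $\Delta_{\mathbf{p}_N}(n)\sim\mathcal{N}\!\big(o_x+c_1 o_z,\ \sigma\sqrt{1+c_1^2}\big)$; analogously, from (\ref{eq:Delta_F}), $\Delta_{\mathbf{p}_F}(n)\sim\mathcal{N}\!\big(o_x+c_2 o_z,\ \sigma\sqrt{1+c_2^2}\big)$. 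These statistics are the same for every $n$, which is why the resulting compensations do not depend on $n$.

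Next I would handle the near-range case. Writing $\mathbb{P}\big(\Delta_{\mathbf{p}_N}(n)+s\le 0\big)$ through the standard normal CDF and then through $\mathrm{erf}$ using $\Phi(x)=\tfrac12\big(1+\mathrm{erf}(x/\sqrt 2)\big)$, the requirement $\mathbb{P}\big(\Delta_{\mathbf{p}_N}(n)+s\le 0\big)\ge r$ becomes $\mathrm{erf}\!\big(\tfrac{-s-(o_x+c_1 o_z)}{\sigma\sqrt{2(1+c_1^2)}}\big)\ge 2r-1$. Since $\mathrm{erf}$ is strictly increasing and $2r-1\in[-1,1]$ for $0\le r\le 1$, this is equivalent to the affine condition $s\le s^\star_N$ with $s^\star_N := -\big(\mathrm{erf}^{-1}(2r-1)\,\sigma\sqrt{2(1+c_1^2)}+o_x+c_1 o_z\big)$. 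The feasible set in (\ref{eq:Near-rangeDefinition}) is thus the half-line $(-\infty,s^\star_N]$, and minimizing $|s|$ over it yields $0$ if $s^\star_N\ge 0$ and $s^\star_N$ otherwise, i.e., $\delta_{\mathbf{p}_N}^{r}(n)=\min(0,s^\star_N)=-[-s^\star_N]^+$, which is exactly (\ref{eq:deltaN}).

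The far-range case is the mirror image: $\mathbb{P}\big(\Delta_{\mathbf{p}_F}(n)+s\ge 0\big)=\Phi\!\big(\tfrac{s+o_x+c_2 o_z}{\sigma\sqrt{1+c_2^2}}\big)\ge r$ is equivalent to $s\ge s^\star_F := \mathrm{erf}^{-1}(2r-1)\,\sigma\sqrt{2(1+c_2^2)}-o_x-c_2 o_z$, so the feasible set in (\ref{eq:Far-rangeDefinition}) is $[s^\star_F,\infty)$ and the minimizer of $|s|$ is $\max(0,s^\star_F)=[s^\star_F]^+$, giving (\ref{eq:deltaF}).

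The computations themselves are elementary; the only real care is in the bookkeeping of signs. The near-range constraint concerns the lower tail of $\Delta_{\mathbf{p}_N}(n)$ while the far-range constraint concerns the upper tail of $\Delta_{\mathbf{p}_F}(n)$, so the two inequalities flip in opposite directions after inversion — this is the ultimate source of the opposite signs of $o_x$ and $o_z$ in (\ref{eq:deltaN}) versus (\ref{eq:deltaF}) — and one must correctly translate ``$\argmin|s|$ over a half-line'' into the $[\cdot]^+$ form, observing that the optimum is the boundary point when the half-line excludes the origin and is $0$ otherwise.
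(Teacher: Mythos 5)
Your proposal is correct and follows essentially the same route as the paper's proof: derive the Gaussian laws of $\Delta_{\mathbf{p}_N}(n)$ and $\Delta_{\mathbf{p}_F}(n)$ from (\ref{eq:Delta_N})--(\ref{eq:Delta_F}), invert the tail-probability condition through the error function to get a half-line feasible set, and take the minimizer of $|s|$ as $\min(0,\cdot)$ or $\max(0,\cdot)$, yielding the $[\cdot]^+$ expressions. The only differences are cosmetic (you use $\mathrm{erf}^{-1}(2r-1)$ directly where the paper passes through $\mathrm{erf}^{-1}(1-2r)$, and you spell out the far-range case that the paper dismisses as analogous).
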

\begin{proof}
 Please refer to Appendix \ref{app:P.1}.
\end{proof}
\begin{remark}
	Note that $\delta_{\mathbf{p}_N}^{r}(n) \leq 0 $,  whereas $\delta_{\mathbf{p}_F}^{r}(n)\geq 0 $, $\forall n \in \mathbb{N}_{NM}$, which means that the actual footprint of the radar beam is widened, see Figure \ref{fig:Overlapp}f. This creates overlapping scans that prevent gaps between adjacent scans. 
\end{remark}
The proposed near-range and far-range compensations can be used to adjust the \ac{uav} position. In the following proposition, we translate the near-range and  far-range compensation of the beam footprint  to an  in-range ($x$-position) adjustment, denoted by $\boldsymbol{\delta}_x^{r}=[\delta_x^{r}(1), ...,\delta_x^{r}(NM)]^T \in \mathbb{R}^{NM\times1}$,  and  an altitude ($z$-position) adjustment, denoted by $\boldsymbol{\delta}_z^{r}=[\delta_z^{r}(1), ...,\delta_z^{r}(NM)]^T\in \mathbb{R}^{NM\times1}$, as shown in Figure \ref{fig:Overlapp}f.
\begin{proposition}
	\label{theo}Given the near-range and far-range compensation of the beam footprint, $\boldsymbol{\delta}_{\mathbf{p}_N}^{r}$ and $\boldsymbol{\delta}_{\mathbf{p}_F}^{r}$, the corresponding in-range and altitude adjustment of the \ac{uav} position are respectively given by: \begin{gather}
		\boldsymbol{\delta}_x^{r}=\boldsymbol{\delta}_{\mathbf{p}_N}^{r}-c_1\; \boldsymbol{\delta}_z^{r}, \label{theo:along-range}\\          \boldsymbol{\delta}_z^{r}= \frac{\boldsymbol{\delta}_{\mathbf{p}_F}^{r}-\boldsymbol{\delta}_{\mathbf{p}_N}^{r}}{c_2-c_1}.\label{eq:shift_z}
	\end{gather}
\end{proposition}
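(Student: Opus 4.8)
The plan is to exploit the fact that the near- and far-range endpoints of the on-ground footprint are \emph{affine} functions of the \ac{uav} position in the $xz$-plane, exactly as in the derivation of (\ref{eq:Delta_N}) and (\ref{eq:Delta_F}). Concretely, in time slot $n$ the near-range point is $\mathbf{p}_N(n)=[x(n)+c_1 z(n),\,y(n),\,0]^T$ and the far-range point is $\mathbf{p}_F(n)=[x(n)+c_2 z(n),\,y(n),\,0]^T$. Hence, deliberately adjusting the nominal position $(x(n),z(n))$ by $(\delta_x^{r}(n),\delta_z^{r}(n))$ — the $y$-coordinate is left untouched, since cross-range errors are neglected — shifts the near-range point along the $x$-axis by $\delta_x^{r}(n)+c_1\delta_z^{r}(n)$ and the far-range point by $\delta_x^{r}(n)+c_2\delta_z^{r}(n)$. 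This is simply the deterministic counterpart of (\ref{eq:Delta_N})--(\ref{eq:Delta_F}), with the same coefficients $c_1,c_2$.

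Next I would require that this induced footprint shift equal the compensation prescribed by Proposition~\ref{proposition:Near-Far-Range}, i.e.,
\begin{align}
	\delta_x^{r}(n)+c_1\,\delta_z^{r}(n) &= \delta_{\mathbf{p}_N}^{r}(n),\\
	\delta_x^{r}(n)+c_2\,\delta_z^{r}(n) &= \delta_{\mathbf{p}_F}^{r}(n),
\end{align}
which is a $2\times2$ linear system in $(\delta_x^{r}(n),\delta_z^{r}(n))$. Subtracting the two equations gives $(c_2-c_1)\,\delta_z^{r}(n)=\delta_{\mathbf{p}_F}^{r}(n)-\delta_{\mathbf{p}_N}^{r}(n)$, i.e., the componentwise form of (\ref{eq:shift_z}), and substituting back gives $\delta_x^{r}(n)=\delta_{\mathbf{p}_N}^{r}(n)-c_1\,\delta_z^{r}(n)$, i.e., the componentwise form of (\ref{theo:along-range}). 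The system is non-singular because $c_2-c_1=\tan\theta_2-\tan\theta_1>0$: indeed $\theta_1=\theta_d-\Theta_{3\mathrm{dB}}/2$ and $\theta_2=\theta_d+\Theta_{3\mathrm{dB}}/2$ satisfy $0<\theta_1<\theta_2<\pi/2$ for any admissible depression angle and positive $3$\,dB beamwidth, and $\tan(\cdot)$ is strictly increasing on $(0,\pi/2)$. Collecting the per-slot identities for $n\in\mathbb{N}_{NM}$ into vectors then yields (\ref{theo:along-range}) and (\ref{eq:shift_z}).

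There is essentially no hard step: the argument is just a change of variables from ``footprint-shift'' coordinates back to ``position-shift'' coordinates, and the only thing that must be verified is the non-degeneracy $c_1\neq c_2$, which guarantees invertibility of the map and is ensured by the positivity of the beamwidth. A secondary point worth a sentence is that, because the footprint endpoints are affine in the position and the adjustment is applied to the \emph{nominal} (pre-deviation) trajectory, the compensation translates the entire swath without altering the statistics of the residual random deviations; the reliability guarantee encoded in $\boldsymbol{\delta}_{\mathbf{p}_N}^{r}$ and $\boldsymbol{\delta}_{\mathbf{p}_F}^{r}$ through (\ref{eq:Near-rangeDefinition})--(\ref{eq:Far-rangeDefinition}) therefore carries over to the adjusted trajectory verbatim.
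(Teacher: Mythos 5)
Your proposal is correct and follows essentially the same route as the paper: requiring the position adjustment to shift the near- and far-range footprint endpoints by exactly $\delta_{\mathbf{p}_N}^{r}(n)$ and $\delta_{\mathbf{p}_F}^{r}(n)$ yields the same $2\times 2$ linear system that the paper obtains from the triangles in Figure~\ref{fig:Overlapp}f (your equations are just the paper's system (\ref{eq:Syseq}) cleared of denominators), and you solve it identically by subtraction and back-substitution. Your explicit remark that $c_2-c_1>0$ guarantees invertibility is a small, welcome addition that the paper leaves implicit.
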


\begin{proof}{ Based on the geometry of the problem, In time slot $n$,} given $\delta_{\mathbf{p}_N}^{r}(n)\leq 0$ and $\delta_{\mathbf{p}_F}^{r}(n)\geq 0$, we shift the \ac{uav} position along the $x$-axis by $\delta_x^{r}(n)\leq 0$, and along the  $z$-axis by $\delta_z^{r}(n)\geq 0$, see Figure \ref{fig:Overlapp}f. The required \ac{uav} position adjustments can be derived {based on the geometry of the problem} by solving the following system of equations:
	\ifonecolumn
	\begin{align}
		\begin{cases}\label{eq:Syseq}
			c_1=\frac{c_1z(n)+\delta_{\mathbf{p}_N}^{r}(n)-\delta_x^{r}(n) }{z(n)+\delta_z^{r}(n)},\\
			c_2=\frac{c_2z(n)+\delta_{\mathbf{p}_F}^{r}(n)-\delta_x^{r}(n)}{z(n)+\delta_z^{r}(n)},
			\end{cases}
	\end{align}
	\else 
		\begin{align}
			\label{eq:Syseq}
			c_1=\frac{c_1z(n)+\delta_{\mathbf{p}_N}^{r}(n)-\delta_x^{r}(n) }{z(n)+\delta_z^{r}(n)},\\
			c_2=\frac{c_2z(n)+\delta_{\mathbf{p}_F}^{r}(n)-\delta_x^{r}(n)}{z(n)+\delta_z^{r}(n)},
	\end{align}
	\fi
	where the first and second lines of (\ref{eq:Syseq}) are obtained from triangles ($ABC$) and ($ABD$) in Figure \ref{fig:Overlapp}f, respectively.  
\end{proof}
For robust \ac{uav}-\ac{sar} trajectory design, the \ac{uav} position is adjusted in each time slot to achieve a coverage reliability level $r$ in the presence of \ac{uav} trajectory deviations. Consequently, the robust trajectory is denoted by  $\textbf{u}^r=[\textbf{u}^r(1),...,\textbf{u}^r(NM)]$, where  $ \mathbf{u}^{r}(n)$ is the adjusted position in time slot $n$ and is given by:
\ifonecolumn 
\begin{align}
	\mathbf{u}^{r}(n)&=\left[ x^{r}(n),y(n),z^{r}(n)\right]^T=\left[ x(n)+\delta_x^{r}(n),y(n),z(n)+\delta_z^{r}(n)\right]^T.\label{eq:UAVcorrectedpos}
\end{align}
\else 
\begin{align}
	\mathbf{u}^{r}(n)&=\left[ x^{r}(n),y(n),z^{r}(n)\right]^T, \notag\\&=\left[ x(n)+\delta_x^{r}(n),y(n),z(n)+\delta_z^{r}(n)\right]^T.\label{eq:UAVcorrectedpos}
\end{align}
\fi  

\section{Robust Resource Allocation and Trajectory Optimization Framework} \label{Sec:ResourceAllocation}
In this section, we formulate an optimization problem for joint \ac{3d} trajectory and resource allocation design. {The formulated problem is solved offline where random \ac{uav} trajectory deviations are accounted for via the  robust trajectory given in (\ref{eq:UAVcorrectedpos}). In particular}, for a given coverage reliability level $r$, we provide a low-complexity sub-optimal solution  of the formulated problem based on SCA. Moreover, to assess the performance of the proposed sub-optimal scheme, we derive an upper bound for the optimal solution based on monotonic optimization theory {\cite{1}}. 
\subsection{Problem Formulation}
Our objective is the maximization of the drone ground coverage $C$ while satisfying the constraints imposed by the radar and communication subsystems. We assume Gaussian trajectory deviation as  described in Section \ref{Sec:FlightDeviationModel}. To achieve a coverage reliability level $r$, the \ac{uav} {follows} the robust trajectory $\mathbf{u}^r$ provided in (\ref{eq:UAVcorrectedpos}). We optimize the drone trajectory $
\{ \mathbf{z},\mathbf{x},N\}$ and the power allocation $\{\mathbf{q}, \mathbf{P_{\mathrm{com}}}, \mathbf{P_{\mathrm{sar}}}\}$, where vectors $\mathbf{z}$, $\mathbf{x}$, $\mathbf{q}$, $ \mathbf{P_{\mathrm{com}}}$, and $\mathbf{P_{\mathrm{sar}}}$,  represent the collections of all $z(n), x(n), q(n), P_{\rm com}(n),$ and $ P_{\rm sar}(n), \forall n \in \mathbb{N}_{NM}$, respectively, and $q(n)$ is the energy { remaining} in the drone's battery in time slot $n$. To this end, the following optimization problem is formulated:
\begin{alignat*}{2} 
	&\!(\mathrm{P.1}):\max_{\mathbf{z},\mathbf{x},\mathbf{q}, \mathbf{P_{\mathrm{sar}}}, \mathbf{P_{\mathrm{com}}},N}  C(\mathbf{u}^{r})     & \qquad&  \\
	&\text{s.t.} \hspace{2mm} \mathrm{C1-C4,} &   & \\ 
	&  \mathrm{C5: } \;P_{\mathrm{sar}}(n+1)=P_{\mathrm{sar}}(n), \forall n \in \mathcal{A},                &      &  \\ & \mathrm{C6}: \;z_{\mathrm{min}}\leq z^{r}(n) \leq z_{\mathrm{max}}, \forall n \in \mathbb{N}_{NM},                 &      &  \\&  \mathrm{C7}:\mathrm{SNR}_n(\mathbf{P}_{\mathrm{sar}}, \mathbf{u}^{r}) \geq \mathrm{SNR_{min}},  \forall n \in \mathbb{N}_{NM} ,               &      &     
	\\
	&  \mathrm{C8}:  R_n (\mathbf{P}_{\mathrm{com}}, \mathbf{u}^{r}) \geq R_{\mathrm{min},n}(\mathbf{u}^{r})+R_{\mathrm{sl}},  \forall n \in \mathbb{N}_{NM} ,            &      &     
	\\
	&    \mathrm{C9}:0\leq P_{\mathrm{sar}}(n)\leq P_{\mathrm{sar}}^{\mathrm{max}}, 0\leq P_{\mathrm{com}}(n)\leq P_{\mathrm{com}}^{\mathrm{max}},  \forall n \in \mathbb{N}_{NM},             &      & \\
	& \mathrm{C10}: q(1)= q_{\mathrm{start}},\;q(n) \geq 0,  \forall n \in \mathbb{N}_{NM} ,     & & \\
	& \mathrm{C11}: q(n+1) = q(n) - E_{n}(\mathbf{P}_{\mathrm{sar}}, \mathbf{P}_{\mathrm{com}}) , \forall n \in { \mathbb{N}_{NM-1}},      & &  \\
	& \mathrm{C12}: N \in \mathbb{N}.     & &  
\end{alignat*}
Constraints $\mathrm{C1-C4}$ define the shape of the \ac{uav} ideal trajectory as described in (\ref{eq:turn1})-(\ref{eq:turn4}). Constraint $\mathrm{C5}$ ensures constant radar power along each azimuth scan.  Constraint $\mathrm{C6}$ determines the allowed range for the \ac{uav}'s operational altitude when the drone follows the proposed robust trajectory. Constraint $\mathrm{C7}$ ensures the minimum \ac{snr},  $\mathrm{SNR_{min}}$, required for \ac{sar} imaging when the drone follows the robust trajectory $ \mathbf{u}^{r}$ given by (\ref{eq:UAVcorrectedpos}). Constraint $\mathrm{C8}$ guarantees successful real-time transmission of the \ac{sar} data collected by the  \ac{uav} to the \ac{gs}. Constraint $\mathrm{C9}$ ensures that the radar and communication transmit powers are non-negative and do not exceed the maximum permissible levels. {Constraint $\mathrm{C10}$ specifies the energy available in the \ac{uav}  battery,  $q_{\mathrm{start}}$, at the start of the mission, and ensures that the energy remaining in the battery during operation,  $q(n)$, does not become negative. In addition, constraint $\mathrm{C11}$   updates the \ac{uav} battery level in time slot $n+1$ by subtracting the total energy consumed during time slot $n$, denoted by $E_n(\mathbf{P}_{\mathrm{sar}},\mathbf{P}_{\mathrm{com}})$, from $q(n)$.}
Constraint $\mathrm{C12}$ defines the number of azimuth scans realized by the \ac{uav} which is an integer.\par 
\par
Problem $\rm (P.1)$ is a non-convex \ac{minlp}  due to non-convex constraint $\mathrm{C8}$ and integer variable constraint $\mathrm{C12}$. The latter does not only affect the objective function but also the dimension of all  other optimization variables. Therefore, $N$ cannot be optimized jointly with the remaining variables. Furthermore, non-convex \acp{minlp} are usually very difficult to solve. Nevertheless, in the following, we provide a low-complexity sub-optimal solution to problem $\rm (P.1)$ based on SCA. 
\subsection{Low-complexity Sub-optimal Solution of Problem $\mathrm{(P.1)}$} \label{Sec:Sub-optimal}
In this subsection, we provide a low-complexity sub-optimal solution to problem $\rm (P.1)$ based on \ac{sca}. First, we fix the number of azimuth scans $N$ and denote the resulting problem by $\mathrm{(P.2)}$:
\begin{alignat*}{2}
	&\!\mathrm{(P.2)}:\max_{\mathbf{z},\mathbf{x},\mathbf{q}, \mathbf{P_{\mathrm{sar}}}, \mathbf{P_{\mathrm{com}}}} C(\mathbf{u}^{r})      & \qquad&  \\
	&\text{s.t.} \hspace{3mm} \mathrm{C1-C11.} &   & \; 
\end{alignat*}
However, even for a given $N$, problem $\mathrm{(P.2)}$ is still non-convex and difficult to solve. Thus, we approximate problem $\rm (P.2)$ as a convex problem and use iterative \ac{sca}  to solve it. Then, we perform a search for the optimal number of azimuth scans $N^*$.  As a first step, we  transform constraint $\rm C7$, which involves a cubic function of drone altitude $z$, using second-order cone programming. To this end, we introduce  slack variables $\psi(n)$ stacked in vector $\boldsymbol{\psi}=[\psi(1),...,\psi(NM)]^T\in \mathbb{R}^{NM\times 1}$,  and reformulate $\rm C7$ as follows: 
\begin{align}\label{eq:beta}
\mathrm{C7a}&:\begin{bmatrix} 
		\psi(n) & z^{r}(n)\\
		z^{r}(n) & 1 
	\end{bmatrix} \succeq \mathbf{0},\forall n \in \mathbb{N}_{NM},\\
	 \mathrm{C7b}&: \begin{bmatrix}
		P_{\rm sar}(n) \beta & \psi(n) \\
		\psi(n) & z^{r}(n)
	\end{bmatrix} \succeq \mathbf{0},   \forall n \in \mathbb{N}_{NM} , \\ 
	\mathrm{C7c}&:\psi(n) \geq 0, \forall n \in \mathbb{N}_{NM},
\end{align}
where $\beta=\frac{ G_t\; G_r\;\lambda^3 \;\sigma_0 \;c \;\tau_p \mathrm{PRF}\;\sin^2(\theta_d)}{(4\pi)^4 \;{ k_B} \;T_o \;NF \;\;B_r \;L_{\mathrm{tot}} \;v\; \mathrm{SNR_{min}} }$. Furthermore, non-convex constraint $\rm C8$ can be  rewritten as follows:
\begin{equation} 
	{\rm C8}: \left( A2^{\alpha z^{r}(n)} -1\right)	d^2_n(\mathbf{u}^{r})\leq P_{\mathrm{com}}(n)\gamma, \forall n \in \mathbb{N}_{NM},\label{eq:constraintC8}
\end{equation}
where $A=2^{\frac{B_r\;\tau_p\;\mathrm{PRF} }{B_c}}$ and $\alpha=\frac{2 \;\Omega \;B_r\; \mathrm{PRF}}{c \;B_c}$. In time slot $n$, let function $H_n(\mathbf{u}^{r})$ be equal to the left-hand side of (\ref{eq:constraintC8}). Function $H_n$ is non-convex \ac{wrt} to optimization variables $x$ and $z$. Therefore, in each \ac{sca} iteration $j\in \mathbb{N}$, we use a Taylor series approximation around points $\{z^{(j)}(n), x^{(j)}(n) \}\in \mathbb{R}$ to convexify constraint $\rm C8$. This is achieved by replacing function $H_n$ with a global overestimate, denoted by $\overline{H}_n$. To this end, in time slot $n$, we first expand function $H_n$ as follows: 
\ifonecolumn
\begin{align}
	H_n(\mathbf{u}^{r})=h_1(z(n))h_2(z(n))+h_1(z(n))h_3(x(n))+ h_1(z(n))(y(n)-g_y)^2, \label{eq:expansion}
\end{align}
\else
\begin{align}
	H_n(\mathbf{u}^{r})&=h_1(z(n))h_2(z(n))+h_1(z(n))h_3(x(n))+\notag \\ &h_1(z(n))(y(n)-g_y)^2, \label{eq:expansion}
\end{align}
\fi
where  functions $h_1,h_2,$ and $h_3$ are convex differentiable functions that are given by: 
\begin{align}
	h_1(z(n))&=A2^{\alpha z^{r}(n)  }-1,\\
	h_2(z(n))&= (z^{r}(n)-g_z)^2,\\
	h_3(x(n))&= (x^{r}(n)-g_x)^2.
\end{align}
Yet, terms $h_1(z(n))h_2(z(n))$ and $h_1(z(n))h_3(x(n))$ contain coupling and are non-convex w.r.t. optimization variables $x$ and $z$. To overcome this obstacle, we rewrite them as follows: 
\ifonecolumn
\begin{align}
	h_1(z(n))h_2(z(n))&= \frac{1}{2} \big(   h_1(z(n))+h_2(z(n))\big)^2 - \frac{1}{2}  h_1^2(z(n))-\frac{1}{2}  h_2^2(z(n)),\label{eq:h1h2}\\ 
	h_1(z(n))h_3(x(n))&= \frac{1}{2} \big(   h_1(z(n))+h_3(x(n))\big)^2 - \frac{1}{2}  h_1^2(z(n))-\frac{1}{2}  h_3^2(x(n)). \label{eq:h3h4} 
\end{align}
\else 
\begin{align}
	h_1(z(n))h_2(z(n))&= \frac{1}{2} \big(   h_1(z(n))+h_2(z(n))\big)^2 \notag \\&- \frac{1}{2}  h_1^2(z(n))-\frac{1}{2}  h_2^2(z(n)),\label{eq:h1h2}\\ 
	h_1(z(n))h_3(x(n))&= \frac{1}{2} \big(   h_1(z(n))+h_3(x(n))\big)^2 \notag \\&- \frac{1}{2}  h_1^2(z(n))-\frac{1}{2}  h_3^2(x(n)). \label{eq:h3h4} 
\end{align}
\fi
Equations (\ref{eq:h1h2}) and (\ref{eq:h3h4}) are differences of convex functions, where functions $f_i=\frac{1}{2}h_i^2,  i \in \{1,2,3\}$, represent an obstacle to solving problem $\rm (P.2)$. Therefore, we replace them with their global underestimates around points $\{z^{(j)}(n), x^{(j)}(n) \}$ based on Taylor series approximation:
\ifonecolumn
\begin{align}\label{eq:approximation}
	\underline{ f_i}(z(n)) &=f_i(z^{(j)}(n)) + f_i'(z^{(j)}(n))(z(n)-z^{(j)}(n)),\forall i \in \{1,2\},\\
	\underline{ f_3}(x(n)) &=f_3(x^{(j)}(n)) + f_3'(x^{(j)}(n))(x(n)-x^{(j)}(n)), \forall n. 
\end{align}
\else
\begin{align}\label{eq:approximation}
 \underline{ f_i}(z(n)) &=f_i(z^{(j)}(n)) + f_i'(z^{(j)}(n))(z(n)-z^{(j)}(n)),\notag \\ &\forall i \in \{1,2\},\\
\underline{ f_3}(x(n)) &=f_3(x^{(j)}(n)) + f_3'(x^{(j)}(n))(x(n)-x^{(j)}(n)). 
\end{align}
\fi
Consequently, the global overestimate for function $\overline{H}_n$ around $\{z^{(j)}(n), x^{(j)}(n) \}$  is given by:
\ifonecolumn
\begin{align}
\overline{H}_n(\mathbf{u}^{r})=& \frac{1}{2} \big(   h_1(z(n))+h_2(z(n))\big)^2 +\frac{1}{2} \big(   h_1(z(n))+h_3(x(n))\big)^2 - \nonumber \\ &2\underline{ f_1}(z(n))  - \underline{ f_2}(z(n))- \underline{ f_3}(x(n)) +h_1(z(n))(y(n)-g_y)^2,\forall n.\label{eq:expansion2}
\end{align}
\else
\begin{gather}
\scalemath{0.9}{\overline{H}_n(\mathbf{u}^{r})=  \frac{1}{2} \big(   h_1(z(n))+h_2(z(n))\big)^2 +\frac{1}{2} \big(   h_1(z(n))+h_3(x(n))\big)^2} \nonumber \\\scalemath{0.9}{- 2\underline{ f_1}(z(n))  - \underline{ f_2}(z(n))- \underline{ f_3}(x(n)) +h_1(z(n))(y(n)-g_y)^2.}\label{eq:expansion2}
\end{gather}
\fi
In the final step, we introduce two new slack vectors $\mathbf{t}=[t(1), ...,t(NM)]^T \in \mathbb{R}^{NM\times1}$,  $\mathbf{o}=[o(1), ...,o(NM)]^T\in \mathbb{R}^{NM\times1}$, and two new constraints  $\rm \widetilde{C8a}$ and $\rm \widetilde{C8b}$ to reformulate constraint $\rm C8$ as follows: 
\ifonecolumn
\begin{align}
	&\widetilde{ \rm C8}:  \frac{1}{2} t^2(n) +\frac{1}{2} o^2(n)- 2\underline{ f_1}(z(n))- \underline{ f_2}(z(n))- \underline{ f_3}(x(n))  +\nonumber \\ &\hspace{7mm}h_1(z(n))(y(n)-g_y)^2\leq P_{\rm com}(n)\gamma, \forall n \in \mathbb{N}_{NM},\\
	&\widetilde{ \rm C8a}:  h_1(z(n))+h_2(z(n)) \leq  t(n),\forall n \in \mathbb{N}_{NM},\\
	&\widetilde{ \rm C8b}:  h_1(z(n))+h_3(z(n)) \leq  o(n),\forall n \in \mathbb{N}_{NM}.
\end{align}
\else
\begin{align}
	&\widetilde{ \rm C8}:  \frac{1}{2} t^2(n) +\frac{1}{2} o^2(n)- 2\underline{ f_1}(z(n))- \underline{ f_2}(z(n))- \nonumber \\ &\underline{ f_3}(x(n)) + h_1(z(n))(y(n)-g_y)^2\leq P_{\rm com}(n)\gamma, \forall n \in \mathbb{N}_{NM},\\
	&\widetilde{ \rm C8a}:  h_1(z(n))+h_2(z(n)) \leq  t(n),\forall n \in \mathbb{N}_{NM},\\
	&\widetilde{ \rm C8b}:  h_1(z(n))+h_3(z(n)) \leq  o(n),\forall n \in \mathbb{N}_{NM}.
\end{align}
\fi
{ Let $\mathbf{z}^{(j)}\in \mathbb{R}^{NM\times1}$ and $ \mathbf{x}^{(j)}\in \mathbb{R}^{NM\times1}$ be the collections of all $z(n)$ and $x(n), \forall n$, respectively.} Problem $\mathrm{(P.2)}$ can be approximated around the solution $\{\mathbf{z}^{(j)},\mathbf{x}^{(j)}\}$ by the following convex optimization problem:
\begin{subequations}
	\begin{alignat*}{2}
		&\!\mathrm{(P.3)}:\max_{\mathbf{z},\mathbf{x},\mathbf{q}, \mathbf{P_{\mathrm{sar}}}, \mathbf{P_{\mathrm{com}}},\boldsymbol{\psi},\mathbf{t},\mathbf{o}}   C(\mathbf{u}^{r})        & \qquad&  \\
		&\text{s.t.} \hspace{3mm}\mathrm{C1-C6,C7a,C7b,C7c,\widetilde{C8},\widetilde{C8a},\widetilde{C8b}, C9-C11}. &   & \; 
	\end{alignat*}
\end{subequations}

\begin{algorithm}[]
	\caption{Successive Convex Approximation (SCA)}\label{algorithm1}
	\begin{algorithmic}[1] 
		\State Initialization: Set iteration index $j=1$,  point {$\{\mathbf{z}^{(1)},\mathbf{x}^{(1)}\}$,} coverage $C^{(1)}$ by solving $\rm (P.3)$ around point $ \{\mathbf{z}^{(1)},\mathbf{x}^{(1)}\}  $, and error tolerance $0< \epsilon \ll 1$.
		\State \textbf{repeat}
		\State$ $Set $ j=j+1$ \Comment{{ Increment iteration index}}
		\State $  $Get coverage $C^{(j)}$ and solution $ \{\mathbf{z},\mathbf{x},\mathbf{q}, \mathbf{P_{\mathrm{sar}}}, \mathbf{P_{\mathrm{com}}}\}$ by solving $\rm (P.3)$  around point $ \{\mathbf{z}^{(j-1)},\mathbf{x}^{(j-1)}\}$       
		\State$ $Set $ \mathbf{ z}^{(j)}= \mathbf{z}, \mathbf{ x}^{(j)}=\mathbf{x}${\Comment{ Update solution}}
		\State \textbf{until} $\big |\frac{C^{(j)}-C^{(j-1)}}{C^{(j)}}\big|\leq \epsilon$ \Comment{Verify convergence}
		
		\State \textbf{return} solution $ \{\mathbf{z}^r,\mathbf{x}^r,\mathbf{q}, \mathbf{P_{\mathrm{sar}}}, \mathbf{P_{\mathrm{com}}}\}$
		
	\end{algorithmic}
\end{algorithm}

To solve $\mathrm{(P.2)}$, we use SCA such that in each iteration $j$ of \textbf{Algorithm} \ref{algorithm1}, we update the solution $\{\mathbf{z}^{(j)},\mathbf{x}^{(j)}\}$ by solving problem $\mathrm{(P.3)}$ using conventional solvers such as CVX \cite{b10}.  To solve problem $\rm (P.1)$, we first solve problem $\rm(P.2)$ for
every fixed and feasible number of drone scans $N$, then we select the minimum number of azimuth scans that results in the maximum coverage to save drone energy. In the following proposition, we show that the number of feasible $N$ is finite, and therefore, the exhaustive search is guaranteed to yield the optimal number of azimuth scans $N$. 
\begin{proposition}
	\label{prop:feasible}
	The feasible number of azimuth scans $N$ is finite. The optimal number of azimuth scans $N^*$ is upper bounded by $ \frac{1}{M} \left(\frac{q_{\mathrm{start}}}{\delta_t P_{\mathrm{prop}}}+1\right)$. 
\end{proposition}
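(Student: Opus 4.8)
The plan is to extract a necessary energy condition from the battery constraints $\mathrm{C10}$ and $\mathrm{C11}$ and show that it forces $N$ to be bounded. First I would telescope constraint $\mathrm{C11}$: summing $q(n+1) = q(n) - E_n(\mathbf{P}_{\mathrm{sar}},\mathbf{P}_{\mathrm{com}})$ over $n \in \mathbb{N}_{NM-1}$ together with $q(1) = q_{\mathrm{start}}$ from $\mathrm{C10}$ gives $q(NM) = q_{\mathrm{start}} - \sum_{n=1}^{NM-1} E_n(\mathbf{P}_{\mathrm{sar}},\mathbf{P}_{\mathrm{com}})$. Since $\mathrm{C10}$ also requires $q(NM) \geq 0$, any feasible point must satisfy $\sum_{n=1}^{NM-1} E_n(\mathbf{P}_{\mathrm{sar}},\mathbf{P}_{\mathrm{com}}) \leq q_{\mathrm{start}}$.

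Next I would lower-bound each term $E_n$. Recall $E_n(\mathbf{P}_{\mathrm{sar}},\mathbf{P}_{\mathrm{com}}) = \delta_t(P_{\mathrm{com}}(n) + P_{\mathrm{sar}}(n) + P_{\mathrm{prop}})$, and since $P_{\mathrm{com}}(n) \geq 0$ and $P_{\mathrm{sar}}(n) \geq 0$ by $\mathrm{C9}$, we have $E_n \geq \delta_t P_{\mathrm{prop}}$ for every $n$. Hence $\sum_{n=1}^{NM-1} E_n \geq (NM-1)\,\delta_t P_{\mathrm{prop}}$. Combining with the previous inequality yields $(NM - 1)\,\delta_t P_{\mathrm{prop}} \leq q_{\mathrm{start}}$, i.e. $NM \leq \frac{q_{\mathrm{start}}}{\delta_t P_{\mathrm{prop}}} + 1$, which rearranges to $N \leq \frac{1}{M}\left(\frac{q_{\mathrm{start}}}{\delta_t P_{\mathrm{prop}}} + 1\right)$. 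Since the right-hand side is a finite constant independent of the other optimization variables, the set of feasible $N$ is finite; the optimal $N^*$, being feasible, obeys the same bound.

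There is essentially no hard step here — the argument is a short telescoping-plus-nonnegativity estimate. The only point worth being careful about is that $P_{\mathrm{prop}}$ is a fixed positive constant (which the excerpt guarantees, since the velocity $v$ is fixed and $P_{\mathrm{prop}}$ in \eqref{eq:Pprop} is strictly positive for $v > 0$), so that dividing by $\delta_t P_{\mathrm{prop}}$ is legitimate and the bound is genuinely finite. One might also remark that this bound need not be tight — it ignores the sensing and communication energy, which are strictly positive whenever $\mathrm{C7}$ and $\mathrm{C8}$ are active — but for the purpose of guaranteeing that the exhaustive search over $N$ terminates, this coarse bound suffices.
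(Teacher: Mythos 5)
Your proof is correct and follows essentially the same route as the paper: telescoping the battery-update constraint $\mathrm{C11}$, invoking $q(1)=q_{\mathrm{start}}$ and $q(n)\geq 0$ from $\mathrm{C10}$, lower-bounding the per-slot energy by $\delta_t P_{\mathrm{prop}}$ via the nonnegativity of the transmit powers in $\mathrm{C9}$, and rearranging to get $N \leq \frac{1}{M}\left(\frac{q_{\mathrm{start}}}{\delta_t P_{\mathrm{prop}}}+1\right)$. The only cosmetic difference is the order of operations (you sum first and then bound each term, while the paper bounds each slot first and then telescopes), which changes nothing of substance.
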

\begin{IEEEproof}
	Please refer to Appendix \ref{ap:proposition3}.
\end{IEEEproof}
\textbf{Algorithm} \ref{algorithm1} has a polynomial time complexity \cite{dinh2010local}. The exhaustive search is of constant time complexity as the  total number of iterations is less than $ \Big \lceil \frac{1}{M} \left(\frac{q_{\mathrm{start}}}{\delta_t P_{\mathrm{prop}}}+1\right)\Big\rceil$ according to \textbf{Proposition} \ref{prop:feasible}. Therefore, the overall time complexity of the proposed robust scheme is still polynomial. Moreover,  \textbf{Algorithm} \ref{algorithm1} converges to a local optimum of problem $\rm (P.1)$ as \ac{sca} methods are known to converge fast to locally optimal solutions\cite{complexity1}.
{Next, to assess the performance of the proposed robust sub-optimal scheme, we derive an upper bound on the optimal solution of problem $\rm (P.1)$ using monotonic optimization theory \cite{response2}}. 
\ifonecolumn
\begin{figure}[] 
	\centering
	\begin{tabular}{cc}
		\includegraphics[width=0.4\columnwidth]{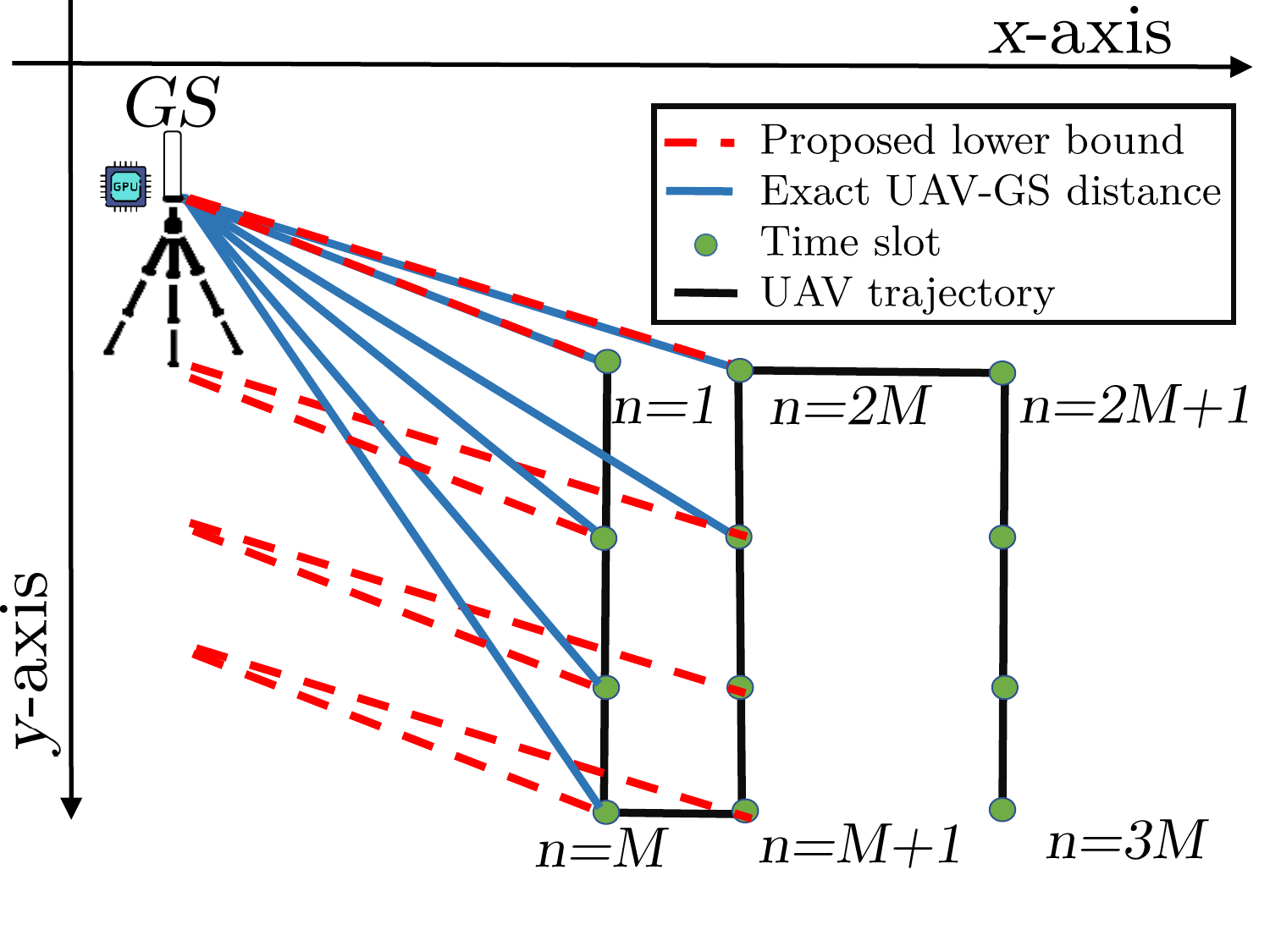}  & 
		\includegraphics[width=0.4\columnwidth]{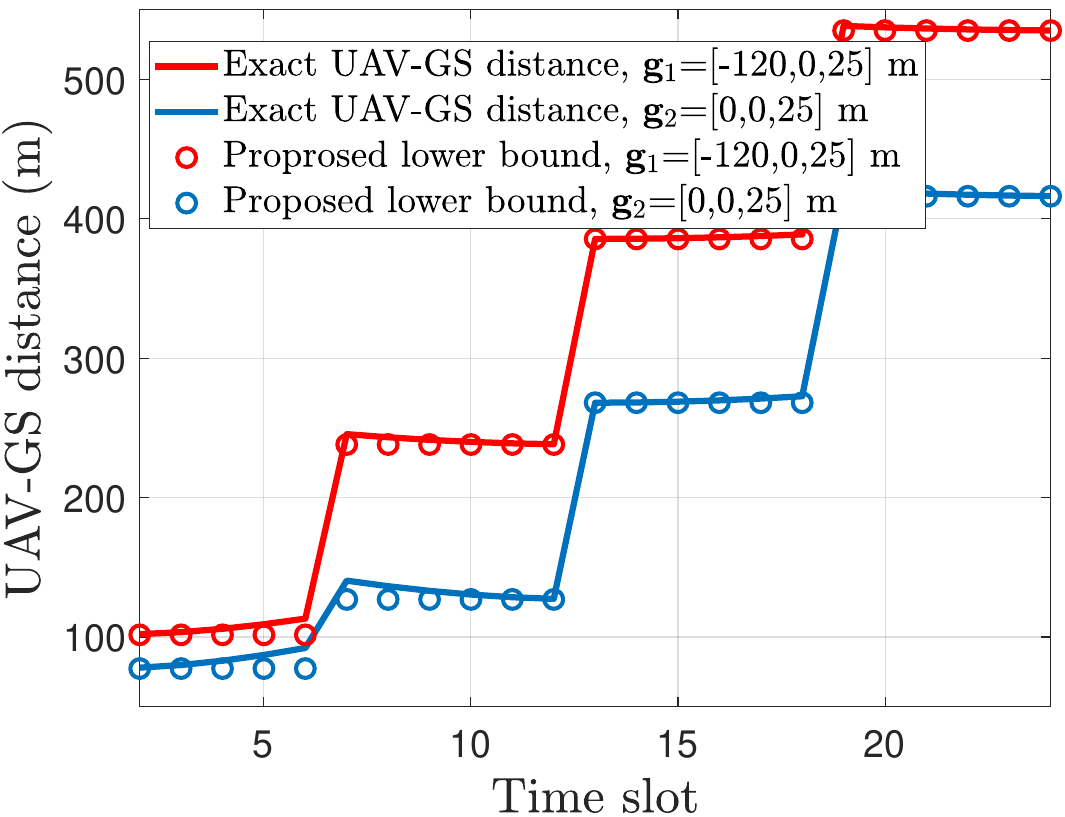}\\
		\textbf{(a)}& \textbf{(b)}  \\
	\end{tabular}
	\caption{ Lower bound on the \ac{uav}-\ac{gs} distance. (a) Proposed lower bound. (b) Accuracy of the proposed lower bound.}
	\label{fig:Upperbound}
\end{figure} 
\else
\begin{figure}[]
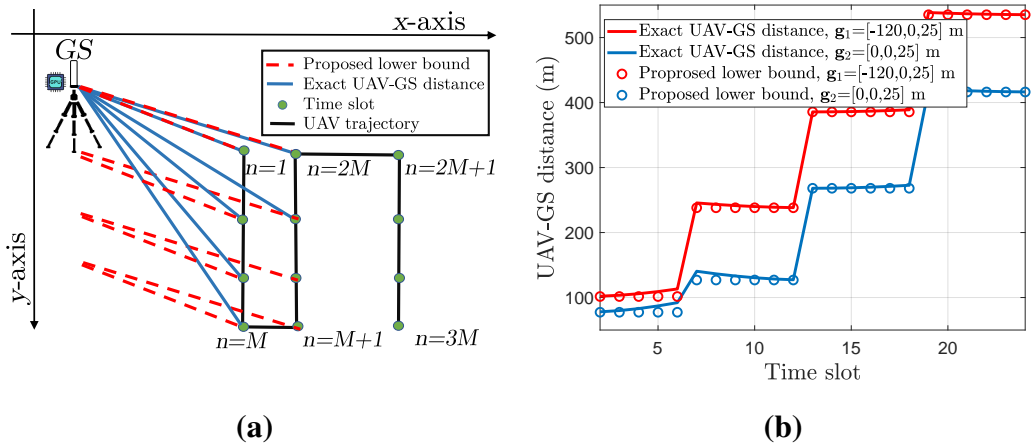
 
	\centering
	\begin{tabular}{cc}
		\includegraphics[width=0.46\columnwidth]{figures/Upper_bound.pdf}  & 
		\includegraphics[width=0.46\columnwidth]{figures/Combinedb1b2.pdf}\\
			\textbf{(a)}& \textbf{(b)}  \\
	\end{tabular}
	\caption{ Lower bound on the \ac{uav}-\ac{gs} distance. (a) Proposed lower bound. (b) Accuracy of the proposed lower bound.}
	\label{fig:Upperbound}
\end{figure} 
\fi 
\subsection{Upper Bound on the Optimal Solution of Problem $\mathrm{(P.1)}$}
To provide an upper bound to problem $\mathrm{(P.1)}$, we start by fixing the total number of azimuth scans $N$, which leads to problem  $\mathrm{(P.2)}$. However, problem $\rm (P.2)$ is  non-monotonic w.r.t. optimization variables $\mathbf{x}$,  $\mathbf{z}$, and $\mathbf{P_{\rm com}}$, due to constraints $\rm C2$ and $\rm C8$. Equality constraints $\rm C3-C5$ also represent  an obstacle to applying monotonic optimization theory.   In what follows, we provide an upper bound on the optimal solution to problem $\rm (P.2)$. To this end, we introduce an optimization problem, denoted by $\mathrm{(\overline{P.2})}$, the optimal solution of which presents an upper bound on the optimal solution to problem $(\mathrm{P.2})$. Then, we use monotonic optimization theory to provide the optimal solution to problem $\mathrm{(\overline{P.2})}$. \ifonecolumn \else \vspace{-5mm} \fi
\subsubsection{Upper Bound Formulation}
During the $M$ time slots of a given azimuth scan, the \ac{uav}-\ac{gs} distance slightly varies depending on the \ac{aoi}-\ac{gs} distance. In each azimuth scan, we impose an upper bound on the \ac{uav} throughput $R_n$ by lower-bounding the \ac{uav}-\ac{gs} distance $d_n$. In fact, for a given azimuth scan, we consider the minimum possible \ac{uav}-\ac{gs} distance in the considered scan, see Figure \ref{fig:Upperbound}a. Following the proposed lower-bound, the \ac{uav}-\ac{gs} distance is constant during a scan and only varies from scan to scan. This approach allows us to focus on optimizing the system parameters from scan to scan, instead of optimizing the system along all time slots, which reduces the problem dimension from $N \times M$ to $N$. Hereinafter, each $n \in \mathbb{N}_{N}$, refers to the start of a different azimuth scan. The $y$-positions that correspond to the minimum \ac{uav}-\ac{gs} distance for each azimuth scan, denoted by $\overline{\mathbf{y}} = [\overline{y}(1), ..., \overline{y}(N)]^T\in \mathbb{R}^{N\times1}$, are given by: 
\begin{equation}
\overline{y}(n)=y(i), i=\argmin\limits_{ 1+(n-1)M\leq k \leq  nM}(y(k)-g_y)^2. \label{eq:lowerboundy}
\end{equation}
In Figure \ref{fig:Upperbound}b, we plot the exact \ac{uav}-\ac{gs} distance with solid lines and  the proposed lower bound on the \ac{uav}-\ac{gs} distance with circle markers. The figure confirms the tightness of the proposed lower bound for different placements of the \ac{gs}. The parameter values used in Figure \ref{fig:Upperbound}b are specified in Table \ref{Tab:System}. As each time slot now {  corresponds} to a different azimuth scan, we define vectors $\overline{\boldsymbol{\delta}_x^r}$ and  $\overline{\boldsymbol{\delta}_z^r}$ that contain the corrections in $x$- and $z$-direction, respectively, for each of these scans as follows: 
\begin{gather}
	\overline{\boldsymbol{\delta}^r_x}=\left[\delta_x^r(1),\delta_x^r(1+M),...,\delta_x^r(1+(N-1)M)\right]^T \in \mathbb{R}^{N\times1}, \\
	\overline{\boldsymbol{\delta}_z^r}=\left[\delta_z^r(1),\delta_z^r(1+M),...,\delta_z^r(1+(N-1)M)\right]^T \in \mathbb{R}^{N\times1},
\end{gather}
Based on this approach, the following $N$-dimensional optimization problem denoted by $\rm (\overline{P.2})$ is obtained as an upper bound for optimization problem $\rm (P.2)$: 
\begin{alignat*}{2}         
	&\!(\overline{\mathrm{P.2}}):\max_{\mathbf{z},\mathbf{q}, \mathbf{P_{\mathrm{sar}}},\mathbf{P_{\mathrm{com}}}} \hspace{3mm}\frac{L}{\Delta_s}C(\overline{\mathbf{u}^{r}}) & \qquad&  \\
	&\text{s.t.} &   & \; \\
	& \overline{\mathrm{C6}}: \;z_{\mathrm{min}}\leq \overline{z^{r}}(n) \leq z_{\mathrm{max}},  \forall n \in \mathbb{N}_N,                &      &  \\&  \overline{\mathrm{C7}}:\mathrm{SNR}_n(\mathbf{P}_{\rm sar},\overline{\mathbf{u}^{r}}) \geq \mathrm{SNR_{min}}, \forall  n \in \mathbb{N}_N,                &      &     
	\\
	&  \overline{\mathrm{C8}}:  R_n(\mathbf{P}_{\rm com}, \overline{\mathbf{u}^{r}}) \geq R_{\mathrm{min},n}(\overline{\mathbf{u}^{r}})+R_{\mathrm{sl}},  \forall n \in \mathbb{N}_N,                  &      &     
	\\
	&    \overline{\mathrm{C9}}: \;0\leq P_{\mathrm{sar}}(n)\leq P_{\mathrm{sar}}^{\mathrm{max}}, 0\leq P_{\mathrm{com}}(n)\leq P_{\mathrm{com}}^{\mathrm{max}},  \forall n \in \mathbb{N}_N,              &      & \\
	& \overline{\mathrm{C10}}: q(1)= q_{\mathrm{start}},\;q(n) \geq 0,  \forall n \in \mathbb{N}_N,       & & \\
	& \overline{\mathrm{C11}}: q(n+1) = q(n) - M\delta_t P_{\mathrm{tot}}(n),  \forall n \in { \mathbb{N}_{N-1}},       & &  
\end{alignat*}
where $\overline{z^{r}}(n)=z(n)+\overline{\delta_z^{r}}(n)$,
and the \ac{uav} position in each azimuth scan $n$, denoted by $\overline{\mathbf{u}^{r}}(n)$, is given by:
\begin{align}
\overline{\mathbf{u}^{r}}(n)=\left[l^r_n(\mathbf{z}),\overline{y}(n),\overline{z^{r}}(n)\right]^T,  \forall n \in \mathbb{N}_N,
\end{align}
where $l^{r}_n(\mathbf{z})=(c_2-c_1)\sum\limits_{k=1}^{n-1} z(k)-c_1\;z(n)+\overline{\delta_x^{r}}(n)$. Problem $\mathrm{(\overline{P.2})}$ is an $N$-dimensional non-convex non-monotonic \ac{minlp}. Next, we show that $(\overline{\mathrm{P.2}})$ is an upper bound to problem $(\mathrm{P.2})$, then, we provide the optimal solution of $\rm \overline{(P.2)}$ using monotonic optimization.

 \begin{theorem}\label{theo:P.2}
	Let $\overline{\mathbf{u}^{r}}$ and $\mathbf{u}^{r}$ be the optimal solutions to problems $(\overline{\mathrm{P.2}})$ and $\mathrm{(P.2)}$, respectively, then $\frac{L}{\Delta_s}C(\overline{\mathbf{u}^{r}})\geq C(\mathbf{u}^{r}), \forall r \in [0,1].$ In other words, the optimal solution to problem $(\overline{\mathrm{P.2}})$ provides an upper bound for the optimal solution to problem $\mathrm{(P.2)}$.
\end{theorem}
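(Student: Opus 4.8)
The plan is to show that, for any fixed number of azimuth scans $N$ and any $r\in[0,1]$, an optimal solution of $(\mathrm{P.2})$ can be mapped to a feasible point of $(\overline{\mathrm{P.2}})$ whose objective equals the optimal value of $(\mathrm{P.2})$; since $(\overline{\mathrm{P.2}})$ is a maximization problem, this already yields $\frac{L}{\Delta_{s}}C(\overline{\mathbf{u}^{r}})\ge C(\mathbf{u}^{r})$. First I would record the structure that $\mathrm{C1}$--$\mathrm{C5}$ force on any feasible point $\{\mathbf{z},\mathbf{x},\mathbf{q},\mathbf{P}_{\mathrm{sar}},\mathbf{P}_{\mathrm{com}}\}$ of $(\mathrm{P.2})$: $\mathrm{C3}$ and $\mathrm{C4}$ keep the $x$-position and altitude constant over the $M$ slots of each azimuth scan, $\mathrm{C5}$ keeps the radar power constant over each scan, and $\mathrm{C1}$--$\mathrm{C2}$ telescope the common $x$-position of scan $n$ to exactly $l_{n}^{r}(\mathbf{z})$; hence the only quantities that can vary \emph{within} a scan are $P_{\mathrm{com}}$ and the UAV--GS distance. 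I would then construct the candidate for $(\overline{\mathrm{P.2}})$ by sampling, for each $n\in\mathbb{N}_{N}$, the scan altitude $\tilde{z}(n)=z\big(1+(n-1)M\big)$ and scan radar power $\tilde{P}_{\mathrm{sar}}(n)=P_{\mathrm{sar}}\big(1+(n-1)M\big)$, by \emph{averaging} the communication power $\tilde{P}_{\mathrm{com}}(n)=\frac{1}{M}\sum_{m=1}^{M}P_{\mathrm{com}}\big(1+(n-1)M+m-1\big)$, and by sub-sampling the battery state $\tilde{q}(n)=q\big(1+(n-1)M\big)$; the $x$-positions of $(\overline{\mathrm{P.2}})$ are not free variables but are fixed to $l_{n}^{r}(\mathbf{z})$, so they automatically coincide with those of the given solution and require nothing further.

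Checking feasibility of this candidate is largely mechanical. Constraints $\overline{\mathrm{C6}}$ and $\overline{\mathrm{C7}}$ hold because the altitude bounds and $\mathrm{SNR}_{n}$ in (\ref{equation:SNR}) involve only the per-scan-constant altitude and radar power, which are unchanged; $\overline{\mathrm{C9}}$ holds since $\tilde{P}_{\mathrm{sar}}(n)$ is one of the original radar powers and $\tilde{P}_{\mathrm{com}}(n)$ is a convex combination of values in $[0,P_{\mathrm{com}}^{\mathrm{max}}]$. The energy bookkeeping $\overline{\mathrm{C10}}$--$\overline{\mathrm{C11}}$ is where the averaging is needed: with $\tilde{P}_{\mathrm{com}}(n)$ equal to the per-scan average, the energy drawn over scan $n$ under $(\overline{\mathrm{P.2}})$, namely $M\delta_{t}\big(\tilde{P}_{\mathrm{sar}}(n)+\tilde{P}_{\mathrm{com}}(n)+P_{\mathrm{prop}}\big)$, is exactly $\sum_{m=1}^{M}E_{1+(n-1)M+m-1}(\mathbf{P}_{\mathrm{sar}},\mathbf{P}_{\mathrm{com}})$, the energy consumed over the corresponding $M$ slots under the $(\mathrm{P.2})$ solution; telescoping $\mathrm{C10}$--$\mathrm{C11}$ over each block of $M$ slots then gives $\tilde{q}(n)=q\big(1+(n-1)M\big)\ge 0$ and the recursion in $\overline{\mathrm{C11}}$.

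The main obstacle is the backhaul constraint $\overline{\mathrm{C8}}$, since there the UAV--GS distance is replaced by its per-scan minimum (via (\ref{eq:lowerboundy})) \emph{and} $P_{\mathrm{com}}$ by its per-scan average, and one must verify that these two modifications together do not break feasibility. I would combine the distance relaxation $d_{n}(\overline{\mathbf{u}^{r}})\le d_{1+(n-1)M+m-1}(\mathbf{u}^{r})$ for all $m$, which follows from the choice of $\overline{y}(n)$, with the concavity of $t\mapsto\log_{2}(1+t)$ (Jensen), to obtain, writing $\cdot$ for the slot index $1+(n-1)M+m-1$ of scan $n$,
\begin{align*}
R_{n}\big(\tilde{\mathbf{P}}_{\mathrm{com}},\overline{\mathbf{u}^{r}}\big)
&= B_{c}\log_{2}\!\Big(1+\tfrac{1}{M}\sum_{m=1}^{M}\tfrac{P_{\mathrm{com}}(\cdot)\,\gamma}{d_{n}^{2}(\overline{\mathbf{u}^{r}})}\Big)
\;\ge\; B_{c}\log_{2}\!\Big(1+\tfrac{1}{M}\sum_{m=1}^{M}\tfrac{P_{\mathrm{com}}(\cdot)\,\gamma}{d_{\cdot}^{2}(\mathbf{u}^{r})}\Big)\\
&\ge\; \tfrac{1}{M}\sum_{m=1}^{M}B_{c}\log_{2}\!\Big(1+\tfrac{P_{\mathrm{com}}(\cdot)\,\gamma}{d_{\cdot}^{2}(\mathbf{u}^{r})}\Big)
\;\ge\; R_{\mathrm{min},n}(\overline{\mathbf{u}^{r}})+R_{\mathrm{sl}},
\end{align*}
where the last step uses $\mathrm{C8}$ of $(\mathrm{P.2})$ at each slot of scan $n$ together with the fact that $R_{\mathrm{min}}$ in (\ref{equation:Rmin}) depends only on the (unchanged) altitude, so all these per-slot minimum rates equal $R_{\mathrm{min},n}(\overline{\mathbf{u}^{r}})$. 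Finally, since $z^{r}$ is constant over each scan and $M\Delta_{s}=L$, the objective at the constructed point satisfies $\frac{L}{\Delta_{s}}C(\tilde{\mathbf{u}}^{r})=L(c_{2}-c_{1})\sum_{n=1}^{N}z^{r}\big(1+(n-1)M\big)=C(\mathbf{u}^{r})$, so the optimal value of $(\overline{\mathrm{P.2}})$ is at least $C(\mathbf{u}^{r})$, uniformly in $r\in[0,1]$, which is the claim. I expect the real work to be in making the joint distance-shrinking / power-averaging argument for $\overline{\mathrm{C8}}$ watertight (concavity plus the minimum-distance inequality), and in double-checking that the very same per-scan averaging keeps the energy recursion $\overline{\mathrm{C10}}$--$\overline{\mathrm{C11}}$ satisfied.
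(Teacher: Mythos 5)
Your proposal is correct, but it follows a genuinely different route than the paper. The paper first introduces an intermediate problem $(\mathrm{P.2'})$ obtained by relaxing constraint $\mathrm{C8}$ in the $NM$-dimensional formulation (replacing $(y(n)-g_y)^2$ by its per-scan minimum, which enlarges the feasible set because the multiplying factor $A2^{\alpha z^r(n)}-1$ is nonnegative), and then asserts that $(\mathrm{P.2'})$ is \emph{equivalent} to the $N$-dimensional problem $(\overline{\mathrm{P.2}})$ on the grounds that all optimization variables may be taken constant within each scan; the chain $C(\mathbf{u}^{r})\le C(\mathbf{u}^{r}_2)=\frac{L}{\Delta_s}C(\overline{\mathbf{u}^{r}})$ then yields the theorem. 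You instead prove only the one inequality actually needed, by mapping the optimum of $(\mathrm{P.2})$ to an explicit feasible point of $(\overline{\mathrm{P.2}})$ with the same objective value: per-scan sampling of $z$, $P_{\mathrm{sar}}$, $q$ (legitimate by $\mathrm{C3}$--$\mathrm{C5}$ and the telescoped $\mathrm{C1}$--$\mathrm{C2}$, which make $l^{r}_n(\mathbf{z})$ coincide with the robust $x$-position of the given solution), per-scan averaging of $P_{\mathrm{com}}$ so that the energy recursion $\overline{\mathrm{C10}}$--$\overline{\mathrm{C11}}$ telescopes exactly, and the minimum-distance bound combined with Jensen's inequality for $\overline{\mathrm{C8}}$; the identity $M\Delta_s=L$ gives the objective match. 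Both arguments rest on the same two pillars (the per-scan minimum of the UAV--GS distance and the per-scan constancy enforced by $\mathrm{C3}$--$\mathrm{C5}$), but your construction makes rigorous a point the paper leaves implicit: $P_{\mathrm{com}}$ is \emph{not} constrained to be constant within a scan in $(\mathrm{P.2})$ or $(\mathrm{P.2'})$, so the paper's equivalence claim tacitly relies on a without-loss-of-generality argument for $P_{\mathrm{com}}$, which your averaging-plus-concavity step supplies explicitly. What the paper's route buys is the (stronger, though unneeded) equivalence of $(\overline{\mathrm{P.2}})$ with a relaxation of $(\mathrm{P.2})$; what your route buys is a shorter, fully self-contained one-directional argument.
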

\begin{proof}
	Please refer to Appendix \ref{app:P.2}.
\end{proof}
\ifonecolumn \else \vspace{-6mm} \fi
\subsubsection{Reformulation as Monotonic Optimization Problem}
To provide the optimal solution to problem $\rm (\overline{P.2})$, we write the problem in the canonical form of a monotonic optimization problem given by \cite{1,2}: 
\begin{equation}\label{eq:canonicalform}
	\max \{ \mathcal{C}(\mathbf{x}) | \mathbf{x} \in \mathcal{G} \cap \mathcal{H}   \},
\end{equation}
where $\mathcal{C}: \mathbb{R}^{N \times 1}_+ \rightarrow \mathbb{R}$ is the cost function which is increasing w.r.t. $\mathbf{x}\in \mathbb{R}^{N\times1}$, $\mathcal{G} \subset [\mathbf{0},\; \mathbf{b}] \subset  \mathbb{R}^{N \times 1}_+$ is a compact normal set with non-empty interior, and $\mathcal{H}$ is a conormal set on the box $[\mathbf{0}, \; \mathbf{b}]$\footnote{Rigorous definitions of increasing functions, boxes,  normal sets, and conormal sets are provided in the monotonic optimization theory guides \cite{1,2}.}. 
As a first step, we rewrite constraint $\overline{{\rm C8}}$ as follows: 
\begin{equation} \label{eq:simplified}
	\overline{{\rm C8}}: \left( A2^{\alpha \overline{z^{r}}(n)} -1\right)d^2_n(\overline{\mathbf{u}^{r}})-P_{\mathrm{com}}(n)\gamma\leq 0,  \forall n \in \mathbb{N}_N.
\end{equation}
Constraint $\overline{\mathrm{C8}}$ is non-convex and non-monotonic w.r.t. optimization variable $\mathbf{z}$ and $\mathbf{P_{\rm com}}$. Thus, we transform $\overline{\mathrm{C8}}$ into a set of monotonic constraints by first writing it as a difference of monotonic functions.
\begin{proposition}
	Constraint $\overline{\rm C8}$ is a difference of two monotonic functions  $f(z(n)) -g(z(n),P_{\rm com}(n))$, where $f(z(n))$ is given by:
	\ifonecolumn
	\begin{gather}
		f(z(n))=\left(2^{\alpha \overline{z^{r}}(n)}-1\right) \Bigg((c_2-c_1)^2 \left(\sum_{k=1}^{n-1}z(k)\right)^2  + 2z(n) \Big(c_1[g_x]^+-c_1\overline{\delta_x^{r}}(n) +\overline{\delta_z^{r}}(n)\Big) +\notag\\ \left(1+{c_1}^2\right) z^2(n)+ 2(c_2-c_1)\sum_{k=1}^{n-1}z(k) [-g_x]^++ \left(g_x-\overline{\delta_x^{r}}(n)\right)^2 +\left(g_z-\overline{\delta_z^{r}}(n)\right)^2 +\left(\overline{y}(n)-g_y\right)^2\Bigg),
	\end{gather}
		and $g(z(n),P_{\rm com}(n))$ is given by :
		\begin{gather} 
			g(z(n),P_{\rm com}(n))= \gamma P_{\rm com}(n)+\left(2^{\alpha \overline{z^{r}}(n)}-1\right) \times  \Bigg(2(c_1[-g_x]^++g_z) z(n)   + \notag \\ (c_2-c_1)\left(2[g_x]^++c_1\;z(n)\right)\sum_{k=1}^{n-1}z(k)\Bigg), n \in \mathbb{N}_N. \label{eq:functiong}
		\end{gather}
	\else
\begin{gather}
		f(z(n))=\left(2^{\alpha \overline{z^{r}}(n)}-1\right) \Bigg((c_2-c_1)^2 \left(\sum_{k=1}^{n-1}z(k)\right)^2 \notag\\ + 2z(n) \Big(c_1[g_x]^+-c_1\overline{\delta_x^{r}}(n) +\overline{\delta_z^{r}}(n)\Big)+\left(1+{c_1}^2\right) z^2(n)\notag \\+2(c_2-c_1)\sum_{k=1}^{n-1}z(k) [-g_x]^++C_n\Bigg),
\end{gather}
	where $C_n=\left(g_x-\overline{\delta_x^{r}}(n)\right)^2 +\left(g_z-\overline{\delta_z^{r}}(n)\right)^2 +\left(\overline{y}(n)-g_y\right)^2$ and $g(z(n),P_{\rm com}(n))$ is given by :
		\begin{gather} 
		g(z(n),P_{\rm com}(n))= \gamma P_{\rm com}(n)+\left(2^{\alpha \overline{z^{r}}(n)}-1\right)\notag \\ \times  \Bigg( (c_2-c_1)\left(2[g_x]^++c_1\;z(n)\right)\sum_{k=1}^{n-1}z(k) \notag \\ +2(c_1[-g_x]^++g_z) z(n)  
		\Bigg). \label{eq:functiong}
	\end{gather}

\fi
\end{proposition}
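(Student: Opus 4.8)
The plan is to start from $\overline{\rm C8}$ in the form \eqref{eq:simplified}, insert the explicit coordinates of the per-scan \ac{uav} position $\overline{\mathbf{u}^{r}}(n)=[\,l^{r}_n(\mathbf{z}),\,\overline{y}(n),\,\overline{z^{r}}(n)\,]^T$ with $l^{r}_n(\mathbf{z})=(c_2-c_1)\sum_{k=1}^{n-1}z(k)-c_1z(n)+\overline{\delta_x^{r}}(n)$ and $\overline{z^{r}}(n)=z(n)+\overline{\delta_z^{r}}(n)$, expand the squared distance $d_n^2(\overline{\mathbf{u}^{r}})=(l^{r}_n(\mathbf{z})-g_x)^2+(\overline{y}(n)-g_y)^2+(\overline{z^{r}}(n)-g_z)^2$, and then sort the resulting polynomial in $(z(1),\dots,z(n))$, multiplied by the nonnegative prefactor of \eqref{eq:simplified} and then shifted by $-\gamma P_{\rm com}(n)$, into one piece that is increasing on the optimization box $[\mathbf{0},\mathbf{b}]$ (this will be $f$) and one piece that is increasing after an overall sign change (this will be $g$).

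After multiplying out, the terms of $d_n^2(\overline{\mathbf{u}^{r}})$ fall into three groups: (i) terms that are automatically increasing in the nonnegative altitudes — the squares $(c_2-c_1)^2(\sum_{k<n}z(k))^2$, $(1+c_1^2)z^2(n)$, the constant $(g_x-\overline{\delta_x^{r}}(n))^2+(g_z-\overline{\delta_z^{r}}(n))^2+(\overline{y}(n)-g_y)^2$, and the part of the $z(n)$-linear contribution coming from $-c_1\overline{\delta_x^{r}}(n)\ge 0$ and $\overline{\delta_z^{r}}(n)\ge 0$; (ii) the negative cross term proportional to $-z(n)\sum_{k<n}z(k)$; and (iii) the $g_x$-linear contributions in $z(n)$ and in $\sum_{k<n}z(k)$, whose sign is not fixed. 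I would then invoke the identity $-g_x=[-g_x]^+-[g_x]^+$ to split each group-(iii) contribution into a nonnegative part, attached to $f$, and a nonpositive part; the nonpositive parts, together with the group-(ii) cross term, become increasing after an overall sign reversal and are therefore collected, along with $\gamma P_{\rm com}(n)$, into $g$. Multiplying everything by the prefactor of \eqref{eq:simplified} and rearranging produces exactly the stated $f$ and $g$, so that $\overline{\rm C8}$ reads $f(z(n))-g(z(n),P_{\rm com}(n))\le 0$; in particular the $[g_x]^+$ appearing in $f$ and the $[-g_x]^+$ appearing in $g$ originate from this positive/negative split, as do the signs of the cross- and offset-dependent terms.

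The step I expect to be the main obstacle is verifying that $f$ and $g$ are \emph{genuinely} increasing on the whole box $[\mathbf{0},\mathbf{b}]$, not merely that their difference reproduces $\overline{\rm C8}$. For $f$ one must check that the coefficient $c_1[g_x]^+-c_1\overline{\delta_x^{r}}(n)+\overline{\delta_z^{r}}(n)$ multiplying $z(n)$ is nonnegative — which uses $c_2>c_1>0$, $\overline{\delta_x^{r}}(n)\le 0$ and $\overline{\delta_z^{r}}(n)\ge 0$ from \textbf{Proposition}~\ref{proposition:Near-Far-Range} and the construction in \textbf{Proposition}~\ref{theo} — that the remaining $\mathbf{z}$-dependent terms are sums and products of nonnegative increasing functions, and that the exponential prefactor is nonnegative and increasing in $z(n)$, so that the product remains increasing; the argument for $g$ is analogous, with the extra observation that $g$ is increasing in $P_{\rm com}(n)$ because of the $\gamma P_{\rm com}(n)$ summand. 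Thus the decomposition is valid precisely because the compensation offsets $\overline{\delta_x^{r}}(n),\overline{\delta_z^{r}}(n)$ have fixed signs and the box restricts the altitudes to nonnegative values.
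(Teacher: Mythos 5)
Your proposal is correct and follows essentially the same route as the paper, whose proof is simply a direct verification that $f-g$ reproduces the left-hand side of $\overline{\mathrm{C8}}$ in the form (\ref{eq:simplified}) and that both $f$ and $g$ are increasing; your explicit expansion of $d_n^2(\overline{\mathbf{u}^{r}})$, the split $-g_x=[-g_x]^+-[g_x]^+$, and the sign arguments based on $\overline{\delta_x^{r}}(n)\leq 0$, $\overline{\delta_z^{r}}(n)\geq 0$ just carry out that verification constructively. No substantive difference in approach or gap to report.
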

\begin{proof}
	The result can be easily obtained by
	verifying that functions $f$ and $g$ are increasing  \ac{wrt} $z$ and $f(z(n)) -g(z(n),P_{\rm com}(n))$ is equal to the left-hand side of (\ref{eq:simplified}).
\end{proof}
Constraint $\overline{\rm C8}$  is a difference of monotonic functions $f$ and $g$, thereby, as proven in \cite{1},  we can equivalently  represent it with the following constraints:  
\begin{align} \label{eq:3constraints}
	&\overline{\mathrm{ C8a}}: f(z(n)) +t(n)\leq f(z_{\rm max}),  \forall n \in \mathbb{N}_N,\\
	&\overline{\mathrm{ C8b}}:    g(z(n),P_{\rm com}(n)) +t(n) \geq f(z_{\rm max}),  \forall n \in \mathbb{N}_N,\\
	&\overline{\mathrm{ C8c}}:    0 \leq t(n) \leq f(z_{\rm max})-f(0),  \forall n \in \mathbb{N}_N,
\end{align}
where $\mathbf{t} =[t(1), ...,t(N)]^T\in \mathbb{R}^{N \times 1}$ is a vector of auxiliary optimization variables.  
Problem $(\overline{\rm P.2})$ can now be written in the canonical form of a monotonic optimization problem as follows:  
\begin{alignat*}{2}
	&\!(\overline{\rm P.3}):\max_{\mathbf{z}, \mathbf{t},\mathbf{q}, \mathbf{P_{\mathrm{sar}}}, \mathbf{P_{\mathrm{com}}}} \frac{L}{\Delta_s}C(\overline{\mathbf{u}^{r}})  & \qquad&  \\
	&\text{s.t.} \hspace{3mm}\mathrm{\overline{C6}, \overline{C7}, \overline{C8a}, \overline{C8b}, \overline{C8c}, \overline{C9}, \mathrm{\overline{ C10}}, \overline{C11}}, &   & \;  
\end{alignat*}
where the feasible set is the intersection of normal set $\mathcal{G}$ that is spanned by constraints ($\rm \overline{C6}$, $\rm \overline{C7}$,  $\rm\overline{C8a}$, $\rm\overline{C8c}$, $\rm \overline{C9}$, $\rm \overline{C10}$, $\rm \overline{C11}$) and conormal set $\mathcal{H}$ that is spanned by constraints $\rm (\overline{C6}, \overline{C8b})$. The optimal solution of problem $(\overline{\rm P.3})$ lies at the upper boundary of the normal set $\mathcal{G}$, denoted by $\partial^+\mathcal{G}$ \cite{2}. This region is not known a priori but the sequential  polyblock approximation algorithm can be used to approach this set from above. 
 \ifonecolumn \else \vspace{-5mm} \fi
\subsubsection{Polyblock Approximation Algorithm}

\begin{algorithm}
	\caption{Polyblock Outer Approximation Algorithm }\label{algorithm2}
	\begin{algorithmic}[1] 
		\State Initialization: Set polyblock $\mathcal{P}_{1}$ with vertex $\mathbf{v}_{\rm max} = [\mathbf{z}_{\rm max},\mathbf{t}_{\rm max},\mathbf{P_{\rm com}^{\rm max }}]^T$, vertex set $\mathcal{T}^{(1)}=\{ \mathbf{v}_{\rm max}\}$, small positive number $\epsilon\geq0$, objective function value $ \rm CBV_0=- \infty$, and iteration number $k=0$.
		\State \textbf{repeat}
		\State\hspace{\algorithmicindent} $ k=k+1$.\Comment{{ Increment iteration index}}
		\State\hspace{\algorithmicindent} Select $ \mathbf{v}_k=\argmax_{ \mathbf{v} \in \mathcal{T}^{(k)}}(C(\mathbf{v}))$\Comment{{Select the best vertex in set $\mathcal{T}^{(k)}$ }}
		\State\hspace{\algorithmicindent}  Compute the projection $\Phi(\mathbf{v}_{k})$ on $\partial \mathcal{G}^+$ using \textbf{Algorithm} \ref{algorithm3}
		\State \hspace{\algorithmicindent}\textbf{if} $\Phi(\mathbf{v}_{k})=\mathbf{v}_{k}$, i.e., $\mathbf{v}_{k} \in \mathcal{G}$ \textbf{then}
		\State \hspace{\algorithmicindent} \hspace{\algorithmicindent}  $\mathbf{s}_{k}=\mathbf{v}_{k}$ and $\mathrm{CBV}_{k}=C(\Phi(\mathbf{v}_{k}))$
		\State \hspace{\algorithmicindent}\textbf{else}
		\State \hspace{\algorithmicindent}\hspace{\algorithmicindent} \textbf{if}  $\Phi(\mathbf{v}_{k}) \in \mathcal{G} \cap \mathcal{H}$ and $C(\Phi(\mathbf{v}_{k}))\geq \mathrm{CBV}_{k-1}$  \textbf{then} \Comment{{ Better solution is found}}
		\State \hspace{\algorithmicindent}\hspace{\algorithmicindent} \hspace{\algorithmicindent} Let the current best solution $\textbf{s}_{k}=\Phi(\mathbf{v}_{k})$ and $\mathrm{CBV}_{k}=C(\Phi(\mathbf{v}_{k}))$ \Comment{{ Update best solution}}
		\State  \hspace{\algorithmicindent} \hspace{\algorithmicindent}  \textbf{else} 
		\State\hspace{\algorithmicindent} \hspace{\algorithmicindent} \hspace{\algorithmicindent} $\textbf{s}_{k}=\textbf{s}_{k-1}$ and $\mathrm{CBV}_{k}=\mathrm{CBV}_{k-1}$
		\State\hspace{\algorithmicindent} \hspace{\algorithmicindent} \textbf{endif}
		\State\hspace{\algorithmicindent}\hspace{\algorithmicindent} Set $\mathcal{T}^{(k+1)}=\{\mathcal{T}^{(k)}\setminus \mathcal{T}^*\}\cup \{ \mathbf{v}^{i}=\mathbf{v}+({(\Phi(\mathbf{v}_k))}_i-v_i)\mathbf{e}_{i} | \mathbf{v} \in \mathcal{T}^{*},i\in \{1,..,3N\} \}$,
			\Statex\hspace{\algorithmicindent}\hspace{\algorithmicindent}  where  $\mathcal{T}^{*}= \{ \mathbf{v} \in \mathcal{T}^{(k)} | \mathbf{v} > \Phi(\mathbf{v}_{k})\}$
		\State\hspace{\algorithmicindent} \hspace{\algorithmicindent} Remove from $\mathcal{T}^{(k)} $ improper vertices and vertices $\{ \mathbf{v} \in\mathcal{T}^{(k+1)}|  \mathbf{v} \notin \mathcal{H}\}$ \Comment{{ Discard unnecessary vertices from set $\mathcal{T}^{(k+1)}$}}
		\State\hspace{\algorithmicindent} \textbf{endif}
		\State\hspace{\algorithmicindent} \textbf{until} $|C(\mathbf{v}_{k})-\mathrm{CBV}_k|\leq \epsilon$\Comment{{Verify convergence}}
		\State\hspace{\algorithmicindent} \textbf{return} solution $\mathbf{s}_k$ 
	\end{algorithmic}
\end{algorithm}

In \textbf{Algorithm} \ref{algorithm2}, we provide all steps of the proposed polyblock outer approximation algorithm that generates an $\epsilon$-approximate  optimal solution to problem $\rm (\overline{P.3})$. In addition, Figure \ref{fig:MO}  depicts the operation of the polyblock algorithm where, for simplicity of illustration, we consider a case with only two dimensions
$x$ and $y$. The optimization variables of problem $(\overline{\rm P.3})$  are divided into outer optimization variables, that are collected in vertex vector $\mathbf{v}=\big[{\overline{\mathbf{z}^{r}}\;}^T$, $\mathbf{t}^T$, ${(\mathbf{P}_{\rm com})}^T\big]^T\in \mathbb{R}^{3N\times1}$, and inner optimization variables, $\mathbf{q}$ and  $\mathbf{P_{\mathrm{sar}}}$. Hereinafter, as the objective function depends only on $ \mathbf{z}$, for ease of notation, $C(\mathbf{v})$  refers to $ \frac{L}{\Delta_s} C([v(1),...,v(N)]^T)$.

\begin{figure*}[] 
	\centering
	\begin{tabular}{ccccc}
		\includegraphics[width=0.23\linewidth]{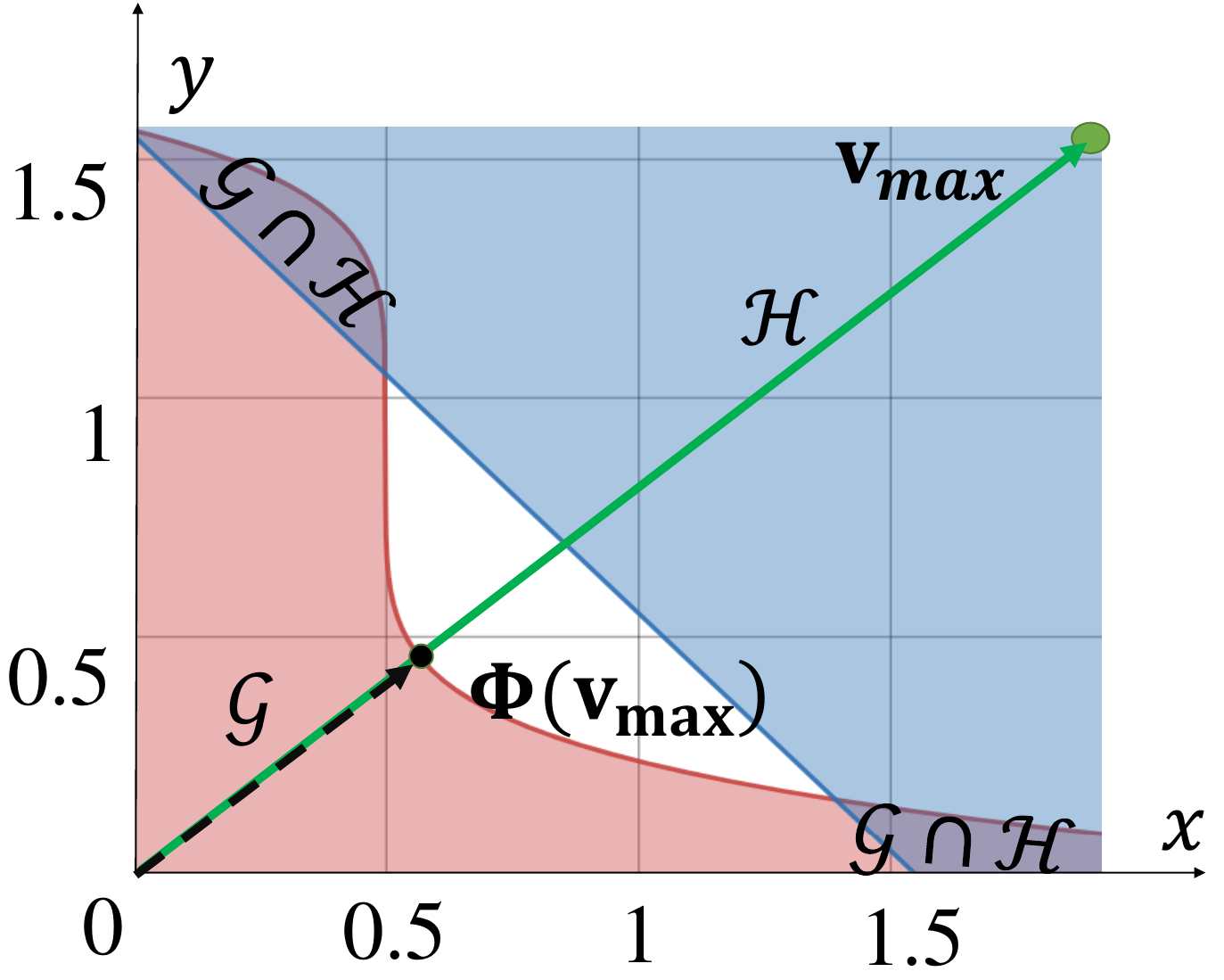}  &
		\includegraphics[width=0.23\linewidth]{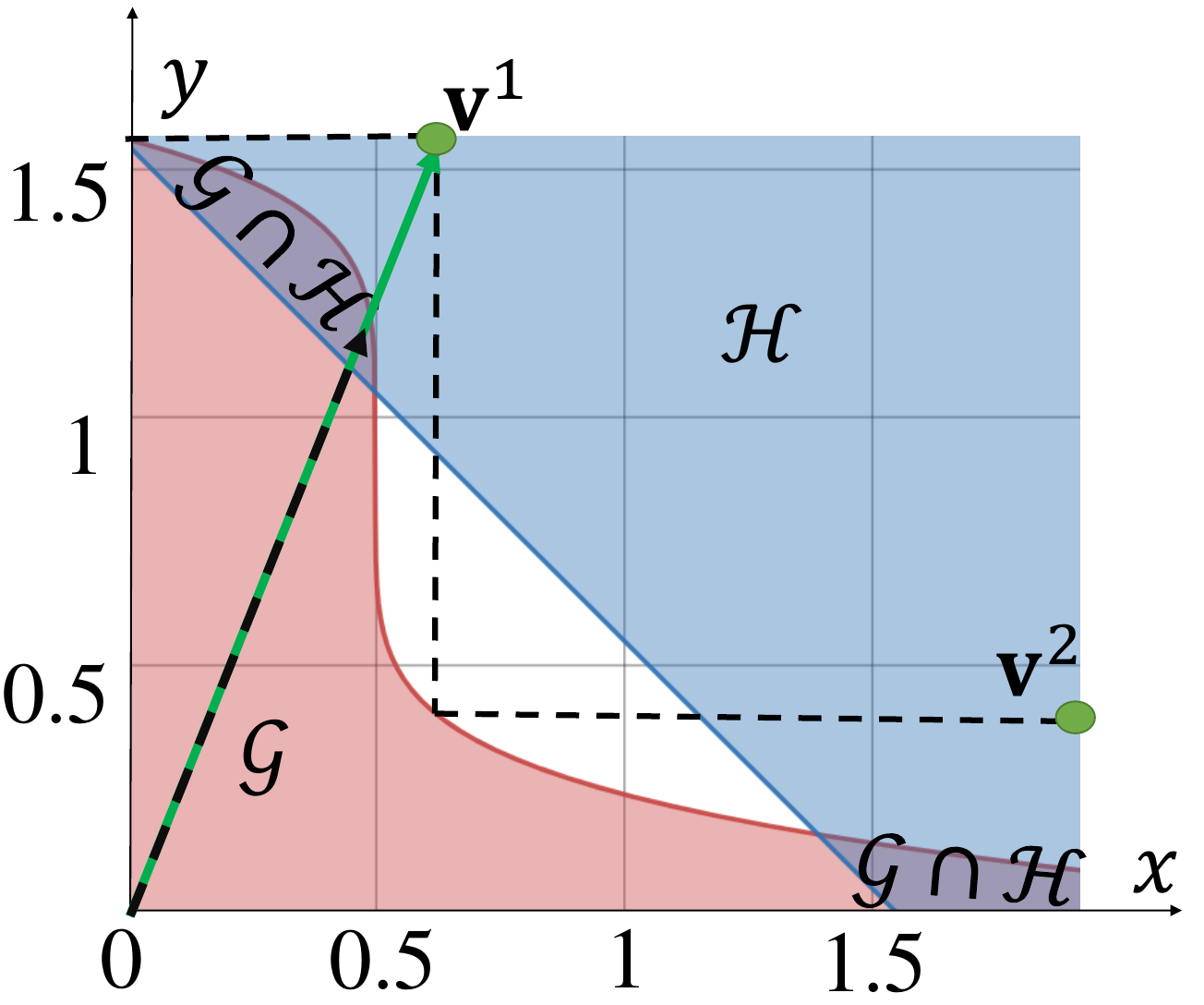}&
		\includegraphics[width=0.23\linewidth]{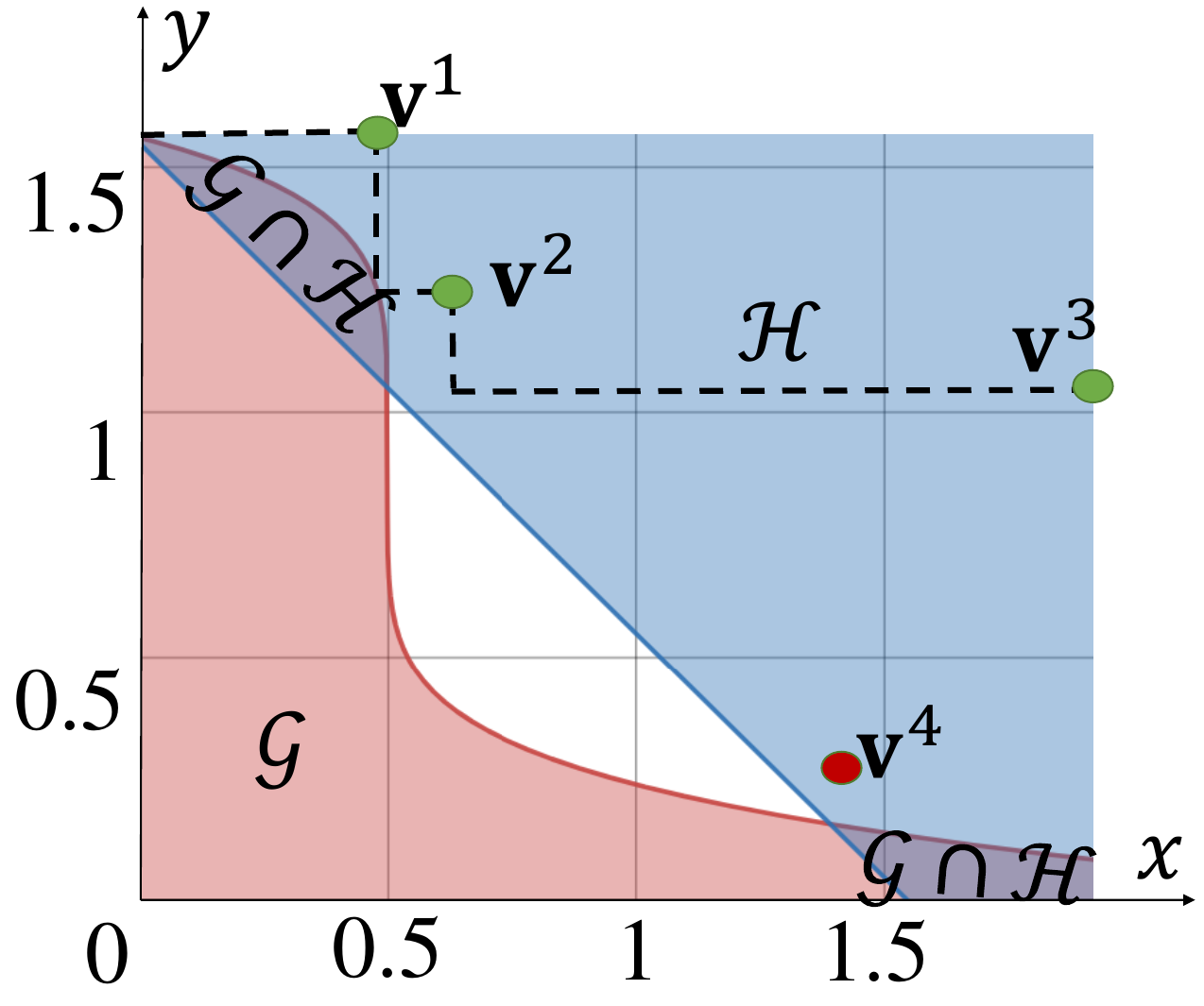}&
		\includegraphics[width=0.23\linewidth]{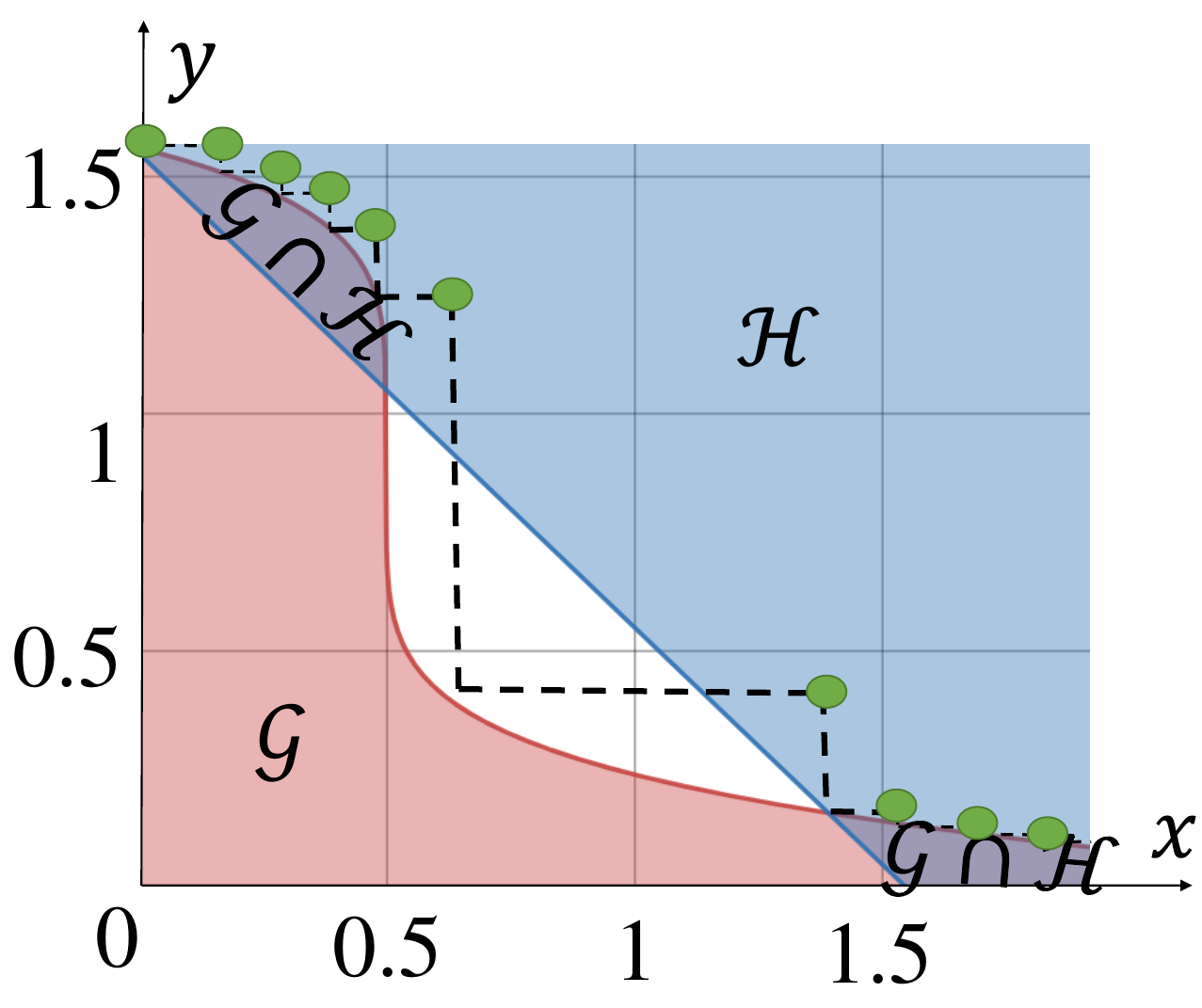} \\
		\textbf{(a)}& \textbf{(b)}& \textbf{(c)} &\textbf{(d)}  \\
	\end{tabular}
	\caption{Illustration of polyblock outer approximation algorithm. (a) Feasible set for a monotonic maximization problem where the normal and conormal sets are denoted by $\mathcal{G}$ and $\mathcal{H}$, respectively. (b) Projection of $\mathbf{v}_{\rm max}$ on normal set $\mathcal{G}$. (c) Projection of the best vertex $\mathbf{v}^1=\argmax\limits_{\mathbf{v} \in \mathcal{T}^{(1)}}\{ C(\mathbf{v})\}$ on normal set $\mathcal{G}$ and removal of improper vertex $\mathbf{v}^4$ (dominated by $\mathbf{v}^3$). (d)  Representation of a tighter polyblock obtained after several iterations.  }
	\label{fig:MO}
\end{figure*}  
As shown in Figure \ref{fig:MO}a, first, we { initialize} a polyblock $\mathcal{P}^{(1)}$ with corresponding vertex set $\mathcal{T}^{(1)}$, which initially contains vertex $\mathbf{v}_{\rm max} =[\mathbf{z}_{\rm max},\mathbf{t}_{\rm max},\mathbf{P_{\rm com}^{\rm max }}]^T \in \mathbb{R}^{3N\times 1}$, where $\mathbf{z}_{\rm max}=[z_{\rm max}-\overline{\delta_z^{r}}(1),...,z_{\rm max}-\overline{\delta_z^{r}}(N)]\in \mathbb{R}^{1\times N}$, $\mathbf{t}_{\rm max}=[f(z_{\rm max})-f(0), ...,f(z_{\rm max})-f(0)]\in \mathbb{R}^{1\times N}$, and $\mathbf{P_{\rm com}^{\rm max }}=[P_{\rm com}^{\rm max }, ...,P_{\rm com}^{\rm max } ]\in \mathbb{R}^{1\times N}$.   The projection of the vertex $\mathbf{v}_{\rm max}$ onto the normal set $\mathcal{G}$, denoted by $\Phi(\mathbf{v}_{\rm max})\in \mathbb{R}^{3N \times 1}$, is calculated and used to construct a tighter polyblock $\mathcal{P}^{(2)}$, see Figure  \ref{fig:MO}b. The new vertex set is  calculated as  $\mathcal{T}^{(2)}=\{\mathcal{T}^{(1)}\setminus \mathbf{v}_{\rm max}\} \cup \{\mathbf{v}^{i}=\mathbf{v}_{\rm max}+\left( {\left(\Phi( \mathbf{v}_{\rm max}) \right)}_i-{(\mathbf{v}_{\rm max})}_{i}\right)\mathbf{e}_i, i \in \{1, ..., 3N\}\}$, where ${\left(\Phi( \mathbf{v}_{\rm max}) \right)}_i$ and ${(\mathbf{v}_{\rm max})}_{i}$ are the $i^{th}$-elements of vectors  $\Phi(\mathbf{v}^{\rm max})$ and $\mathbf{v}_{\rm max}$, respectively.  Here, $\mathbf{e}_i$  is the unit vector whose $i^{th}$ entry is equal to $1$ and all other entries are equal to $0$. Then, in iteration $k$, the vertex that maximizes the objective function of problem $\mathrm{(\overline{P.3})}$ is selected to create the next polyblock, i.e., $\mathbf{v}_k=\argmax\limits_{\mathbf{v} \in \mathcal{T}^{(k)}}\{ C(\mathbf{v})\}$. Once a feasible projection (i.e., $\Phi(\mathbf{v}_k) \in \mathcal{G}\cap\mathcal{H}$) that improves the objective function of problem $\rm (\overline{P.3})$ is located, the corresponding vector $\mathbf{s}_k=\Phi(\mathbf{v}_k)$ and its objective function  $\mathrm{CBV}_k=C(\Phi(\mathbf{v}_k))$ are recorded.  To speed up convergence and save memory, improper vertices\footnote{Given a vertex set $\mathcal{T},$ improper vertices are vertices that are dominated by other vertices in  $ \mathcal{T}.$ See Figure \ref{fig:MO}c and \cite{1} for further details.} and vertices that are not in set $\mathcal{H}$ are removed in each iteration, e.g., vertex $\textbf{v}^4$ in Figure \ref{fig:MO}c.
The same procedure is repeated multiple times, such that a sequence of polyblocks converges from above to the feasible set, i.e., $\mathcal{P}^{(1)}\supset\mathcal{P}^{(2)}\supset...\supset\mathcal{P}^{(k)}\supset \mathcal{G}\cap\mathcal{H}$, see Figure \ref{fig:MO}d. The termination criterion of the proposed algorithm  is  $|C(\mathbf{v}_k)-\mathrm{CBV}_k|\leq \epsilon$, where $\epsilon$ is the error tolerance.

\begin{algorithm}
	\caption{Bisection Search to Compute the Upper Boundary Point }\label{algorithm3}
	\begin{algorithmic}[1] 
		\State Initialization: Set vertex $\mathbf{v}$, $ \lambda_{\rm min}=0$, $ \lambda_{\rm max}=1$, and error tolerance $0< \epsilon \ll 1$.
		\State \textbf{repeat}
		\State\hspace{\algorithmicindent} $ \lambda= \frac{(\lambda_{\rm max}+\lambda_{\rm min})}{2}$ \Comment{{ Bisection middle point}}
		\State\hspace{\algorithmicindent}\textbf{if} $\lambda \mathbf{v} $ is feasible for problem $\mathrm{(\overline{P.3})}$  \textbf{then}$ $ \State\hspace{\algorithmicindent}\hspace{\algorithmicindent}Set $ \lambda_{\rm min}=\lambda$ \Comment{{ Search in [$\lambda,\lambda_{\rm max}$]}}
		\State\hspace{\algorithmicindent}\textbf{else}
		\State\hspace{\algorithmicindent}\hspace{\algorithmicindent}Set $\lambda_{\rm max}=\lambda$\Comment{{ Search in [$\lambda_{\rm min},\lambda$]}}\
		\State\hspace{\algorithmicindent}\textbf{end if}
		\State \textbf{until} $\lambda_{\rm max}-\lambda_{\rm min}\leq \epsilon$\Comment{{ Verify convergence}}
		\State \textbf{return} $\Phi(\mathbf{v})=\lambda\mathbf{v}$
	\end{algorithmic}
\end{algorithm}

In Line $4$ of \textbf{Algorithm} \ref{algorithm2}, we compute the projection of vertex $ \mathbf{v}_k=\argmax\limits_{ \mathbf{v} \in \mathcal{T}^{(k)}}C(\mathbf{v})$ on $ \partial \mathcal{G}^+ $, denoted by $\Phi(\mathbf{v}_{k})\in \mathbb{R}^{3N \times 1} $. This projection is not trivial and is equivalent to solving a one-dimensional optimization problem: $\max\limits_{0\leq \lambda \leq 1}\{\lambda  \mathbf{v}_k \in \mathcal{G}\}$. To compute $\Phi(\mathbf{v}_{k})$, we perform a bisection search using \textbf{Algorithm} \ref{algorithm3}. In each iteration of   \textbf{Algorithm} \ref{algorithm3}, checking if a vector belongs to the normal set $\mathcal{G}$ is equivalent  to solving problem $(\overline{\rm P.3})$ when fixing the outer optimization variables, i.e., vectors $\mathbf{z}$, $\mathbf{t}$, and $\mathbf{P}_{\rm com}$. This results in the following feasibility problem: 
\begin{alignat*}{2}
	&\!\mathrm{(\overline{P.4})}:\max_{\mathbf{q}, \mathbf{P_{\mathrm{sar}},}}\hspace{3mm} 1       & \qquad&  \\
	&\text{s.t.} \hspace{3mm} \mathrm{ \overline{C7}, \overline{C9}, \overline{C10}, \overline{C11}}. &   & \;
\end{alignat*}
Problem $\mathrm{(\overline{P.4})}$ is a convex optimization problem and a feasible solution can be obtained by, e.g., CVX \cite{b10}.  Once the projection point is computed, the new polyblock is generated as described previously and these steps are repeated until the optimal solution is located.\par
The computational complexity of the  upper bound on the optimal solution, provided by \textbf{Algorithm} \ref{algorithm2} and \textbf{Algorithm} \ref{algorithm3}, is exponential  in the number of vertex elements, i.e., the time complexity order is $O(2^{3N})$ \cite{2}. If problem $\rm \overline{(P.2)}$ is feasible, then \textbf{Algorithm} \ref{algorithm2} converges to its global optimal solution as the polyblock approximation algorithm has been proven to converge to an $\epsilon$-optimal solution \cite{2}. Despite  the high complexity of the derived upper bound, it can be used to assess the performance of the proposed low-complexity sub-optimal scheme. 
\ifonecolumn
\begin{table}[]
	\centering
	\caption{	 System parameters  \cite{table2,table3,response1}.}
		\begin{tabular}{?c|c?c|c?}
	\Xhline{3\arrayrulewidth}
	Parameter                & 	Value &	 Parameter      & 		Value  \\ \Xhline{3\arrayrulewidth}
	$L$    & 	60 m                          			&	$\mathbf{g}=[g_x,g_y,g_z]^T$&	  $[0,0,5]^T$ m  \\ \hline
	$M$         & 	100                               			&	$v$    &	 5 m/s     \\ \hline
	 $z_{\rm max}$ and $z_{\rm min}$ &	 100 m and  2 m    		&	 $q_{\rm start}$      & 	 19.44 Wh       \\ \hline
	 $P_{\rm sar}^{\rm max}$       & 	 46 dBm          &	$P_{\rm prop}$  &	 450 W   \\ \hline
	$P_{\rm com}^{\rm max}$& 	 40 dBm   	     	&	 $P_0$&	 79.86 W   \\ \hline
	$\theta_d$ and $\theta_{\rm 3db}$  &	 45$^o$  and 30$^o$    					&	$P_I$ & 	420.6 W\\ \hline  
	$\tau_p$ and $\mathrm{PRF}$        & 	 1 $\mu$s  and 100 Hz     						&	$W_u$ &	 56.5 N  	 \\ \hline
	$B_r$        &  	 100 MHz     				&	$U_{\mathrm{tip}}$&	 120 m/s \\ \hline 
	$B_c$	&  	 100 MHz 		&	$\rho$ &	 1.225 kg/m$^3 $\\ \hline 
	$f$            & 	2 GHz     						&	$A$ &	0.503 m$^2$\\ \hline 
	$\mathrm{SNR}_{\rm min}$ &	 20 dB  					&	$d_0$&	 0.6 \\\hline 
	$R_{\rm sl}$ &	 1 kbit 							&	 $s$&	 0.05 \\ \hline 
	$\gamma$      &	  20 dB    							&	 $\beta$    &   $10^4$ W$^{-1}$ m$^3 $\\ \Xhline{3\arrayrulewidth}
\end{tabular}
	\label{Tab:System}
\end{table}

\else
\begin{table}[]
	\centering
	\caption{ System parameters { \cite{table2,table3,response1}.}}
	\begin{adjustbox}{width=\columnwidth,center}
			\begin{tabular}{?c|c?c|c?}
		\Xhline{3\arrayrulewidth}
		Parameter                & 	Value &	 Parameter      & 		Value  \\ \Xhline{3\arrayrulewidth}
		$L$    & 	60 m                          			&	$\mathbf{g}=[g_x,g_y,g_z]^T$&	  $[0,0,5]^T$ m  \\ \hline
		$M$         & 	100                               			&	$v$    &	 5 m/s     \\ \hline
		 $z_{\rm max}$ and $z_{\rm min}$ &	 100 m and  2 m    		&	 $q_{\rm start}$      & 	 19.44 Wh       \\ \hline
		 $P_{\rm sar}^{\rm max}$       & 	 46 dBm          &	$P_{\rm prop}$  &	 450 W   \\ \hline
		$P_{\rm com}^{\rm max}$& 	 40 dBm   	     	&	 $P_0$&	 79.86 W   \\ \hline
		$\theta_d$ and $\theta_{\rm 3db}$  &	 45$^o$  and 30$^o$    					&	$P_I$ & 	420.6 W\\ \hline  
		$\tau_p$ and $\mathrm{PRF}$        & 	 1 $\mu$s  and 100 Hz     						&	$W_u$ &	 56.5 N  	 \\ \hline
		$B_r$        &  	 100 MHz     				&	$U_{\mathrm{tip}}$&	 120 m/s \\ \hline 
		$B_c$	&  	 100 MHz 		&	$\rho$ &	 1.225 kg/m$^3 $\\ \hline 
		$f$            & 	2 GHz     						&	$A$ &	0.503 m$^2$\\ \hline 
		$\mathrm{SNR}_{\rm min}$ &	 20 dB  					&	$d_0$&	 0.6 \\\hline 
		$R_{\rm sl}$ &	 1 kbit 							&	 $s$&	 0.05 \\ \hline 
		$\gamma$      &	  20 dB    							&	 $\beta$    &   $10^4$ W$^{-1}$ m$^3 $\\ \Xhline{3\arrayrulewidth}
	\end{tabular}
	\end{adjustbox}
	\label{Tab:System}
\end{table}

\fi

\section{Simulation Results and Discussion } \label{Sec:SimulationResults}
In this section, we present simulation results for the proposed robust \ac{uav} \ac{3d} trajectory and resource allocation algorithm, where we adopted the parameter values provided in Table \ref{Tab:System}, unless specified otherwise. { The minimum \ac{uav} altitude $z_{\rm min}$ was set to 2 m to avoid crashes due to collisions with obstacles at very low altitudes.} For all robust schemes, we adopt the Gaussian model for the \ac{uav} trajectory deviations where in-range position errors follow $ \mathcal{N}( o_x= 1\; \rm{ m} , \sigma = 0.3\; \rm{ m})$ and altitude errors follow $ \mathcal{N}( o_z= -1\;\rm { m}, \sigma = 0.3\;\rm{ m} )$.
\subsection{Benchmark Schemes}
To evaluate the performance of the proposed robust scheme, we compare it with the following
benchmark schemes:
\begin{itemize}
	\item \textbf{Benchmark scheme 1}: This is a non-robust scheme, where we solve problem $\rm (P.1)$ assuming no \ac{uav} trajectory deviations, i.e., $\sigma =0$, $o_x=0$, and $o_z=0$. This benchmark scheme was presented in \cite{14} where $\mathbf{u}^r=\mathbf{u}$.
	
	\item \textbf{Benchmark scheme 2}: This is a robust scheme, where we solve problem $\rm (P.1)$ assuming the communication power $P_{\rm com}$ is fixed across all time slots and optimized
	along with the other optimization variables. The coverage reliability level is $r=0.95$. 
	
	\item \textbf{Benchmark scheme 3}: 
	This is also a robust scheme, with $r=0.95$,  where we solve problem $\rm (P.1)$  assuming the \ac{sar} power $P_{\rm sar}$ is fixed across all time slots but optimized along with the other optimization variables.  
\end{itemize}
{ We note that all benchmark schemes have polynomial time complexity, similar to the proposed sub-optimal scheme. }
\subsection{Convergence and Accuracy of the Proposed Scheme}
\ifonecolumn
\begin{figure*}[] 
	\centering
	\begin{tabular}{ccc}
	\includegraphics[width=0.42\linewidth]{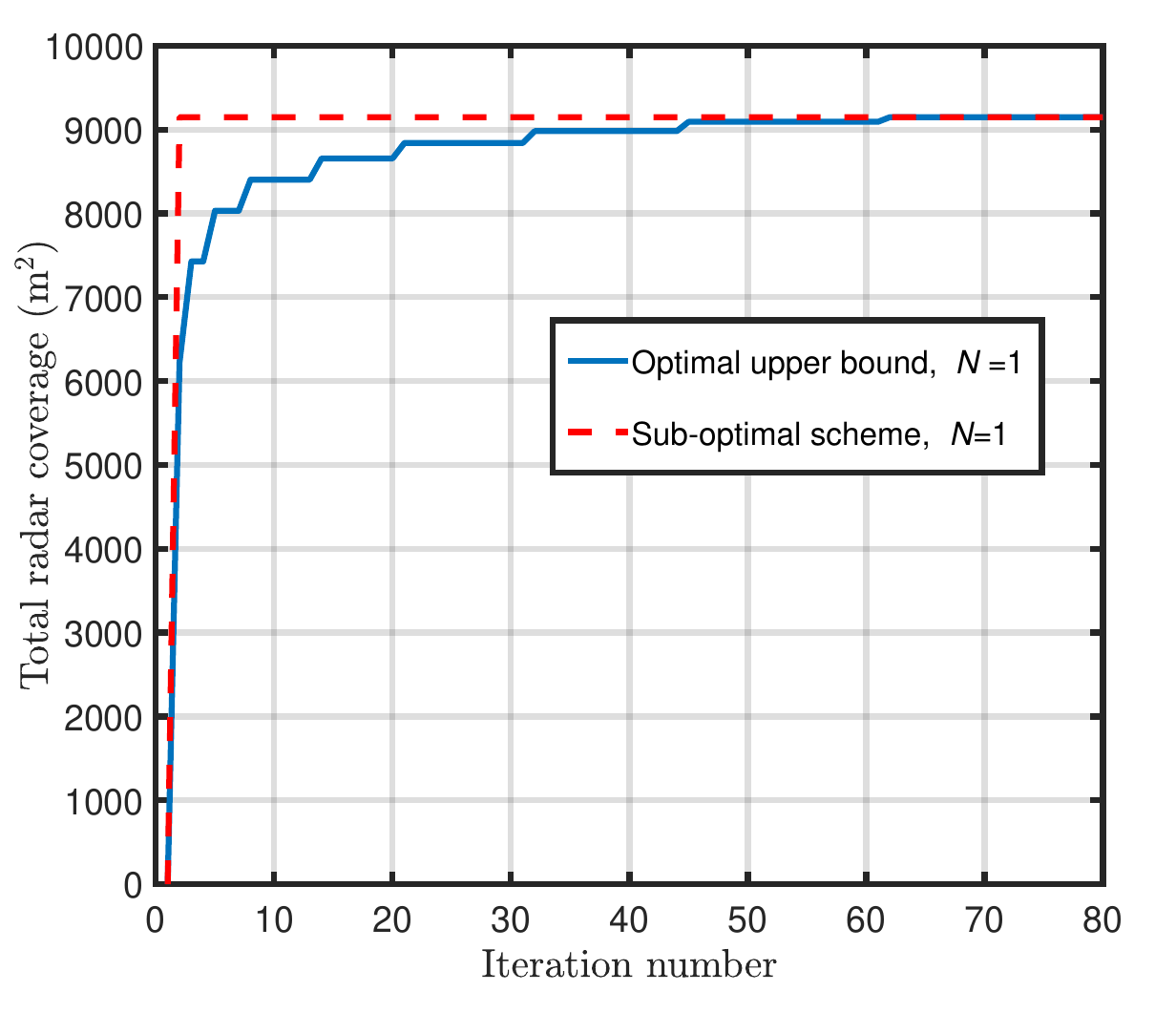} 	 &\includegraphics[width=0.42\linewidth]{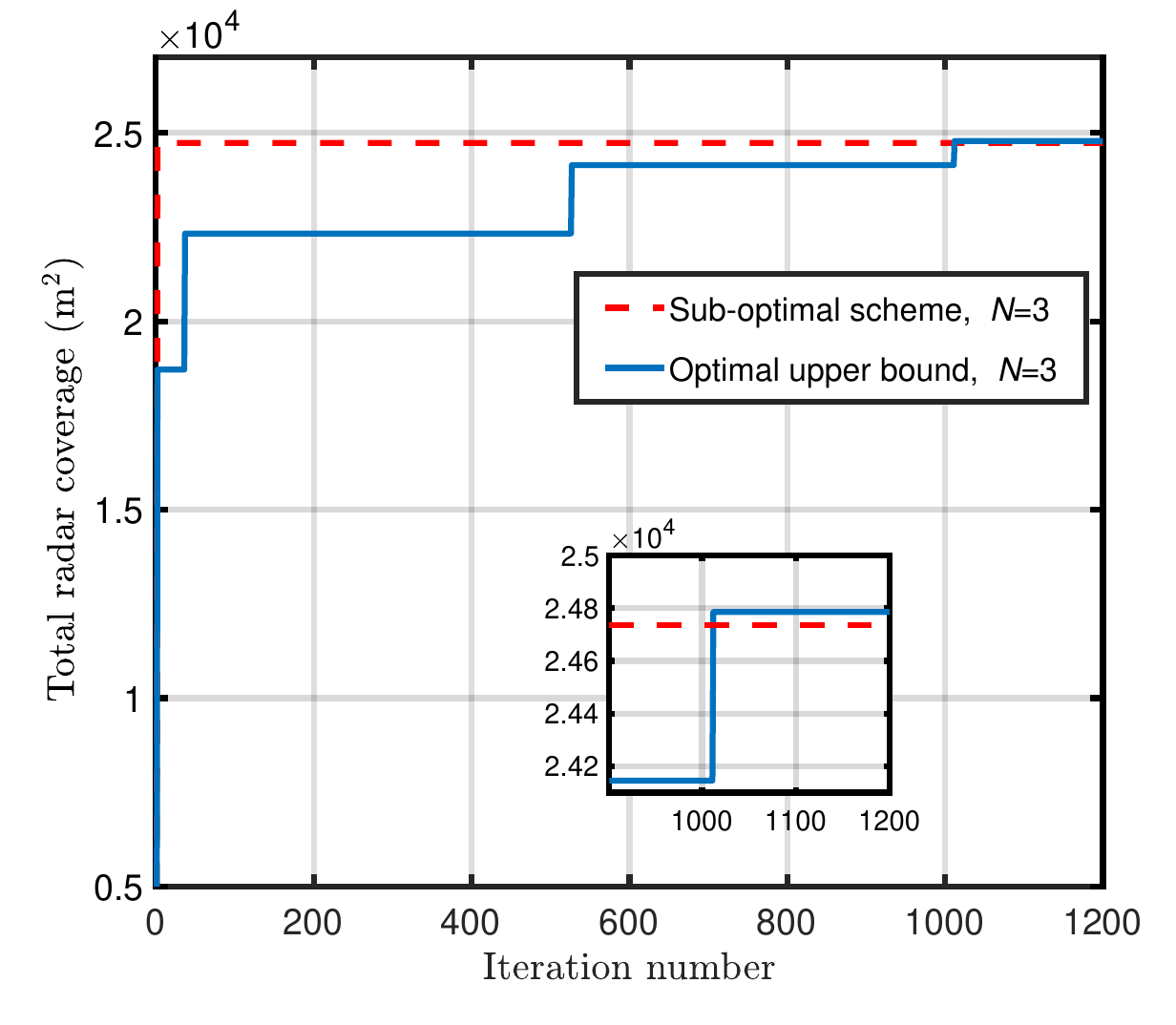} 
		\\
		\textbf{(a)} & \textbf{(b)} 
	\end{tabular}
	\caption{Convergence of algorithm 1 (sub-optimal) and algorithm 2 (upper bound) for $N=1$ (a) and $N=3$ (b). }
	\label{fig:Convergence}
\end{figure*} 
\else
\begin{figure*}[] 
	\centering
	\begin{tabular}{ccc}
		\includegraphics[width=0.42\linewidth]{figures/UpperBoundN1.pdf} &
		\includegraphics[width=0.42\linewidth]{figures/UpperBoundN3.pdf}  \\ 
		\textbf{(a)} & \textbf{(b)} \\
	\end{tabular}
	\caption{Convergence of algorithm 1 (sub-optimal) and algorithm 2 (upper bound) for $N=1$ (a) and $N=3$ (b).}
	\label{fig:Convergence}
\end{figure*} 
\fi

\ifonecolumn
\begin{figure}[]
	\centering
	\includegraphics[width=3.5in]{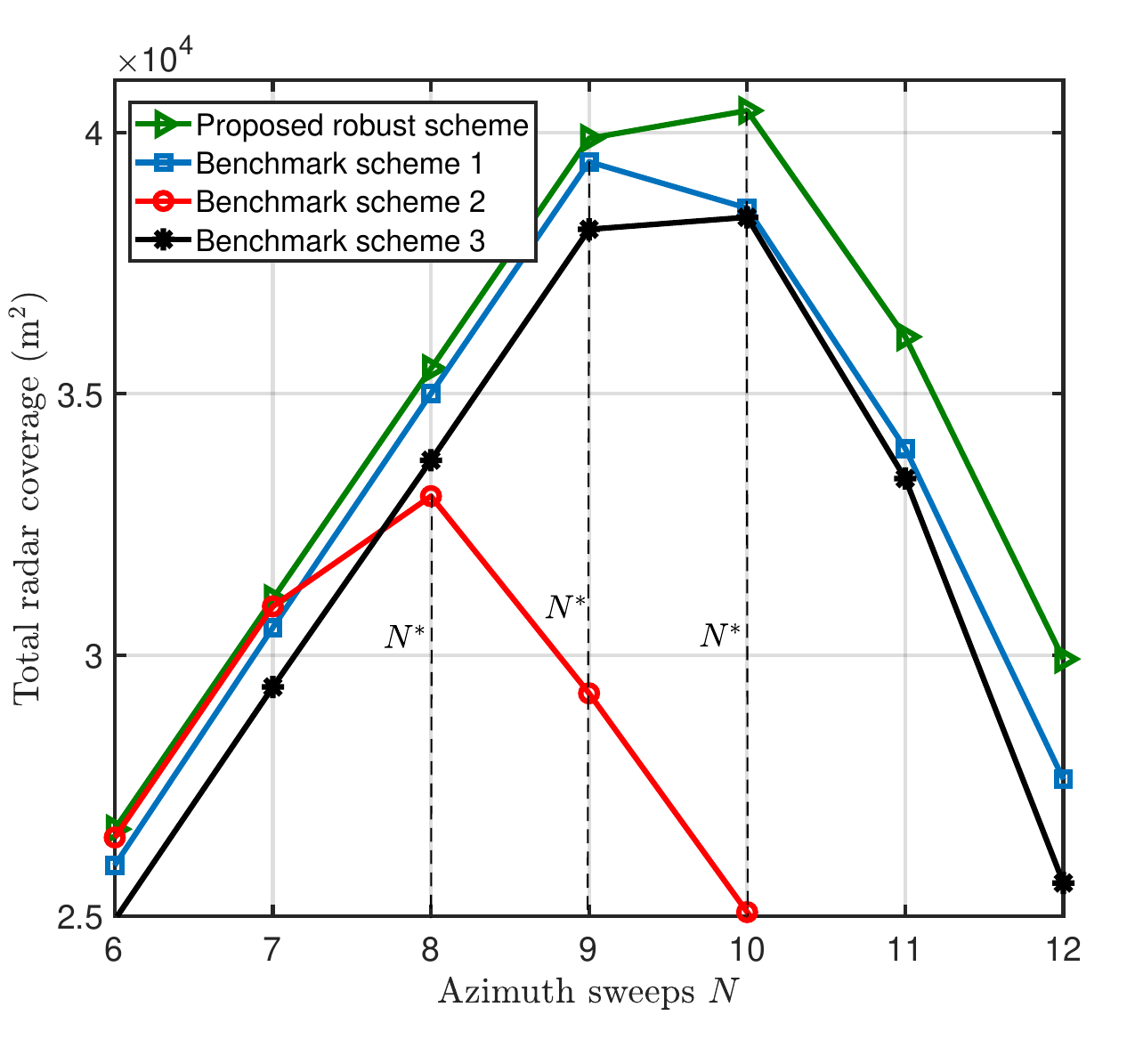}
\caption{	Comparison of the radar coverage of the considered schemes.}
	\label{fig:schemecomparison1} 
\end{figure}
\else
\begin{figure}[]
	\centering
	\includegraphics[width=3in]{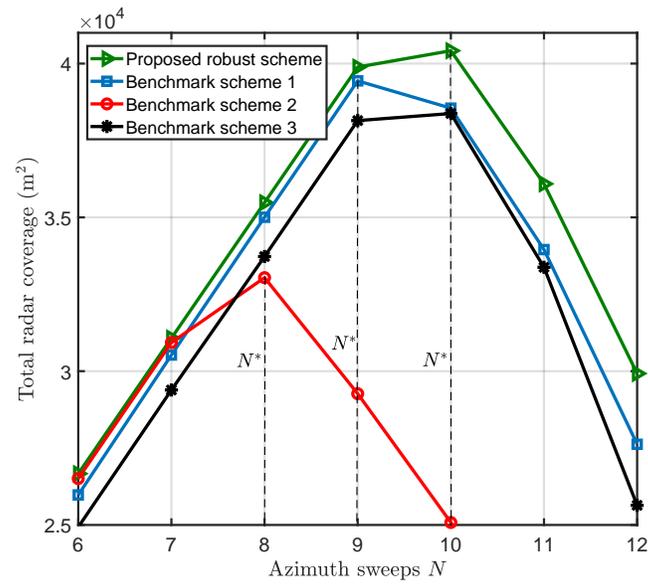}
	\caption{	Comparison of the radar coverage of the considered schemes.}
	\label{fig:schemecomparison1} 
\end{figure}
\fi

\ifonecolumn
\begin{figure}[]
	\centering
	\includegraphics[width=3.5in]{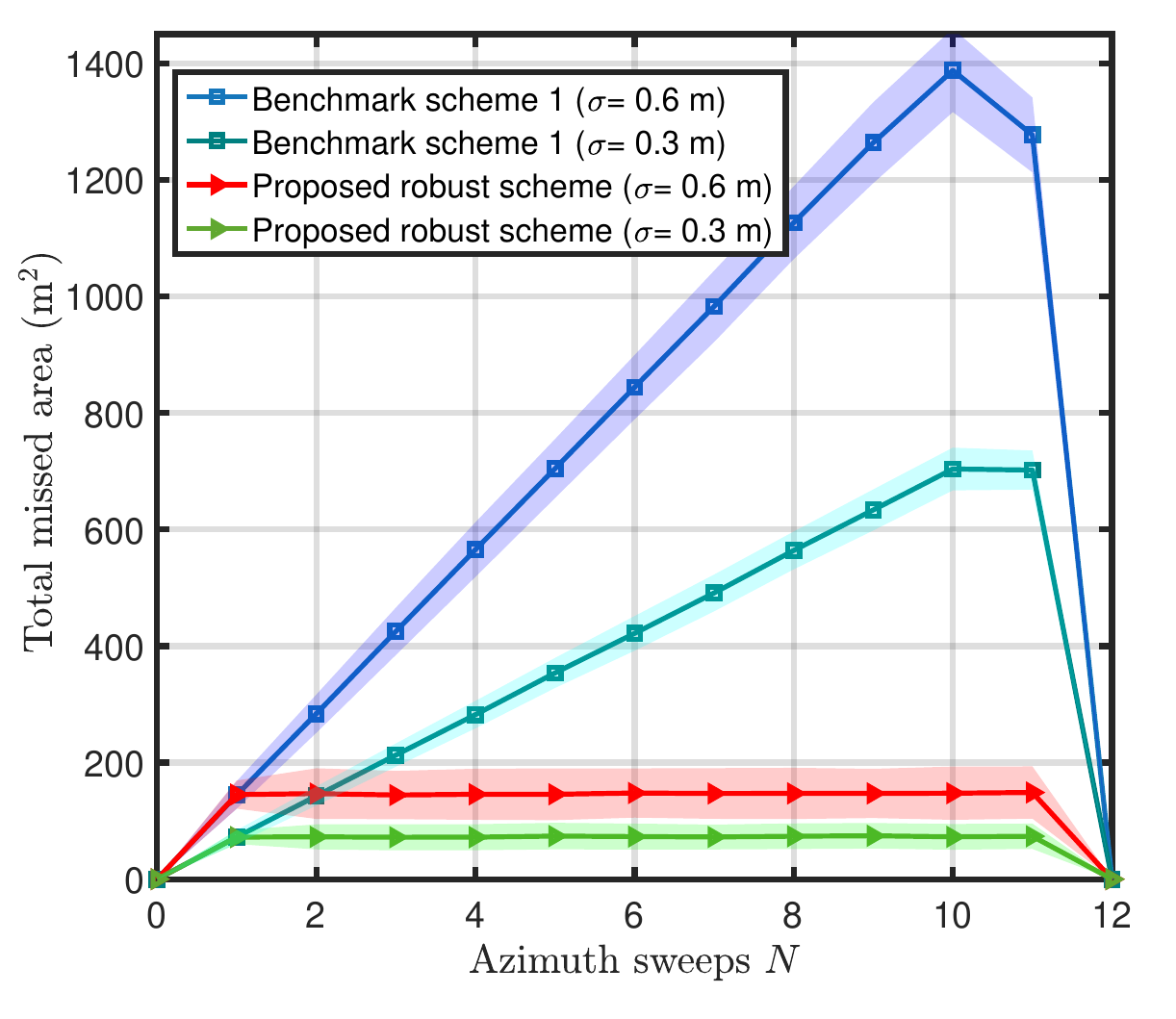}
\caption{Total missed area of the proposed robust scheme compared to the non-robust scheme for different trajectory deviation levels. Shaded area represents one standard deviation while solid line represents the mean value averaged over $10^4$ realizations.}
	\label{fig:missed} 
\end{figure}
\else
\begin{figure}[]
	\centering
	\includegraphics[width=3in]{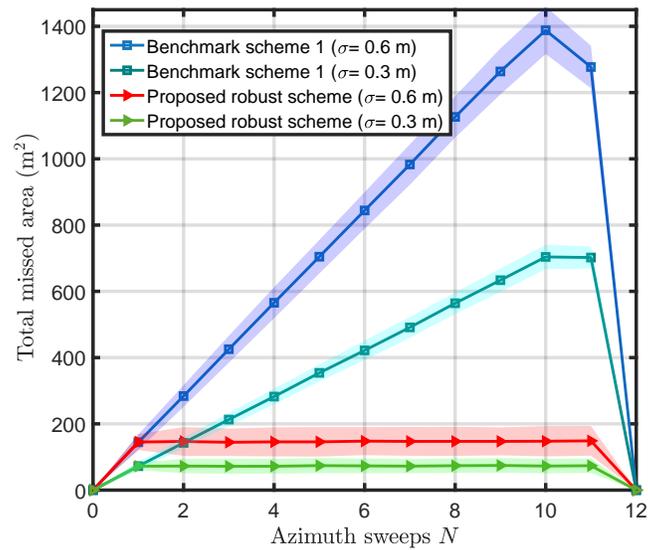}
	\caption{Total missed area of the proposed robust scheme compared to the non-robust scheme for different trajectory deviation levels. Shaded area represents one standard deviation while solid line represents the mean value averaged over $10^4$ realizations.}
	\label{fig:missed} 
\end{figure}
\fi
In Figure \ref{fig:Convergence}, we analyze the convergence rate of the proposed sub-optimal and optimal (upper bound) schemes for different numbers of azimuth sweeps $N$. We also assess the gap between the optimal and sub-optimal solutions. The results are plotted as a function of the iteration number of \textbf{Algorithm} \ref{algorithm1} and \textbf{Algorithm} \ref{algorithm2}, respectively.   Figure \ref{fig:Convergence} shows that the proposed scheme converges fast to the upper bound on the optimal solution given by $C(z^*) \approx 9.154 \times 10^3$ m$^2$ for $N=1$, and $C(z^*) \approx 2.496\times10^4$ m$^2$ for $N=3$. After convergence, the gap between the upper bound on the optimal solution and the sub-optimal solution is negligible which confirms that the proposed sub-optimal scheme achieves close-to-optimal performance with low  complexity.\par 
{ In Figure \ref{fig:schemecomparison1},  we depict the sensing coverage as  a function of the number of azimuth sweeps $N$. The figure illustrates how the optimal $N^*$, leading to the best possible coverage, can be obtained based on a finite exhaustive search.} Furthermore, we compare the proposed scheme with the benchmark schemes.   In fact, the figure illustrates the superior performance of the proposed robust scheme over robust benchmark schemes 2 and 3. The gain is attributed to the optimization of the communication power and radar transmit power, respectively.  In comparison to non-robust benchmark scheme 1, our proposed robust scheme achieves comparable or improved coverage depending on the system parameters. Yet, the advantage of the proposed robust scheme over the non-robust scheme is illustrated in Figure \ref{fig:missed}, which shows the total area  missed by the \ac{uav} (i.e., the total area of all holes in the total coverage) averaged over $10^4$ realizations and for different levels of random trajectory deviations.  For the non-robust design, the missed area increases with the number of total azimuth sweeps, as more area is missed by the \ac{uav}-\ac{sar} due to random trajectory deviations. However, the proposed robust design keeps the missed area small and constant for different numbers of azimuth sweeps to ensure that the required $r=95\%$ coverage reliability is achieved. This confirms that, in terms of total missed area, the robust scheme outperforms the non-robust scheme, as the latter may cause significant coverage holes and thus, is not suitable in practice.
\subsection{Robustness of the Proposed Design Against \ac{uav} Trajectory Deviations} 

\begin{figure*}[] 
	\centering
	\begin{tabular}{ccccc}
c		\includegraphics[width=0.23\linewidth]{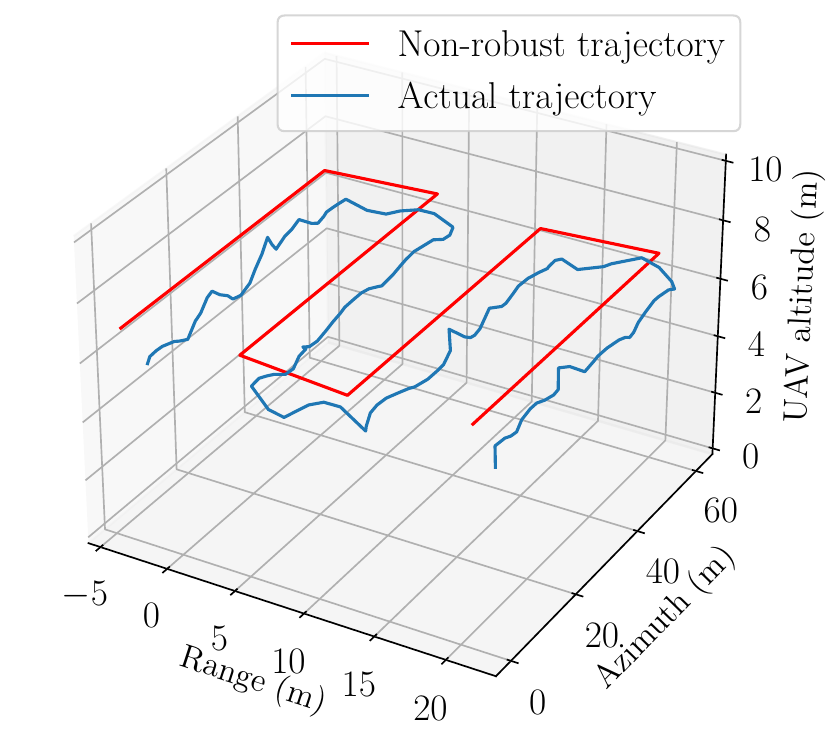}  &
		\includegraphics[width=0.23\linewidth]{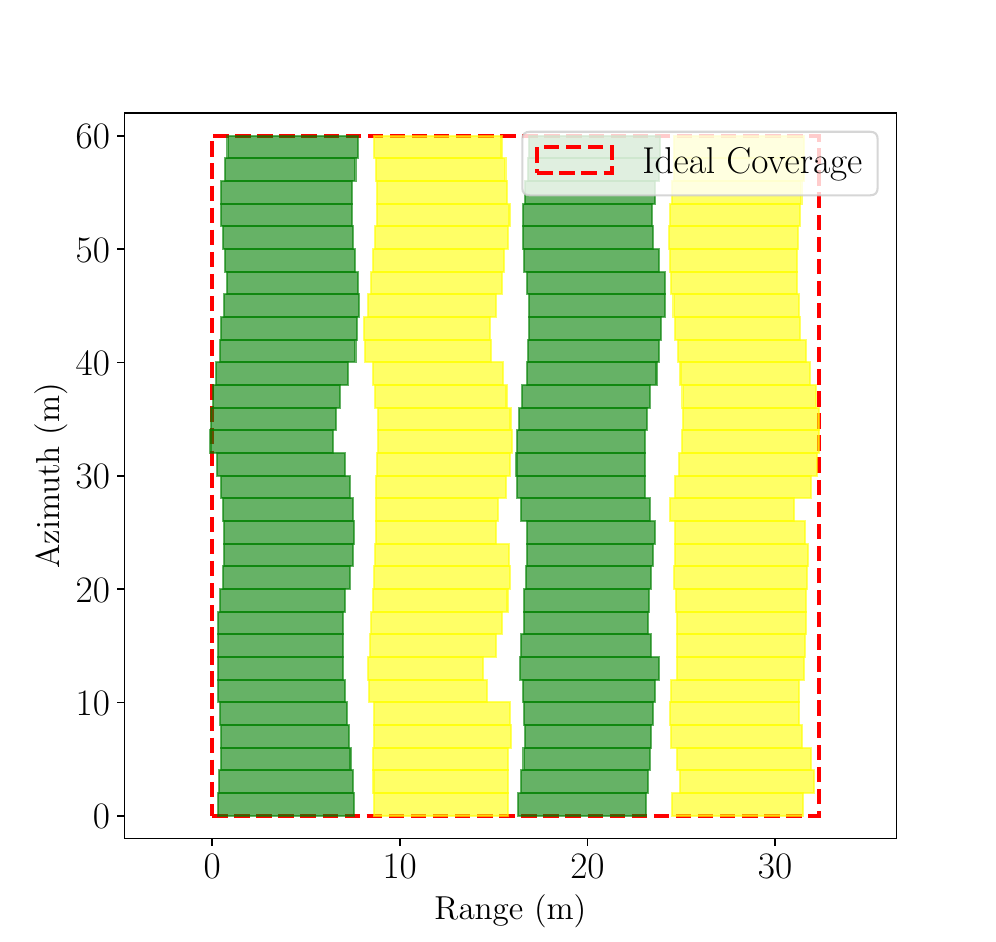}&
		\includegraphics[width=0.23\linewidth]{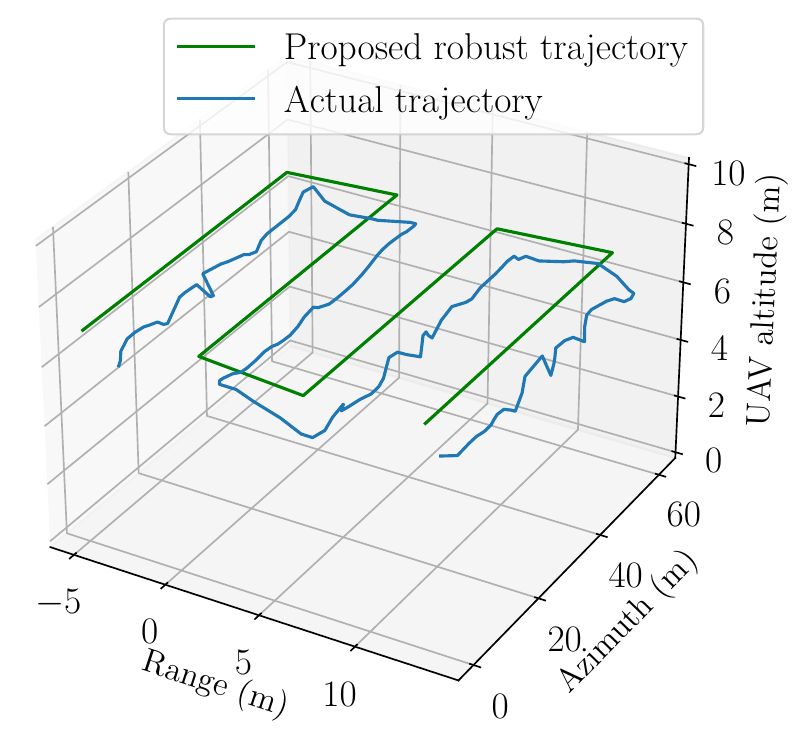}&
		\includegraphics[width=0.23\linewidth]{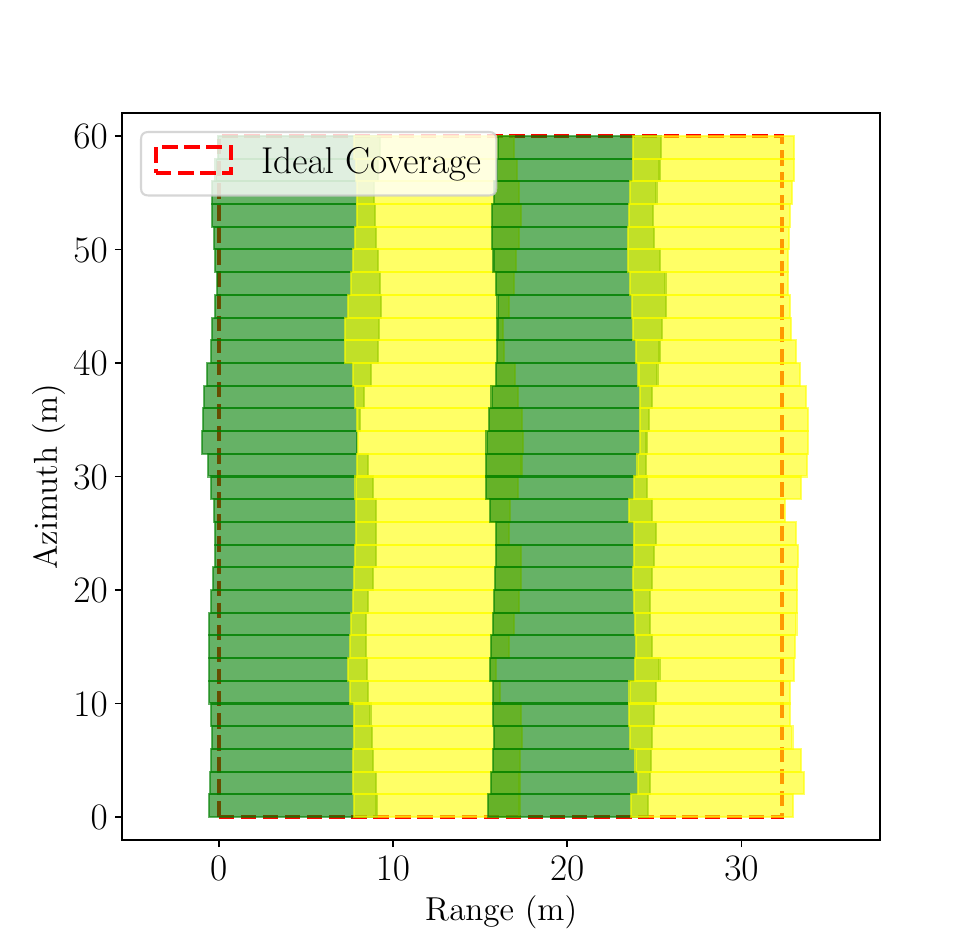} \\ 
		\textbf{(a)} & \textbf{(b)} & \textbf{(c)} & \textbf{(d)} \\
	\end{tabular}
	\caption{ Robust versus non-robust \ac{uav} trajectory design. (a) Non-robust \ac{3d} \ac{uav} trajectory for optimized coverage. (b) Actual \ac{2d} ground \ac{sar} coverage following the trajectory shown in (a). (c) Robust \ac{uav} trajectory for optimized coverage. (d) Actual \ac{2d} ground \ac{sar} coverage following the trajectory shown in (c). }
	\label{fig:trajectory}
\end{figure*} 

In Figure \ref{fig:trajectory}, we investigate the effect of \ac{uav} trajectory deviations on \ac{sar} mapping for low-altitude radar applications (i.e., $z_{\rm max}=10$ m).   In Figure  \ref{fig:trajectory}a, the dashed red line denotes the non-robust \ac{uav} trajectory provided by benchmark scheme 1, whereas the solid blue line  shows the actual \ac{uav} flight path. In Figure  \ref{fig:trajectory}b, we show the resulting \ac{uav}-\ac{sar} ground coverage with green- and yellow-colored rectangles denoting successive azimuth scans. Following an ideal path planning, the expected covered area is measured by the dashed red rectangle. However, the actual covered area is different from the expected result.  Figure \ref{fig:trajectory}b shows how small flight deviations can affect the \ac{uav} coverage, where the total uncovered area for benchmark scheme 1 was  $225$ m$^2$ out of  $C(\mathbf{z})=1.859\times 10^3$ m$^2$  (12\%).   In Figure  \ref{fig:trajectory}c, we present the  proposed robust \ac{uav} trajectory with a solid green line. Figure \ref{fig:trajectory}d highlights the tradeoff between coverage performance and robustness for low-altitude sensing applications. Although the total coverage is comparable, the proposed robust scheme is highly effective in avoiding uncovered areas caused by random \ac{uav} trajectory deviations,  as any location in the \ac{aoi} is guaranteed to be covered with a 95\% probability.

\ifonecolumn
\begin{figure}
	\centering
	\includegraphics[width=3.5in]{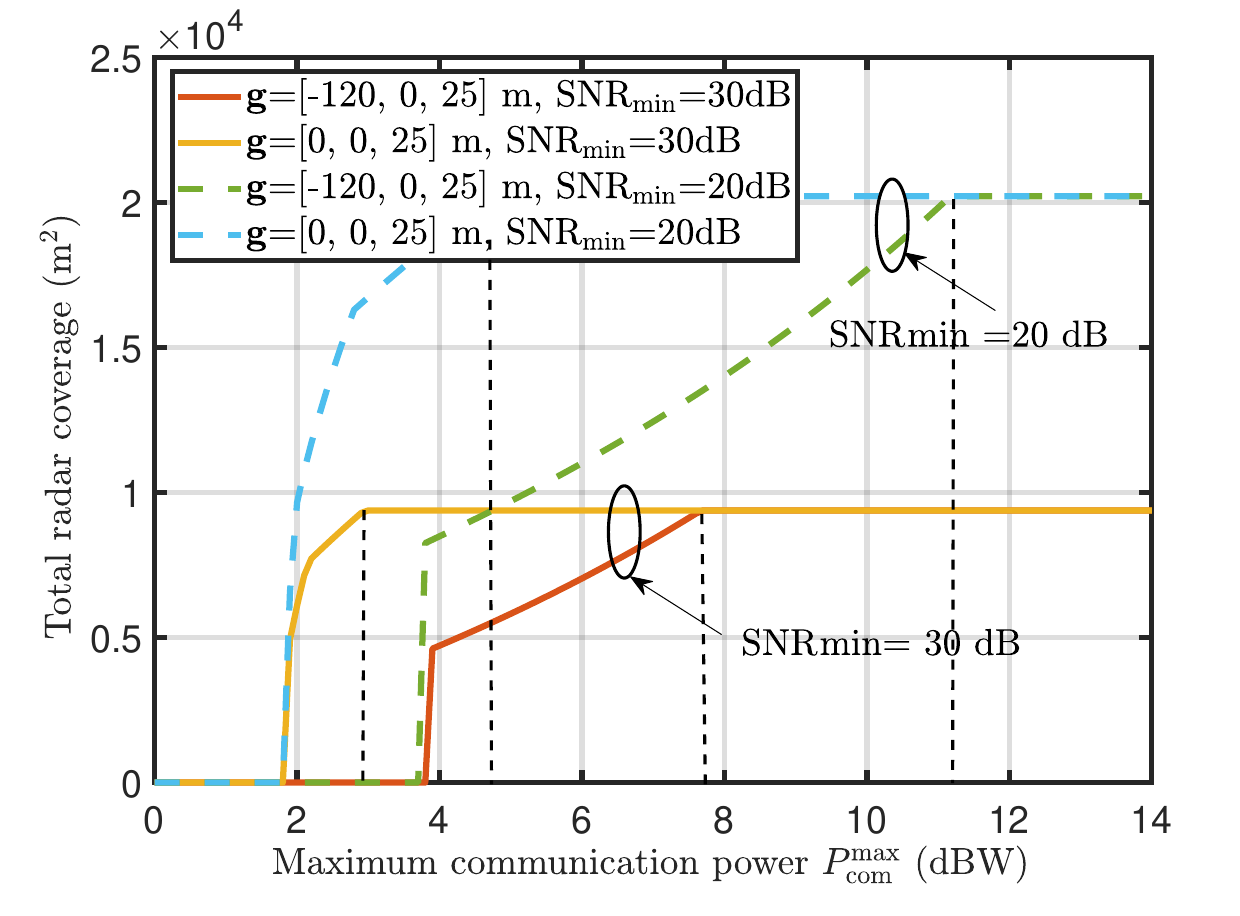}
	\caption{Total radar coverage versus maximum communication power.}
	\label{fig:communication}
\end{figure}
\else
\begin{figure}
	\centering
	\includegraphics[width=3in]{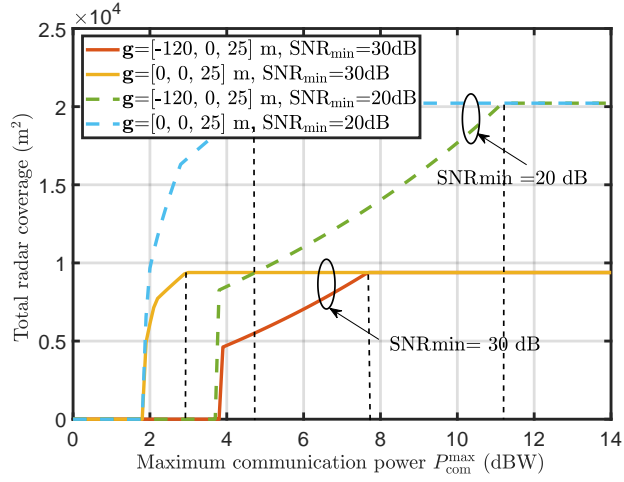}
	\caption{Total radar coverage versus maximum communication power.}
	\label{fig:communication}
\end{figure}
\fi
\subsection{Real-time Communication }
{In Figure \ref{fig:communication}, we plot the total radar ground coverage as a function of the \ac{uav}'s maximum communication power for different placements of the \ac{gs} and different required radar  $\mathrm{SNR}$ values.  This figure highlights the impact of communication on the sensing performance. In particular,  the communication transmit power affects the \ac{uav} trajectory (see (\ref{eq:data_rate})), and therefore also the sensing performance, as the trajectory  of the drone is linked to the ground coverage and the sensing \ac{snr} through (\ref{eq:covergeMetric}) and (\ref{equation:SNR}), respectively. Figure  \ref{fig:communication} shows that, when the \ac{gs} is far from the \ac{aoi}, i.e., for $\mathbf{g}=[-120,0,25]$ m, the total coverage is reduced as the minimum required data rate cannot be achieved at distances far from the \ac{gs} due to the increased path loss.} { This can be compensated by increasing the maximum \ac{uav} transmit power. In fact, for sufficiently high maximum \ac{uav} transmit powers,  the radar \ac{snr} constraint becomes the performance bottleneck causing the total coverage to saturate.} For instance, as shown in Figure \ref{fig:communication}, a maximum \ac{uav} communication power of $P_{\rm com}^{\rm max}=7.78 $ dBW results in the best possible radar coverage when $\rm SNR_{\rm min}= 30$ dB and $\mathbf{g}=[-120,0,25] $ m.  In a nutshell, a reliable data backhaul link is crucial for enabling real-time on-ground processing of \ac{sar} data. 
\subsection{ Impact of UAV Battery Capacity on Sensing Performance}
\ifonecolumn
\begin{figure}
	\centering
	\includegraphics[width=3.5in]{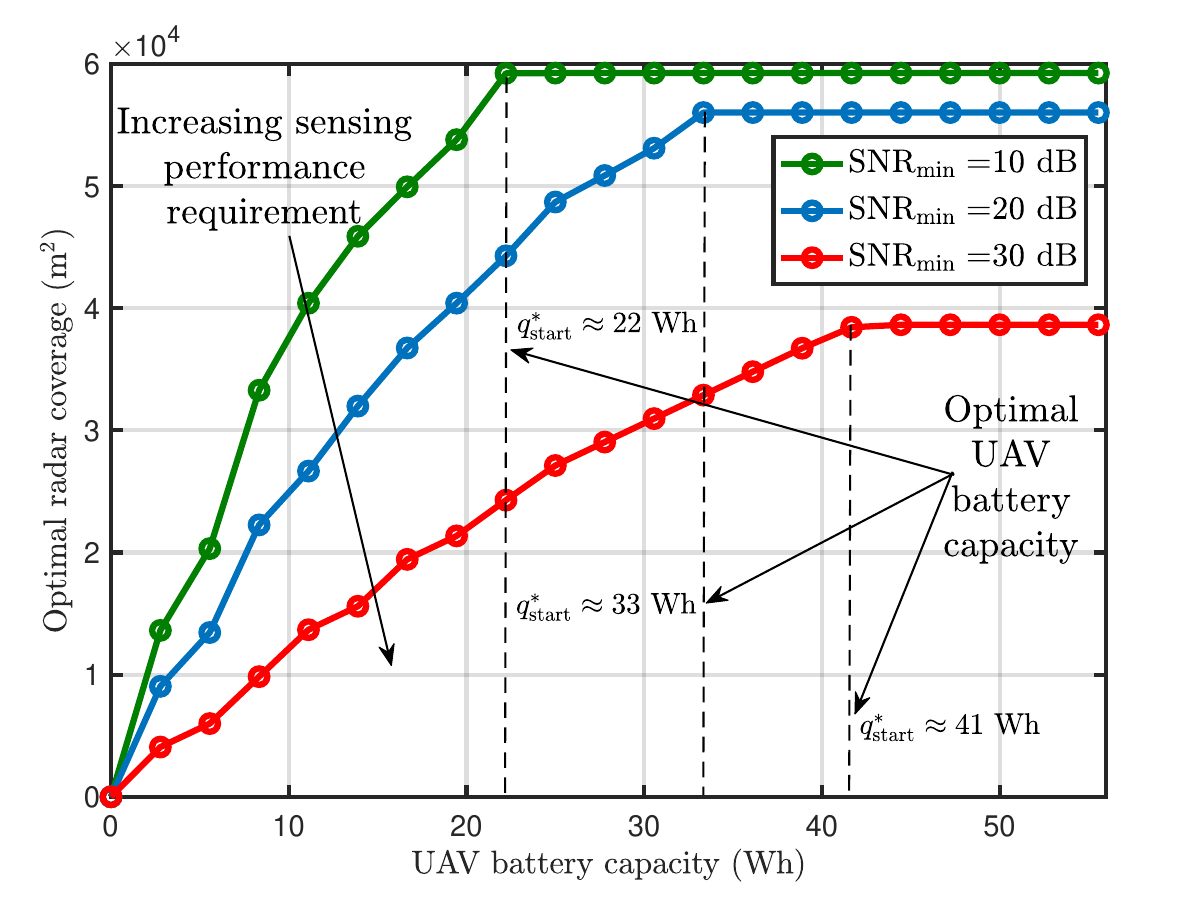}
	\caption{ Total radar coverage versus UAV battery capacity for different sensing requirements.}
	\label{fig:uav_battery}
\end{figure}
\else
\begin{figure}
	\centering
	\includegraphics[width=3in]{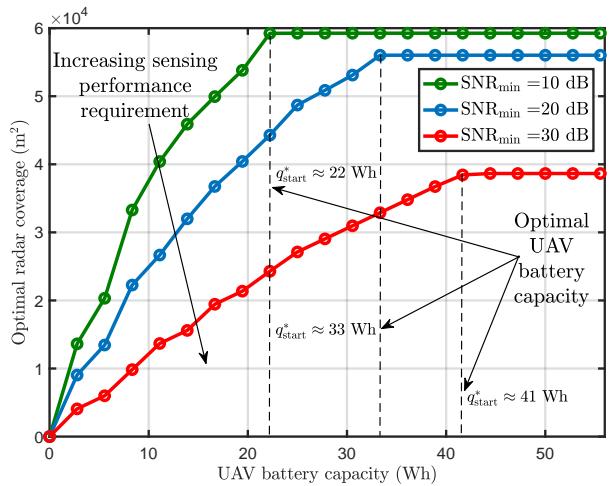}
	\caption{ Total radar coverage versus UAV battery capacity for different sensing requirements.}
	\label{fig:uav_battery}
\end{figure}
\fi
{ In Figure \ref{fig:uav_battery}, we depict the  \ac{sar} coverage as a function of the \ac{uav} battery capacity for different required sensing \acp{snr}, $\mathrm{SNR}_{\mathrm{min}}$. The figure shows that the optimum \ac{sar} coverage increases with the battery capacity and saturates for a certain value denoted by $q_{\mathrm{start}}^*$, which is the minimum \ac{uav} battery capacity needed to achieve maximum coverage for a given required sensing \ac{snr}. The coverage can not be further improved by increasing $q_{\mathrm{start}}$ beyond $q_{\mathrm{start}}^*$ due to communication constraint $\mathrm{C8}$, which eventually becomes the performance bottleneck for large \ac{uav}-\ac{gs} distances. In addition, Figure \ref{fig:uav_battery} reveals that for more stringent sensing requirements, not only does the total coverage decrease, but the energy needed to achieve maximum coverage also increases. This is  because the flying altitude has to be reduced to satisfy more stringent \ac{snr} requirements leading to less coverage. However, when the \ac{uav} is forced to fly at lower altitudes, a larger number of sweeps have to be performed to maximize the coverage, which results in a higher energy consumption.  For example, for $\mathrm{SNR}_{\mathrm{min}}$ = 10 dB, $C(z^*) = 5.9 \times 10^4$ m$^2$ is obtained with a 22 Wh battery capacity, whereas for $\mathrm{SNR}_{\mathrm{min}}$ = 30 dB, only $C(z^*) = 3.8 \times 10^4$ m$^2$ is achieved requiring almost twice the battery capacity, i.e., 41 Wh. This further shows that the limited energy available to small \acp{uav}, which is accounted for by by constraints $\mathrm{C9-C11}$, has a significant impact on sensing performance.} 

\section{Conclusion}\label{Sec:Conclusion}
In this paper, we investigated robust joint \ac{3d} trajectory and resource allocation design for \ac{uav}-\ac{sar} systems, where real-time transmission of radar data to the ground was taken into account. Random \ac{uav} trajectory deviations were modeled and their effect on the ground radar coverage was derived based on their statistics. Moreover, the robust \ac{3d} trajectory and resource allocation design was formulated as a mixed-integer non-linear optimization problem for maximization of the \ac{uav} radar coverage. A low-complexity sub-optimal solution based on \ac{sca} was provided. To assess the performance of the proposed algorithm, we derived an upper bound for the optimal solution based on monotonic optimization theory. Simulation results confirmed that the proposed robust sub-optimal solution achieves close-to-optimal performance. The superior performance of the proposed scheme compared to three benchmark schemes was established and the robustness of the proposed solution \ac{wrt} \ac{uav} trajectory deviations was confirmed. Finally, we highlighted the impact of \ac{gs} placement and the  power available for communication on \ac{uav} radar performance. { In this paper, we assumed a deterministic \ac{los} link between the \ac{uav} and the \ac{gs}. An interesting topic for future work is the extension to other types of air-to-ground channels, including Ricean fading channels, based on online optimization techniques, see, e.g., \cite{complexity1}. Furthermore, while we used classical optimization methods and focused on single-\ac{uav} \ac{sar} sensing, the extension to multi-\ac{uav} \ac{sar} sensing \cite{conf2} and the exploration of  learning-based methods for the trajectory design and resource allocation are promising research directions.}
\appendices

\section{Proof of Proposition \ref{proposition:Near-Far-Range}}
\label{app:P.1}
Assuming Gaussian \ac{uav} position deviations, i.e., $\Delta_x(n)\sim\mathcal{N}( o_x,\sigma)$ and $\Delta_z(n)\sim\mathcal{N}(o_z,\sigma)$, {and} using (\ref{eq:Delta_N}) and (\ref{eq:Delta_F}),  we can show that $\Delta_{\mathbf{p}_N}(n)\sim \mathcal{N}\big(o_x+c_1 o_z,\sigma\sqrt{1+c_1^2}\big)$ and $\Delta_{\mathbf{p}_F}(n)\sim \mathcal{N}\big(o_x+c_2 o_z,\sigma\sqrt{1+c_2^2}\big), \forall n \in \mathbb{N}_{NM}$.  In time slot $n$, let set $\mathcal{I}_N=\{ s \in \mathbb{R} \hspace{2mm}\big | \hspace{2mm}\mathbb{P}\left(\Delta_{\mathbf{p}_N}(n)+ s\leq 0 \right) \geq r   \}$, where $r$ is the coverage reliability. Based on the definition of the near-range and far-range compensation provided in (\ref{eq:Near-rangeDefinition}) and (\ref{eq:Far-rangeDefinition}), respectively, we can write:
\begin{align}
	F_{\Delta_{\mathbf{p}_N}(n)}(-s) &\geq r,\\
	1 -F_{-\Delta_{\mathbf{p}_N}(n)}(s) &\geq r,\\
	\frac{1}{2}\left(1 -\mathrm{erf}\left( \frac{s+ o_x+c_1 o_z}{\sigma \sqrt{2(1+c_1^2})}\right)\right)&\geq r,\\
	\mathrm{erf}\left( \frac{s+ o_x+c_1 o_z}{\sigma \sqrt{2(1+c_1^2})}\right) &\leq 1-2r.
\end{align}
Since the inverse of the error function, denoted by $\rm erf^{-1}(\cdot)$, is an increasing function, we have: 
\begin{gather}
	s\leq  h_{\mathbf{p}_N}(r),
\end{gather}
where $h_{\mathbf{p}_N}(r)=\mathrm{erf}^{-1}( 1-2r)\sigma \sqrt{2(1+c_1^2})-o_x-c_1 o_z$.
Then, the required near-range compensation $\delta_{\mathbf{p}_N}^{r}(n)=\argmin(|s|)$ is given by:
\begin{gather}
	\delta_{\mathbf{p}_N}^{r}(n)=\begin{cases}
		0  & \mathrm{if } \hspace{2mm}h_{\mathbf{p}_N}(r)\geq 0,  \\
		h_{\mathbf{p}_N}(r) & \mathrm{if } \hspace{2mm}h_{\mathbf{p}_N}(r)\leq 0,
	\end{cases}
\end{gather}
which can be equivalently written as follows:
\begin{gather}
	\delta_{\mathbf{p}_N}^{r}(n)=\min\big(0,h_{\mathbf{p}_N}(r)\big)=- [-h_{\mathbf{p}_N}(r)]^+.
\end{gather}
This gives the final result in (\ref{eq:deltaN}). The far-range compensation, denoted by $\delta_{\mathbf{p}_N}^{r}(n)$, is derived using the same methodology.

\section{Proof of Proposition \ref{prop:feasible}}
\label{ap:proposition3}
	To show that the set of feasible  azimuth scans is finite, we prove that $N$ is upper bounded. From constraints $\rm C9$ and  $\rm C11$, we obtain the following inequality: 
	\begin{equation}
		\delta_t P_{\mathrm{prop}} \leq q(n) -q(n+1){, \forall n \in \mathbb{N}_{NM-1}.} \label{eq:lowerboundLHS}
	\end{equation}
	The summation of both sides of (\ref{eq:lowerboundLHS}) yields:
	\begin{equation}
		\sum_{n=1}^{NM-1}\delta_t P_{\mathrm{prop}} \leq q(1)-q(NM) \leq q_{\rm start}.\label{eq:telescop}
	\end{equation}
	The result in (\ref{eq:telescop}) can be rigorously proven using induction applied to the telescoping series $q(n)-q(n+1)$ and constraint $\rm C10$.
	Since the drone flies at a constant velocity $v$ and $P_{\rm prop}$ is constant, we obtain an upper bound on $N$ as follows: 
	\begin{equation}
		N \leq \frac{1}{M} \left(\frac{q_{\mathrm{start}}}{\delta_t P_{\mathrm{prop}}}+1\right). \label{eq:upperboundN}
	\end{equation}
	Problem $\rm (P.1)$ is infeasible for $N$ exceeding the upper bound provided in (\ref{eq:upperboundN}) as this would violate constraint $\rm C11$.

 \section{Proof of Theorem \ref{theo:P.2}} 
	\label{app:P.2}
	In this proof, we show that problem $\rm (\overline{P.2})$ is an upper bound to problem $\rm (P.2)$.   To this end, we introduce an intermediate problem, denoted by $\rm (P.2')$, that provides an upper bound for problem $\rm (P.2)$, then we prove that problems $\rm (P.2')$  and  $\rm (\overline{P.2})$ are equivalent. This means that $\rm (\overline{P.2})$ is an upper bound to the original problem  $\rm (P.2)$. We start by rewriting constraint $\rm C8$ as follows: 
	\begin{align}  \label{eq:theoC8}
		{\rm C8}: g_1(n) \leq -g_2(n) (y(n)-g_y)^2, \forall n \in \mathbb{N}_{NM},
	\end{align}
	where functions $g_1$ and $g_2$ are respectively given by: 
	\ifonecolumn
		\begin{align} 
		g_1(n )&= g_2(n) \Big( (x^{r}(n)-g_x)^2 + (z^{r}(n)-g_z)^2\Big) - P_{\mathrm{com}}(n)\gamma,\\
		g_2(n )&=  A2^{\alpha z^{r}(n)} -1, \forall n \in \mathbb{N}_{NM}.
	\end{align}
	\else
	\begin{align} 
		&\scalemath{0.9}{g_1(n )= g_2(n) \Big( (x^{r}(n)-g_x)^2 + (z^{r}(n)-g_z)^2\Big) - P_{\mathrm{com}}(n)\gamma,}\\
	&	g_2(n )=  A2^{\alpha z^{r}(n)} -1, \forall n \in \mathbb{N}_{NM}.
	\end{align}
	\fi
	As proposed in (\ref{eq:lowerboundy}), we lower bound the term $(y(n)-g_y)^2$ in (\ref{eq:theoC8}) by taking its minimum value for each azimuth scan. Since $g_2(n)\geq 0,  \forall n \in \mathbb{N}_{NM}$, by bounding $(y(n)-g_y)^2$, the right-hand side value in (\ref{eq:theoC8}) increases.   This is equivalent to relaxing constraint $\rm C8$, which we denote now by $\rm \overline{C8}$: 
	\ifonecolumn
		\begin{gather}  
		{\rm \overline{C8}}: g_1(n) \leq -g_2(n) \min\limits_{1+(n-1)M\leq k \leq nM}(y(k)-g_y)^2, \forall n \in \mathbb{N}_{N}.
	\end{gather}
	\else
	\begin{gather}  
		{\rm \overline{C8}}: g_1(n) \leq -g_2(n) \min\limits_{1+(n-1)M\leq k \leq nM}(y(k)-g_y)^2,\nonumber \\ \forall n \in \mathbb{N}_{N}.
	\end{gather}
	\fi
	Let us denote problem $\rm (P.2)$ after relaxing constraint $\rm C8$ by $\rm (P.2')$. It is easy to show that $\rm (P.2')$ provides an upper bound to problem $\rm (P.2)$. In fact, any feasible solution to $\rm (P.2)$ is also feasible for $\rm (P.2')$ as the solution necessarily satisfies ${\rm \overline{C8}}$. Now, let us denote the optimal altitudes to problem $\rm (P.2)$ and $\rm (P.2')$ by $\mathbf{z} \in \mathbb{R}^{NM \times 1}$ and $\mathbf{z}_2 \in \mathbb{R}^{NM \times 1}$,   respectively, and their robust trajectories by  $\mathbf{u}^r$ and $\mathbf{u}_2^r$,   respectively. Since the objective function is increasing \ac{wrt} altitude vector $\mathbf{z}$, then, relaxing $\rm C8$ results in the following inequality:
	\begin{equation}
		 C(\mathbf{u}^r) \leq C(\mathbf{u}^r_2), \forall r \in [0,1]. \label{eq:majorization1}
	\end{equation} In the remainder of this proof, we show that problems $\rm (P.2')$ and $(\overline{\mathrm{P.2}})$ are equivalent.\par
	We apply a series of transformations to transform $NM$-dimensional problem $\rm (P.2')$ to the $N$-dimensional problem $(\overline{\mathrm{P.2}})$. In fact, based on $\mathrm{\overline{C8}}$, where the \ac{uav}-\ac{gs} distance  varies from scan to scan, we remark that all optimization variables of problem $\rm (P.2')$  are also constant for a given azimuth scan and only change from scan to scan. This allows us to reduce the dimension of the problem as optimization along the azimuth direction has no impact. Therefore, constraints $\rm C3-C5$   are dropped as they define equalities for time slots of the same azimuth scans. The rest of the constraints are now specified for each scan and denoted by  $\overline{\rm C6} -\overline{\rm C11}$. We use $N$-dimensional optimization variables,  where based on ${\rm \overline{C8}}$ and constraints $\rm C1$ and $\rm C2$, optimization vector $\overline{\mathbf{x}}\in \mathbb{R}^{N \times1}$ can be dropped from the optimization problem $(\mathrm{P.2'})$ as it is deterministic and can be calculated from vector $\overline{\mathbf{z}}\in \mathbb{R}^{N\times1}$ as follows:
	\begin{align} \label{eq:lemma2}
		\overline{x}(n)=(c_2-c_1)\sum_{k=1}^{n-1}  \overline{z}(k)-c_1 \overline{z}(n),  n \in \mathbb{N}_{N}.
	\end{align}
 Finally, we obtain the following relation between the objective functions of problems $(\mathrm{P.2'})$ and  $(\overline{\mathrm{P.2}})$:
	\ifonecolumn
		\begin{align}
C(\mathbf{u}_2^{r})&= \sum_{n=1}^{NM}\Delta_s  \left(z_2(n)+\delta^{r}_z(n)\right) \left(c_2-c_1\right)\\ &\stackrel{\mathrm{C4}}{=}L\sum_{k=1}^{N}  \Big(z_2(1+(1-k)M)+\delta^{r}_z(1+(1-k)M)\Big) \left(c_2-c_1\right)\\ 
&= \frac{L}{\Delta_s}\sum_{n=1}^{N} \Delta_s \overline{z^r}(n) \left(c_2-c_1\right)=\frac{L}{\Delta_s} C(\overline{\mathbf{u}^{r}}). \label{eq:majorization2}
	\end{align}
	\else
	\begin{align}
	&C(\mathbf{u}_2^{r})= \sum_{n=1}^{NM}\Delta_s  \left(z_2(n)+\delta^{r}_z(n)\right) \left(c_2-c_1\right),\\ &\stackrel{\mathrm{C4}}{=}L\sum_{k=1}^{N}  z_2^r(1+(1-k)M) \left(c_2-c_1\right),\\ 
&= \frac{L}{\Delta_s}\sum_{n=1}^{N} \Delta_s \overline{z^r}(n) \left(c_2-c_1\right)=\frac{L}{\Delta_s} C(\overline{\mathbf{u}^{r}}). \label{eq:majorization2}
	\end{align}
	\fi
	Based on (\ref{eq:majorization1}) and (\ref{eq:majorization2}), problem $\rm (P.2')$ is equivalently transformed to problem  $\rm (\overline{P.2})$, with $C(\mathbf{u}^{r}) \leq	C(\mathbf{u}_2^{r})=\frac{L}{\Delta_s} C(\overline{\mathbf{u}^{r}})$, which concludes the proof. 

\bibliographystyle{IEEEtran}
\bibliography{biblio}


\end{document}